\documentclass[11pt,a4paper,leqno]{article}
\pagestyle{plain}

\usepackage{graphicx}
\usepackage{makeidx}
\usepackage{pst-node}
\usepackage{amsmath,amsthm}
\usepackage{enumerate}
\usepackage{verbatim}
\usepackage{natbib}

\newtheorem{theorem}{Theorem}
\newtheorem{lemma}[theorem]{Lemma} 
\newtheorem{proposition}[theorem]{Proposition}

\theoremstyle{definition}
\newtheorem{remark}{Remark}

\title{Complexity of inheritance of $\mathcal{F}$-convexity
for restricted games induced by minimum partitions.}


\author{
A. Skoda\thanks{Corresponding
    author. Universit\'e de Paris I, Centre d'Economie de la Sorbonne, 106-112
    Bd de l'H\^opital, 75013 Paris, France. E-mail: {\tt
      alexandre.skoda@univ-paris1.fr}}
}

\begin{document}
\def\R{{\bf \mbox{I\hspace{-.17em}R}}}
\def\N{\mbox{I\hspace{-.15em}N}}

\maketitle

\begin{abstract}
Let $G=(N,E,w)$ be a weighted communication graph
(with weight function $w$ on $E$).
For every subset $A \subseteq N$,
we delete in the subset $E(A)$
of edges with ends in $A$,
all edges of minimum weight in $E(A)$.
Then the connected components of the corresponding induced subgraph
constitute a partition of $A$
that we call $\mathcal{P}_{\min}(A)$.
For every game $(N,v)$,
we define the $\mathcal{P}_{\min}$-restricted game $(N, \overline{v})$
by $\overline{v}(A) = \sum_{F \in \mathcal{P}_{\min}(A)} v(F)$
for all $A \subseteq N$.
We prove that we can decide in polynomial time
if there is inheritance of $\mathcal{F}$-convexity
from $(N,v)$ to the $\mathcal{P}_{\min}$-restricted game $(N, \overline{v})$
where $\mathcal{F}$-convexity is obtained
by restricting convexity to connected subsets.
\end{abstract}

\textbf{Keywords}:
communication networks, cooperative game, convex game,
restricted game, partitions, supermodularity,
graph, complexity.

\textbf{AMS Classification}: 91A12, 91A43, 90C27, 05C75, 68Q25.

\noindent

\section{Introduction}
We consider, 
on a given finite set $N$ with $|N| = n$, a weighted communication structure,
\emph{i.e.},
a weighted gragh $G=(N, E, w)$
where $w$ is a weight function defined on the set $E$ of edges of $G$.
For a given subset $A$ of $N$,
we denote by $E(A)$ the set of edges in $E$ with both endvertices in $A$,
and by $\Sigma(A)$ the subset of edges of minimum weight in $E(A)$.
Let $G_{A}$ be the graph induced by $A$,
\emph{i.e.},
$G_{A} = (A, E(A))$.
Then $\tilde{G}_{A} = (A, E(A) \setminus \Sigma(A))$
is the graph obtained by deleting the minimum weight edges in $G_{A}$.
In \citep{Skoda2016}
we introduced 
the correspondence $\mathcal{P}_{\min}$ on $N$ which associates to every subset $A \subseteq N$, 
the partition $\mathcal{P}_{\min}(A)$ of $A$ into the connected components of $\tilde{G}_A$.
Then for every game $(N,v)$
we defined the restricted game $(N,\overline{v})$
associated with $\mathcal{P}_{\min}$ by:
\begin{equation}
\overline {v} (A) = \sum_{F \in \mathcal{P}_{\min}(A)} v(F),  \; \; \textrm{for all}\; A \subseteq N.
\end{equation}
We more simply refer to this game as the $\mathcal{P}_{\min}$-restricted game.
$v$ is the characteristic function of the game,
$v: 2^{N} \rightarrow \R$, $A \mapsto v(A)$
and satisfies $v(\emptyset) = 0$.
Compared to the initial game $(N,v)$,
the restricted game $(N, \overline {v})$ takes into account
the combinatorial structure of the graph
and the ability of players to cooperate in a given coalition
and therefore differents aspects of cooperation restrictions.
In particular,
assuming that the edge weights reflect the degrees of relationships between players,
$\mathcal{P}_{\min}(A)$ gives a partition of a coalition $A$
into subgroups where players are in privileged relationships
(with respect to the minimum relationship degree in $A$).
Many other correspondences have been considered
to define restricted games
(see, e.g.,
\citet{Myerson77,AlgabaBilbaoLopez2001,Bilbao2000,Bilbao2003,Faigle89,GrabischSkoda2012,Grabisch2013}).
For a given correspondence
a classical problem is to study the inheritance of basic properties
as superadditivity and convexity
from the underlying game to the restricted game.
Inheritance of convexity is of particular interest
as it implies that good properties are inherited,
for instance the non-emptyness of the core,
and that the Shapley value is in the core.
For the $\mathcal{P}_{\min}$ correspondence
we proved in \citep{GrabischSkoda2012} that
we always have inheritance of superadditivity
from $(N,v)$ to $(N, \overline{v})$.
Let us observe
that inheritance of convexity is a strong property.
Hence it would be useful to consider weaker properties than convexity.
Following alternative definitions of convexity
in combinatorial optimization
and game theory
when restricted families of subsets are considered
(not necessarily closed under union and intersection),
see, \emph{e.g.},
\citep{EdmondsGiles77,Faigle89,Fujishige2005}
we introduced in \citep{GrabischSkoda2012}
the $\mathcal{F}$-convexity by restricting convexity to the family $\mathcal{F}$
of connected subsets of $G$.
In \citep{Skoda2016},
we have characterized inheritance of $\mathcal{F}$-convexity for $\mathcal{P}_{\min}$
by five necessary and sufficient conditions on the edge-weights.
Of course the study of inheritance of $\mathcal{F}$-convexity
is also a first key step to characterize
inheritance of convexity in the general case.
We have also highlighted in \citep{Skoda2016}
a relation between Myerson's restricted game
introduced in \citep{Myerson77} and the $\mathcal{P}_{\min}$-restricted game.
Myerson's restricted game $(N, v^{M})$ is defined by
$v^{M} (A) = \sum_{F \in \mathcal{P}_{M}(A)} v(F)$
for all $A \subseteq N$,
where $\mathcal{P}_M (A)$ is the set of connected components of $G_{A}$.
We proved that inheritance of convexity for the Myerson's game is equivalent
to inheritance of $\mathcal{F}$-convexity
for the $\mathcal{P}_{\min}$-restricted game
associated with a weighted graph with only two different weights.
In the present paper
we prove that inheritance
of $\mathcal{F}$-convexity
can be checked in polynomial time for $\mathcal{P}_{\min}$.
Of course,
directly testing inheritance of convexity
(resp. $\mathcal{F}$-convexity)
for the $\mathcal{P}_{\min}$-correspondence
is highly non polynomial.
We have to consider all convex games $(N,v)$
and for each of them, we have to compute the $\mathcal{P}_{\min}$-restricted game $(N,\overline{v})$
and to verify the convexity (resp. $\mathcal{F}$-convexity) of $(N,\overline{v})$,
\emph{i.e.},
for all subsets $A, B\in 2^N$
(resp. for all subsets $A$, $B$ such that $A$, $B$, and $A\cap B$ are connected),
to verify that $v(A\cup B)+v(A\cap B)\ge v(A)+v(B)$.
The characterization given in \citep{Skoda2016}
implies that we can verify inheritance of $\mathcal{F}$-convexity
by checking 
that some specific
distributions of edge-weights are satisfied on all stars, paths, cycles,
pans\footnote{A pan graph is a graph obtained by joining a cycle to a vertex by an edge.},
and adjacent cycles of the graph $G$.
Of course directly checking all these conditions would also lead to a non-polynomial algorithm
as the number of paths and cycles can be exponential.
We prove that we only have to take into account
a polynomial number of specific paths and cycles
and therefore
we can decide in $O(n^6)$ time
whether there is
inheritance of $\mathcal{F}$-convexity.

Moreover,
to establish this result
we have to more deeply analyze the relations between the
star, path, cycle, pan and adjacent cycles conditions
characterizing inheritance of the $\mathcal{F}$-convexity.
It is of independent interest
as we prove that
some of them are not completely independent of the others 
and we highlight their relations 
which for the sake of simplicity
were only implicitely contained in \citep{Skoda2016}.
In particular we observe that
the cycle condition in \citep{Skoda2016} splits into
an intermediary cycle condition
which is a consequence of the star and path conditions
and a condition on chords of the cycles.
Also a part of the pan condition is a consequence of star and path conditions.
We prove that
the cycle condition is a consequence of the  star, path and adjacent cycles conditions.
Hence inheritance of $\mathcal{F}$-convexity
can be characterized by only
four of the previous conditions:
the star, path, pan, and adjacent cycles conditions.

Star condition can easily be checked in polynomial time.
Checking the other conditions in polynomial time requires a deeper and careful analysis.
In a first fundamental step,
we prove that
to verify path condition in polynomial time
we only have to consider one minimum weight spanning tree $T$ in $G$
and to verify path condition for the paths
joining the end vertices
of $T$
and to test intermediary cycle condition for the fundamental cycles of $T$.
Then the main step, which is also the most difficult one,
deals with the adjacent cycles condition
which
\textit{a priori} involves any pair of adjacent cycles $(C, C')$.
A first contradiction to the adjacent cycle condition
is the existence of two cycles $C$ anc $C^{'}$
with two common adjacent edges $e_{1}$ and $e_{2}$ of non maximum weight.
We prove that
if there exist two such cycles
then there also exist two specific cycles $\tilde{C}$ and  $\tilde{C}^{'}$
with two common edges of non maximum weight
which are
associated with shortest paths in some particular subgraphs 
depending only on the pair $(e_{1}, e_{2})$.
As a result of their features
such cycles $\tilde{C}$ and  $\tilde{C}^{'}$
can be detected in polynomial time.
A similar result holds for the second possible contradiction
to the adjacent cycle condition
which corresponds to the existence of two  adjacent cycles $C$ and $C^{'}$ 
with only one common non maximum weight edge $e_{1}$.
Hence, looking for $\tilde{C}$ and $\tilde{C}^{'}$
for any pair of adjacent edges (resp. for any edge) in $G$,
we can detect any contradiction to the adjacent cycles condition.
Therefore the adjacent cycles condition can be checked in polynomial time.
Assuming that the star, path and adjacent cycles condition
are satisfied,
the cycle condition is necessarily satisfied.
Finally using the cycle condition
we establish a new condition equivalent to the pan condition
that we can easily verify in polynomial time
using an appropriate shortest path.\\
The article is organized as follows.
In Section~\ref{SectionPreliminaryDefinitions},
we give preliminary definitions and results.
In particular,
we recall the definitions of convexity
and
$\mathcal{F}$-convexity.
In section~\ref{subsubsectionNecessaryCondF-ConvPmin},
we recall necessary and sufficient conditions
on graph edge-weights
to have inheritance of $\mathcal{F}$-convexity
for the correspondence $\mathcal{P}_{\min}$ established in \citep{Skoda2016}.
Then section~\ref{SectionComplexityOfInheritance} is the main part of this article
where
we establish that inheritance of $\mathcal{F}$-convexity for the $\mathcal{P}_{\min}$-correspondence
can be checked in polynomial time.

\section{Preliminary definitions and results}
\label{SectionPreliminaryDefinitions}

Let $N$ be a given set.
We denote by $2^{N}$ the set of all subsets of $N$.
A game $(N,v)$ is
\textit{zero-normalized}\label{definitionZeroNormalizedGame}
if $v(i) = 0$ for all $i \in N$.
A game $(N,v)$ is 
\textit{superadditive}
if, for all $A,B \in 2^{N}$ such that $A \cap B = \emptyset$, $v(A \cup B) \geq v(A) + v(B)$.
For any given subset $\emptyset \not= S \subseteq N$, the unanimity game $(N, u_{S})$ is defined by:
\begin{equation}
u_{S}(A) =
\left \lbrace
\begin{array}{cl}
1 &  \textrm{ if } A \supseteq S,\\
0 & \textrm{ otherwise.}
\end{array}
\right.
\end{equation}
We note that $u_{S}$ is superadditive for all $S \not= \emptyset$.

Let us consider a game $(N,v)$.
For arbitrary subsets $A$ and $B$ of $N$, we define the value:
\[
\Delta v(A,B):=v(A\cup B)+v(A\cap B)-v(A)-v(B).
\]
A game $(N,v)$ is \textit{convex}
if its characteristic function $v$ is supermodular,
\emph{i.e.},
$\Delta v(A,B)\geq 0$
for all $A,B \in 2^{N}$.
We note that $u_{S}$ is supermodular for all $S \not= \emptyset$.
Let $\mathcal{F}$
be a
\emph{weakly union-closed family}
\footnote{
$\mathcal{F}$ 
is weakly union-closed if $A \cup B \in \mathcal{F}$ for all $A$, $B \in \mathcal{F}$
such that $A \cap B \not= \emptyset $ \citep{FaigleGrabisch2010}.
Weakly union-closed families were introduced
and analysed in \cite{Algaba98,AlgabaBilbaoBormLopez2000} and called union stable systems.
}
of subsets of $N$
such that $\emptyset \notin \mathcal{F}$.
A game $v$ on $2^{N}$ is said to be \emph{$\mathcal{F}$-convex}
if $\Delta v(A,B) \geq 0$,
for all $A,B \in \mathcal{F}$
such that $A \cap B \in \mathcal{F}$.
If $\mathcal{F} = 2^{N} \setminus \lbrace \emptyset \rbrace$
then $\mathcal{F}$-convexity corresponds to
convexity.

For a given graph $G = (N, E)$,
we say that a subset
$A \subseteq N$ is connected
if the induced graph $G_{A} = (A,E(A))$
is connected.
In this paper
$\mathcal{F}$ will be the family of connected subsets of $N$.

We recall the following theorems established in \citep{Skoda2016}.

\begin{theorem}
\label{thLGNEPNFaSNuSsFaSNuSin}
Let $G=(N,E,w)$ be an arbitrary weighted graph,
and $\mathcal{P}$ an arbitrary correspondence on $N$.
The following claims are equivalent:
\begin{enumerate}[1)]
\item
For all $\emptyset \not= S \subseteq N$, $\overline{u_{S}}$ is superadditive.
\item
\label{Claim3ThLGNEPNFaSNuSsFaSNuSin}
For all subsets $A \subseteq B \subseteq N$,
$\mathcal{P}(A)$ is a refinement of the restriction of $\mathcal{P}(B)$ to $A$.
\item
For all superadditive game $(N,v)$
the $\mathcal{P}$-restricted game $(N,\overline{v})$ is superadditive.
\end{enumerate}
\end{theorem}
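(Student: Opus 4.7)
The cleanest route I see is the cyclic implication $3)\Rightarrow 1)\Rightarrow 2)\Rightarrow 3)$. The implication $3)\Rightarrow 1)$ is immediate because every unanimity game $u_{S}$ is itself superadditive: for disjoint $A,B$ at most one of them can contain $S$, so $u_{S}(A\cup B)\geq u_{S}(A)+u_{S}(B)$.

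For $1)\Rightarrow 2)$, I would argue by contrapositive. Suppose claim $2)$ fails, so there exist $A\subseteq B$ and a block $F\in\mathcal{P}(A)$ that is not contained in any single block of $\mathcal{P}(B)$. Then $F$ meets at least two distinct blocks $B_{1},B_{2}\in\mathcal{P}(B)$, and we may choose $x\in F\cap B_{1}$ and $y\in F\cap B_{2}$. Set $S:=\{x,y\}$. The key point is that since blocks of a partition are disjoint, at most one block of $\mathcal{P}(C)$ can contain $S$, so $\overline{u_{S}}(C)\in\{0,1\}$, and equals $1$ iff some block of $\mathcal{P}(C)$ contains both $x$ and $y$. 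Hence $\overline{u_{S}}(A)=1$ (witnessed by $F$) while $\overline{u_{S}}(B)=0$ by the choice of $x$ and $y$. Because $S\subseteq A$, we also have $\overline{u_{S}}(B\setminus A)=0$. Applying the superadditivity of $\overline{u_{S}}$ to the disjoint union $B=A\sqcup(B\setminus A)$ then yields $0=\overline{u_{S}}(B)\geq \overline{u_{S}}(A)+\overline{u_{S}}(B\setminus A)=1$, a contradiction.

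For $2)\Rightarrow 3)$, let $v$ be superadditive and pick disjoint $A,B$. Applying $2)$ to the inclusions $A\subseteq A\cup B$ and $B\subseteq A\cup B$, every block $F\in\mathcal{P}(A)\cup\mathcal{P}(B)$ is contained in a unique block of $\mathcal{P}(A\cup B)$. Since $\mathcal{P}(A)$ partitions $A$ and $\mathcal{P}(B)$ partitions $B$, each $G\in\mathcal{P}(A\cup B)$ therefore decomposes as a disjoint union
\[
G=\bigsqcup_{i}F_{i}\;\sqcup\;\bigsqcup_{j}F'_{j},\qquad F_{i}\in\mathcal{P}(A),\ F'_{j}\in\mathcal{P}(B).
\]
Iterated superadditivity of $v$ on these pairwise disjoint pieces gives $v(G)\geq \sum_{i}v(F_{i})+\sum_{j}v(F'_{j})$. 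Summing over $G\in\mathcal{P}(A\cup B)$ collects each block of $\mathcal{P}(A)$ and each block of $\mathcal{P}(B)$ exactly once, yielding $\overline{v}(A\cup B)\geq \overline{v}(A)+\overline{v}(B)$.

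The only delicate step is the construction of the witness in $1)\Rightarrow 2)$: one must pick $S$ small and concrete enough that $\overline{u_{S}}$ separates $A$ from $B$, and the two-element set $\{x,y\}$ straddling two blocks of $\mathcal{P}(B)$ does exactly this thanks to disjointness of partition blocks. The remaining implications are essentially bookkeeping on the refinement structure.
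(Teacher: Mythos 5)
Your proof is correct, and since the paper only recalls this theorem from the cited reference without reproducing its proof, there is nothing to contrast it with; your cyclic argument $3)\Rightarrow 1)\Rightarrow 2)\Rightarrow 3)$ — using the two-point unanimity game $u_{\{x,y\}}$ straddling two blocks of $\mathcal{P}(B)$ as the witness, and the disjoint block decomposition of each $G\in\mathcal{P}(A\cup B)$ for the last implication — is the standard and sound route.
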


\begin{theorem}
\label{thLGNuPNFNFSNuSSNNuSFABFABF}
Let $G=(N,E,w)$ be an arbitrary weighted graph,
and
$\mathcal{P}$ an arbitrary correspondence on $N$.
If for each non-empty subset $S \subseteq N$, $\overline{u_{S}}$ is superadditive,
then the following claims are equivalent.
\begin{enumerate}[1)]
\item
For all non-empty subset $S \subseteq N $,
the game $(N, \overline{u_{S}})$ is $\mathcal{F}$-convex.
\item
For all $i \in N$ and for all $A, B \in \mathcal{F}$,
$ A \subseteq B \subseteq N \setminus \lbrace i \rbrace$
such that $A \cup \lbrace i \rbrace  \in \mathcal{F}$,
we have for all $A^{'} \in \mathcal{P}(A \cup \lbrace i \rbrace)$,
$\mathcal{P}(A)_{|A^{'}} = \mathcal{P}(B)_{|A^{'}}$.
\end{enumerate}
\end{theorem}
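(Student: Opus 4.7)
The plan is to prove both implications through the $0$--$1$ valued unanimity restricted games $\overline{u_S}$ together with the refinement property of $\mathcal{P}$ that follows from the hypothesis by Theorem~\ref{thLGNEPNFaSNuSsFaSNuSin} (Claim~\ref{Claim3ThLGNEPNFaSNuSsFaSNuSin}). The key observation is that $\overline{u_S}(X) = 1$ iff $S$ is contained in a single block of $\mathcal{P}(X)$, and refinement gives the monotonicity $\overline{u_S}(X) \leq \overline{u_S}(X')$ whenever $X \subseteq X'$. This reduces $\mathcal{F}$-convexity of $\overline{u_S}$ to a combinatorial statement about how the blocks of $\mathcal{P}(X)$, $\mathcal{P}(Y)$ and $\mathcal{P}(X \cap Y)$ interact.

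For (1) $\Rightarrow$ (2) I would argue by contrapositive. If $\mathcal{P}(A)_{|A'} \neq \mathcal{P}(B)_{|A'}$ for some $A' \in \mathcal{P}(A \cup \{i\})$, both partitions act on $A' \setminus \{i\}$, and refinement forces $\mathcal{P}(A)_{|A'}$ to be a strict refinement of $\mathcal{P}(B)_{|A'}$, producing distinct $p, q \in A' \setminus \{i\}$ lying in different blocks of $\mathcal{P}(A)$ but in a common block of $\mathcal{P}(B)$. Taking $S = \{p,q\}$, $X = A \cup \{i\}$, and $Y = B$, so that $X \cap Y = A$ and $X \cup Y = B \cup \{i\}$ all lie in $\mathcal{F}$, a direct evaluation using refinement yields $\overline{u_S}(X) = \overline{u_S}(X \cup Y) = \overline{u_S}(Y) = 1$ and $\overline{u_S}(X \cap Y) = 0$, giving $\Delta \overline{u_S}(X, Y) = -1$ in contradiction with the $\mathcal{F}$-convexity of $\overline{u_S}$.

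For (2) $\Rightarrow$ (1) I fix $\emptyset \neq S$ and suppose toward contradiction that $\Delta \overline{u_S}(X, Y) < 0$ for some $X$, $Y$ with $X \cap Y \in \mathcal{F}$. Monotonicity combined with the $\{0,1\}$-valued nature of $\overline{u_S}$ forces $\overline{u_S}(X) = \overline{u_S}(Y) = 1$ and $\overline{u_S}(X \cap Y) = 0$, so some pair $p, q \in S$ shares a block in $\mathcal{P}(X)$ and in $\mathcal{P}(Y)$ but not in $\mathcal{P}(X \cap Y)$. Since $X$ and $X \cap Y$ both belong to $\mathcal{F}$, I can construct a sequence $X \cap Y = Z_0 \subsetneq Z_1 \subsetneq \cdots \subsetneq Z_m = X$ of connected subsets with $Z_{j+1} = Z_j \cup \{x_{j+1}\}$, and prove by downward induction on $j$ that $p$ and $q$ share a block of $\mathcal{P}(Z_j)$; the case $j = 0$ then yields the desired contradiction.

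The main obstacle I foresee lies in the inductive step, because the natural choice $B = Y$ in condition (2) is illegal once $Z_j \not\subseteq Y$. I would instead apply (2) with $A = Z_j$, $i = x_{j+1}$, and $B_j = Y \cup \{x_1, \ldots, x_j\}$: then $A \subseteq B_j$, $i \notin B_j$ (since $x_{j+1} \in X \setminus Y$ is new), and $B_j$ is connected because $Y$ is connected and each $x_l$ attaches through its predecessor $Z_{l-1} \subseteq B_j$. Letting $A'_{j+1}$ be the block of $\mathcal{P}(Z_{j+1})$ that contains $p$ and $q$, condition (2) provides $\mathcal{P}(Z_j)_{|A'_{j+1}} = \mathcal{P}(B_j)_{|A'_{j+1}}$; since $Y \subseteq B_j$, refinement transports the common block of $p, q$ in $\mathcal{P}(Y)$ to a common block in $\mathcal{P}(B_j)$, and the above equality then forces the same block in $\mathcal{P}(Z_j)$, closing the induction.
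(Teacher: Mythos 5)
Your proposal is correct: the reduction to the $\{0,1\}$-valued games $\overline{u_S}$ with monotonicity from the refinement property, the choice $S=\{p,q\}$, $X=A\cup\{i\}$, $Y=B$ for the forward direction, and the one-vertex-at-a-time chain from $X\cap Y$ to $X$ with the auxiliary sets $B_j=Y\cup\{x_1,\dots,x_j\}$ for the converse all check out. The paper only recalls this theorem from the cited reference without reproducing its proof, but your argument is essentially the standard one used there, so there is nothing further to compare.
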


Let $G = (N,E,w)$ be a weighted graph.
We set $|N| = n$ and $|E| = m$.
We assume that all weights are strictly positive
and denote by $w_{k}$ or $w_{ij}$
the weight of an edge $e_{k} = \lbrace i,j \rbrace$ in $E$.
The adjacency matrix $A$ of $G$
is an $n \times n$ matrix defined by:
\[
a_{ij} =
\left \lbrace
\begin{array}{cl}
w_{ij} & \textrm{if } \lbrace i,j \rbrace \in E, \\
0 & \textrm{otherwise}.
\end{array}
\right. 
\]

In the next section we recall
necessary and sufficient conditions on the weight vector $w$
established in \citep{Skoda2016} 
for the inheritance of $\mathcal{F}$-convexity
from the original communication game $(N,v)$ 
to the $\mathcal{P}_{\min}$-restricted game $(N,\overline{v})$.

\section{Necessary
conditions on edge-weights}
\label{subsubsectionNecessaryCondF-ConvPmin}

A star $S_k$ corresponds to a tree
with one internal vertex and $k$ leaves.
We consider a star $S_{3}$
with vertices ${1, 2, 3, 4}$ and edges $e_{1} = \lbrace 1, 2 \rbrace$,
$e_{2} = \lbrace 1, 3 \rbrace$ and $e_{3} = \lbrace 1, 4 \rbrace$.\\

\noindent
\framebox[\linewidth]{
\parbox{.95\linewidth}{
\textbf{Star Condition.}
\it
For every star of type $S_{3}$ of $G$, the edge-weights $w_{1}, w_{2}, w_{3}$ satisfy, after renumbering the edges if necessary:
\[
 w_{1} \leq w_{2} = w_{3}.
\]
}}

\begin{proposition}\label{propStarCondInheritanceF-convexityP_min}
Let $G=(N,E,w)$ be a weighted graph.
If for every una\-ni\-mi\-ty game $(N,u_{S})$,
the $\mathcal{P}_{\min}$-restricted game $(N,\overline{u_{S}})$ is $\mathcal{F}$-convex,
then the Star Condition is satisfied.
\end{proposition}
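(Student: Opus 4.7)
The plan is to prove the contrapositive: given an $S_{3}$ violating the Star Condition, I will exhibit a unanimity game $u_{S}$ whose $\mathcal{P}_{\min}$-restricted game $\overline{u_{S}}$ fails $\mathcal{F}$-convexity. Suppose the condition fails on an $S_{3}$ with edges $e_{1} = \{1,2\}$, $e_{2} = \{1,3\}$, $e_{3} = \{1,4\}$; after renumbering this means $w_{1} \leq w_{2} < w_{3}$, so that $e_{3}$ is the unique strictly heaviest edge of the star. The natural witness is $S = \{1,4\}$, the vertex-set of $e_{3}$, tested on $A = \{1,2,4\}$ and $B = \{1,3,4\}$. All of $A$, $B$, $A \cap B = \{1,4\}$ and $A \cup B = \{1,2,3,4\}$ are connected via star edges, so $A,B,A\cap B\in\mathcal{F}$ as required.

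The whole computation reduces to tracking whether $e_{3}$ survives the minimum-weight deletion on each subset, since $\overline{u_{S}}(C)$ counts the blocks of $\mathcal{P}_{\min}(C)$ containing $\{1,4\}$. On $A \cap B = \{1,4\}$ the only edge is $e_{3}$, which is trivially minimum and gets deleted, so $\mathcal{P}_{\min}(A\cap B) = \{\{1\},\{4\}\}$ and $\overline{u_{S}}(A \cap B) = 0$. On $A$ the unique minimum is $e_{1}$ with weight $w_{1} < w_{3}$, so $e_{3}$ survives and $\{1,4\}$ sits in one block, hence $\overline{u_{S}}(A) = 1$; symmetrically $\overline{u_{S}}(B) = 1$ via the deletion of $e_{2}$. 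On $A \cup B$ the minimum weight is $w_{1} \leq w_{2} < w_{3}$, so $e_{3}$ again survives (whether or not $w_{1} = w_{2}$) and $\{1,4\}$ lies in some block, giving $\overline{u_{S}}(A \cup B) = 1$. Combining these four values,
\[
\Delta \overline{u_{S}}(A,B) = 1 + 0 - 1 - 1 = -1 < 0,
\]
contradicting the $\mathcal{F}$-convexity of $\overline{u_{S}}$.

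There is no serious obstacle here: the substance of the argument is the choice of $S = \{1,4\}$, which forces the asymmetry ``$e_{3}$ is the minimum edge on the intersection $A\cap B$ but not on $A$, $B$, or $A\cup B$'' that produces the $-1$ in $\Delta\overline{u_{S}}$. The one mild subtlety is the implicit reading that the $S_{3}$ is induced on $\{1,2,3,4\}$; if $G$ contains additional edges among these four vertices, one must briefly verify that the minimum weight in each of $E(A)$, $E(B)$, and $E(A\cup B)$ is still strictly below $w_{3}$ so that $e_{3}$ survives, which is handled by a short case analysis on the extra edge weights.
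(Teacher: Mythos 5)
Your proof is correct. The paper itself states this proposition without proof (it is recalled from the cited earlier work), but your direct counterexample with $S=\lbrace 1,4\rbrace$, $A=\lbrace 1,2,4\rbrace$, $B=\lbrace 1,3,4\rbrace$ is the standard argument and matches the style the paper uses elsewhere (e.g., in the proof of Proposition~\ref{PropPanCondition}, where such configurations are phrased via the partition criterion of Theorem~\ref{thLGNuPNFNFSNuSSNNuSFABFABF} rather than by computing $\Delta\overline{u_{S}}$ directly). The ``mild subtlety'' you flag about extra edges is in fact immediate and needs no case analysis: in a simple graph $E(A\cap B)$ can only contain $e_{3}$, so $e_{3}$ is always deleted there, while each of $E(A)$, $E(B)$, $E(A\cup B)$ contains $e_{1}$ or $e_{2}$ of weight strictly below $w_{3}$, so $e_{3}$ always survives and keeps $1$ and $4$ in a common block regardless of any additional edges.
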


\noindent 
\framebox[\linewidth]{
\parbox{.95\linewidth}{
\textbf{Path Condition.}
\it
For every elementary path $\gamma = (1, e_{1}, 2, e_{2}, 3, \ldots,  m, \\ e_{m}, m+1)$ in $G$
and for all $i,j,k$
such that $1 \leq i < j < k \leq m$,
the edge-weights satisfy:
\[
w_{j} \leq \max (w_{i}, w_{k}).
\]
}}

\begin{proposition}\label{corPathCond} 
Let $G = (N,E,w)$ be a weighted graph.
If for every unanimity game $(N, u_{S})$,
the
$\mathcal{P}_{\min}$-
restricted game $(N, \overline{u_{S}})$ is $\mathcal{F}$-convex,
then the Path Condition is satisfied.
\end{proposition}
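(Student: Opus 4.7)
The plan is to argue by contrapositive: assuming the Path Condition fails, I construct connected sets $A$, $B$ with $A \cap B$ connected together with a specific unanimity game $u_S$ for which $\Delta \overline{u_S}(A, B) < 0$, contradicting the $\mathcal{F}$-convexity of $\overline{u_S}$.

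First I choose a minimal counterexample. Among all pairs formed by an elementary path $\gamma = (v_0, e_1, v_1, \ldots, e_m, v_m)$ in $G$ and a triple $1 \leq i < j < k \leq m$ with $w_j > \max(w_i, w_k)$, pick one minimizing $k - i$. After moving $j$ to an argmax of $\{w_l : l \in [i,k]\}$ (the argmax lies in $(i,k)$), I may assume $w_j = \max_{l \in [i,k]} w_l$. Then for any $l \in (i,k) \setminus \{j\}$ a strict inequality $w_l < w_j$ would yield the shorter violating triple $(l, j, k)$ or $(i, j, l)$ on the same $\gamma$, contradicting minimality. So the subpath has a \emph{plateau} structure: $w_i < w_j$, then $k - i - 1$ consecutive edges of weight $w_j$, and finally $w_k < w_j$.

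Now set $A = \{v_{i-1}, v_i, \ldots, v_{k-1}\}$ and $B = \{v_i, \ldots, v_{k-1}, v_k\}$, so that $A \cap B = \{v_i, \ldots, v_{k-1}\}$ and $A \cup B = \{v_{i-1}, \ldots, v_k\}$; each of these four sets is connected via a sub-path of $\gamma$, and thus lies in $\mathcal{F}$. Let $S = \{v_i, v_{i+1}\}$; then $\overline{u_S}(C) = 1$ if and only if $v_i$ and $v_{i+1}$ share a part of $\mathcal{P}_{\min}(C)$. I claim $\overline{u_S}(A) = \overline{u_S}(B) = \overline{u_S}(A \cup B) = 1$ and $\overline{u_S}(A \cap B) = 0$, which gives $\Delta \overline{u_S}(A, B) = 1 + 0 - 1 - 1 = -1$ and the required contradiction.

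The main obstacle is to handle arbitrary chords of $G$ falling inside these sets. The heart of the argument is that every chord with both endpoints in the plateau $\{v_i, \ldots, v_{k-1}\}$ has weight exactly $w_j$: then $E(A \cap B)$ is uniformly of weight $w_j$, $\mathcal{P}_{\min}$ deletes every edge, and $A \cap B$ breaks into singletons. The upper bound $\le w_j$ comes from the already-established Star Condition (Proposition~\ref{propStarCondInheritanceF-convexityP_min}, available under the inheritance hypothesis) applied at any interior plateau vertex, whose two incident plateau edges of weight $w_j$ pin any third edge at that vertex. The lower bound $\ge w_j$ comes from minimality: a chord $\{v_a, v_b\}$ of weight $c < w_j$ produces the shortcut elementary path $v_{i-1}, \ldots, v_a, v_b, \ldots, v_k$, strictly shorter than the sub-path of $\gamma$ but still carrying a violating triple (obtained directly from the surviving plateau edges when $a \ge i+1$ or $b \le k-2$; the remaining boundary subcases $a = i$ and $b = k-1$ are killed by the Star Condition at $v_i$ or $v_{k-1}$, which already forces $c = w_j$). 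The same minimality-plus-Star argument pins every chord incident to $v_{i-1}$ or $v_k$ and a plateau vertex to weight $w_j$, so such chords cannot delete the plateau edges when computing $\mathcal{P}_{\min}(A)$, $\mathcal{P}_{\min}(B)$ or $\mathcal{P}_{\min}(A \cup B)$; in each of these three sets the boundary edge $e_i$ or $e_k$ of weight strictly less than $w_j$ remains (a) minimum, the plateau survives, and $v_i, v_{i+1}$ stay together. Combining the four values of $\overline{u_S}$ then yields the contradiction $\Delta \overline{u_S}(A, B) = -1 < 0$.
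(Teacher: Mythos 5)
Your proof is correct. Note that the paper does not actually reproduce a proof of Proposition~\ref{corPathCond} (it is recalled from the cited reference), so the only basis for comparison is the style of the arguments the paper does give, e.g.\ the proof of Claim~\ref{ItemPropPanConditionIfForEveryUnanimity} of Proposition~\ref{PropPanCondition}, which detects violations through the refinement criterion of Theorem~\ref{thLGNuPNFNFSNuSSNNuSFABFABF} with nested coalitions $A\subseteq B$ and an added vertex $i$. You instead exhibit a direct supermodularity violation $\Delta\overline{u_S}(A,B)=-1$ for overlapping, non-nested $A$ and $B$ with $S=\lbrace v_i,v_{i+1}\rbrace$, which is more self-contained and makes the failure of $\mathcal{F}$-convexity explicit rather than routed through the characterization theorem. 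The two substantive ingredients of your argument both check out: the minimal-counterexample normalization to a plateau $w_i<w_{i+1}=\cdots=w_{k-1}=w_j>w_k$ is sound, and the chord analysis is complete --- the Star Condition is legitimately available since Proposition~\ref{propStarCondInheritanceF-convexityP_min} has the same hypothesis, it forces any third edge at $v_i$ or $v_{k-1}$ to have weight exactly $w_j$ (because one of the two incident path edges there is strictly lighter than the other), it gives the bound $w(e)\leq w_j$ at interior plateau vertices, and your shortcut-path/minimality argument supplies the matching lower bound. The one chord your case analysis does not pin down, $\lbrace v_{i-1},v_k\rbrace$ in $E(A\cup B)$, is indeed harmless since the minimum weight in $E(A\cup B)$ is already below $w_j$, so the plateau edge $e_{i+1}$ survives regardless; this gives $\overline{u_S}(A)=\overline{u_S}(B)=\overline{u_S}(A\cup B)=1$ and $\overline{u_S}(A\cap B)=0$ as claimed, and the contradiction follows.
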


For a given cycle in $G$,
$C = \lbrace 1, e_{1}, 2, e_{2}, \ldots, m, e_{m}, 1\rbrace$ with $m \geq 3$,
we denote by $E(C)$ the set of edges $\lbrace e_{1}, e_{2}, \ldots, e_{m}\rbrace$ of $C$
and by $\hat{E}(C)$ the set composed of $E(C)$ and of the chords of $C$ in $G$.\\

\noindent 
\framebox[\linewidth]{
\parbox{.95\linewidth}{
\textbf{Weak Cycle Condition.}
\it
For every simple cycle of $G$,
$C = \lbrace 1, e_{1}, 2, e_{2}, \ldots, m, e_{m}, 1\rbrace$ with $m \geq 3$,
the edge-weights satisfy, after renumbering the edges if necessary:
\begin{equation}\label{eqCoru1<=u2<=u3=um=M}
w_{1} \leq w_{2} \leq w_{3} = \cdots = w_{m} = M
\end{equation}
where $M = \max_{e \in E(C)} w(e)$.
}}

\noindent 
\framebox[\linewidth]{
\parbox{.95\linewidth}{
\textbf{Intermediary Cycle Condition.}
\it
\begin{enumerate}[1)]
\item
Weak Cycle condition.
\item
Moreover
$w(e) \leq w_{2}$ for all chord incident to $2$,
and $w(e) \leq \hat{M} = \max_{e \in \hat{E}(C)}$ for all chord non incident to $2$.
Moreover:
\begin{itemize}
\item
If $w_{1} \leq w_{2} < \hat{M}$
then $w(e) = \hat{M}$ for all $e \in \hat{E}(C)$  non-incident to~$2$.
If $e$ is a chord incident to $2$
then $w_{1} \leq w_{2} = w(e) < \hat{M}$
or $w(e) < w_{1} = w_{2} < \hat{M}$.
\item
If $w_{1} < w_{2} = \hat{M}$,
then $w(e) = \hat{M}$ for all $e \in \hat{E}(C) \setminus \lbrace e_{1} \rbrace$.
\end{itemize}
\end{enumerate}
}}

\noindent 
\framebox[\linewidth]{
\parbox{.95\linewidth}{
\textbf{Cycle Condition.}
\it
\begin{enumerate}[1)]
\item
Weak Cycle condition.
\item
Moreover
$w(e) = w_{2}$ for all chord incident to $2$,
and $w(e) = \hat{M}$ for all $e \in \hat{E}(C)$ non incident to $2$.
\end{enumerate}
}}

\begin{proposition}\label{CorCycleConditionGeneralPmin}
Let $G = (N,E,w)$ be a weighted graph.
\begin{enumerate}
\item
\label{itemPathConditionImpliesAWeakCycleCondition}
If $G$ satisfies the Path condition
then the Weak Cycle condition is satisfied.
\item
\label{itemPathAndStarConditionsImplyAnIntermediaryCycleCondition}
If $G$ satisfies the Star and Path conditions
then the Intermediary Cycle condition is satisfied.
\item
\label{itemCycleConditionGeneralPmin}
If for every unanimity game $(N, u_{S})$,
the
$\mathcal{P}_{\min}$-
restricted game $(N, \overline{u_{S}})$ is $\mathcal{F}$-convex,
then the Cycle Condition is satisfied.
\end{enumerate}
\end{proposition}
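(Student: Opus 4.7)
The plan is to establish the three claims in sequence.

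For claim~1, I argue by contrapositive. Suppose the Weak Cycle condition fails on some cycle $C$ of $G$ with maximum weight $M$. Then $C$ either contains at least three edges of weight strictly less than $M$, or exactly two such edges that are non-adjacent on $C$. In both configurations, cutting $C$ at a well-chosen edge of weight $M$ yields an elementary path $\gamma$ whose weight sequence contains three edges at positions $i<j<k$ with $w_{j}=M$ and $w_{i},w_{k}<M$, directly violating the Path condition. The underlying observation is that, under Path condition, cutting $C$ at any maximum-weight edge produces a path whose weight sequence must be valley-shaped (non-increasing then non-decreasing), which forces the number of sub-maximal weights to be at most two and any two such weights to be consecutive.

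For claim~2, Weak Cycle is already given by claim~1, so only the chord constraints remain. For a chord $e$ of $C$, I split on whether $e$ is incident to the distinguished vertex~$2$. If it is, Star condition applied to the $S_{3}$ at vertex~$2$ formed by $e_{1}$, $e_{2}$ and $e$ gives $w(e)\leq w_{2}$ and forces $w(e)=w_{2}$ whenever $w_{1}<w_{2}$. If $e$ is not incident to~$2$, then the arc of $C$ avoiding vertex~$2$ together with $e$ forms a sub-cycle whose non-chord edges all have weight $M$; Weak Cycle applied to this sub-cycle yields $w(e)\leq M\leq\hat{M}$ as soon as the arc has length at least three, while shorter arcs (reducing to a triangle) are handled by Path condition on an elementary path splicing $e$ with $e_{1}$, $e_{2}$ and a segment of $C$. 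The two ``Moreover'' refinements of Intermediary Cycle are then obtained by applying Weak Cycle to further sub-cycles built from a chord or cycle edge realizing $\hat{M}$, other chords, and arcs of $C$, which propagate the weight constraints.

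For claim~3, Propositions~\ref{propStarCondInheritanceF-convexityP_min} and~\ref{corPathCond} give Star and Path conditions from the hypothesis, and claim~2 then yields Intermediary Cycle. It remains to promote the inequalities $w(e)\leq w_{2}$ and $w(e)\leq\hat{M}$ to equalities. Suppose for contradiction that some chord $e=\{a,b\}$ of $C$ fails the equality demanded by Cycle condition; by Intermediary Cycle this restricts to the sub-regimes $w(e)<w_{1}=w_{2}<\hat{M}$ when $e$ is incident to~$2$, and $w(e)<\hat{M}=w_{1}=w_{2}$ when $e$ is not. In either case, let $A$ and $B$ be the two arcs of $C$ joining $a$ and $b$ (viewed as vertex sets) and set $S=\{a,b\}$. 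Then $A$, $B$, $A\cap B=\{a,b\}$ and $A\cup B$ all lie in $\mathcal{F}$, and in each of $A$, $B$ and $A\cup B$ the strict inequality on $w(e)$ ensures that all cycle-arc edges survive the $\mathcal{P}_{\min}$ deletion, so $a$ and $b$ remain in a common block; hence $\overline{u_{S}}=1$ on each of these sets. On $A\cap B=\{a,b\}$, however, the only edge is $e$, which is deleted, separating $a$ and $b$, so $\overline{u_{S}}(A\cap B)=0$. Therefore
\[
\Delta\overline{u_{S}}(A,B)=1+0-1-1=-1<0,
\]
contradicting $\mathcal{F}$-convexity of $\overline{u_{S}}$.

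The main obstacle is claim~3: although the skeleton of the construction above is clean, one must verify uniformly across the several sub-regimes of Intermediary Cycle (chord at~$2$ versus not; $w_{1}=w_{2}$ versus $w_{1}<w_{2}$; $w_{2}<\hat{M}$ versus $w_{2}=\hat{M}$) that $a$ and $b$ actually remain in a common block of $\mathcal{P}_{\min}(A)$, $\mathcal{P}_{\min}(B)$ and $\mathcal{P}_{\min}(A\cup B)$ irrespective of which other chords of $G$ are incident to $A\cup B$, and this in turn relies on the strict inequalities between chord weights and cycle-edge weights made available by Intermediary Cycle.
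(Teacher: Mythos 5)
The paper itself does not reprove this proposition (it is recalled from the earlier reference), so I can only assess your argument on its own terms. Your Claim~3 argument is correct and complete, and it is the essential content: the only configurations left open by the Intermediary Cycle condition are chords $e$ whose weight is strictly below \emph{every} edge of $C$, and for such a chord your sets $A$, $B$ (the two arcs), $A\cap B=\lbrace a,b\rbrace$ all lie in $\mathcal{F}$, the strict inequality guarantees that no arc edge is of minimum weight in $E(A)$, $E(B)$ or $E(A\cup B)$ (whatever other chords are present), while the unique edge of $E(A\cap B)$ is always deleted; hence $\Delta\overline{u_{S}}(A,B)=-1$. This direct computation of $\Delta\overline{u_{S}}$ is a legitimate alternative to the style the paper uses for the Pan condition, where the contradiction is phrased via the refinement criterion of Theorem~\ref{thLGNuPNFNFSNuSSNNuSFABFABF}. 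Your enumeration of sub-regimes omits the case of a chord incident to $2$ with $w(e)<w_{1}=w_{2}=\hat{M}$, but the construction applies verbatim there, so this is harmless.

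There is, however, a genuine flaw in your mechanism for Claim~1. You assert that whenever the Weak Cycle condition fails, cutting $C$ at a well-chosen edge of weight $M$ exhibits a Path violation, because the resulting weight sequence must be valley-shaped and valley-shapedness ``forces the number of sub-maximal weights to be at most two.'' That last inference is false: a valley-shaped sequence such as $M,M,1,2,3,M$ has three sub-maximal entries. Concretely, for a cycle whose edge weights in cyclic order are $M,1,2,3,M,M$, every cut at a maximum-weight edge yields a sequence of the form $M,\ldots,M,1,2,3,M,\ldots,M$ restricted to one of its cyclic rotations starting and ending so that the sub-maximal entries form a contiguous block adjacent to one end; no maximum-weight entry then lies strictly between two sub-maximal ones, and no Path violation appears. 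The repair is easy but requires a different path: take the two extreme edges $f$ and $f'$ of the contiguous run of sub-maximal edges and the complementary arc of $C$ joining them, which consists only of maximum-weight edges and is nonempty; the elementary sub-path $f,(\text{arc}),f'$ has a weight-$M$ edge strictly between two sub-maximal edges and violates the Path condition. (Equivalently: the Path condition applied to sub-paths of $C$ shows no maximum-weight edge can separate two sub-maximal edges along either arc, whence any two sub-maximal edges are adjacent and there are at most two.) Your Claim~2 sketch is workable, though two details are misattributed: the bound $w(e)\leq\hat{M}$ for chords non-incident to $2$ is vacuous by definition of $\hat{M}$ (the content is the equality $w(e)=\hat{M}$ in the ``Moreover'' clauses, which your sub-cycle/Weak-Cycle argument does deliver when the arc through $2$ has at least three edges), and the triangle case of a chord $\lbrace 1,3\rbrace$ is forced to weight $\hat{M}$ by the Star condition at vertex $1$ (using $e_{1}$ and $e_{m}$), not by the Path condition on a spliced path.
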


\noindent 
\framebox[\linewidth]{
\parbox{.95\linewidth}{
\textbf{Weak Pan Condition.}
\it
For all connected subgraphs corresponding to the union of a simple cycle $C = \lbrace e_{1}, e_{2}, \ldots, e_{m} \rbrace$
with $m \geq 3$,
and an elementary path $P$ such that there is an edge $e$ in $P$
with $w(e) \leq \min_{1 \leq k \leq m} w_{k}$
and $|V(C) \cap V(P)| = 1$,
the edge-weights satisfy:
\begin{enumerate}[(a)]
\item
\label{eqPanConditionEither}
either $w_{1} = w_{2} = w_{3} = \cdots = w_{m}= \hat{M}$,
\item
\label{eqPanConditionOr}
or $w_{1} = w_{2} < w_{3} = \cdots = w_{m} = \hat{M}$,
\end{enumerate}
where $\hat{M} = \max_{e \in \hat{E}(C)} w(e)$.
In this last case $V(C) \cap V(P) = \lbrace 2 \rbrace$.
}}

\noindent 
\framebox[\linewidth]{
\parbox{.95\linewidth}{
\textbf{Pan Condition.}
\it
\begin{enumerate}[1)]
\item
Weak Pan condition.
\item
If Claim (\ref{eqPanConditionOr}) of the Weak Pan condition is satisfied and if moreover $w(e) < w_{1}$
then $\lbrace 1, 3 \rbrace$ is a maximum weight chord of $C$.
\end{enumerate}
}}

\begin{proposition}
\label{PropPanCondition}
Let $G = (N,E,w)$ be a weighted graph.
\begin{enumerate}[1)]
\item
\label{ItemPropPanConditionIfGSatisfiesStarAndPath}
If $G$ satisfies the Star, and Path conditions
then the Weak Pan condition is satisfied.
\item
\label{ItemPropPanConditionIfForEveryUnanimity}
If for every unanimity game $(N, u_{S})$,
the $\mathcal{P}_{\min}$-restricted game $(N, \overline{u_{S}})$ is $\mathcal{F}$-convex,
then the Pan Condition is satisfied.
\end{enumerate}
\end{proposition}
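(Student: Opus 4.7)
The proposition splits into two parts. Part~\ref{ItemPropPanConditionIfGSatisfiesStarAndPath} will follow purely from already-established consequences of the Star and Path conditions, whereas Part~\ref{ItemPropPanConditionIfForEveryUnanimity} will require the full inheritance hypothesis for unanimity games via the criterion of Theorem~\ref{thLGNuPNFNFSNuSSNNuSFABFABF}.

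For Part~\ref{ItemPropPanConditionIfGSatisfiesStarAndPath}, the plan is first to invoke Proposition~\ref{CorCycleConditionGeneralPmin}.\ref{itemPathAndStarConditionsImplyAnIntermediaryCycleCondition} on the cycle $C$ of the pan to obtain the Intermediary Cycle condition, and to relabel the edges of $C$ so that $w_1 \leq w_2 \leq w_3 = \cdots = w_m = M$. Let $v$ denote the unique vertex of $V(C) \cap V(P)$, let $e_P$ be the edge of $P$ incident to $v$, and let $e \in P$ be the hypothesized edge with $w(e) \leq w_1$. I then carry out a case analysis on the position of $v$. If $v = 1$, Star at vertex~$1$ applied to $\{e_m, e_1, e_P\}$ forces $w_1 = w_m = M$, giving case~(\ref{eqPanConditionEither}). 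If $v = 3$, Star at vertex~$3$ gives $w_2 = M$, and the Path Condition applied to the triple $(e, e_2, e_1)$ along the elementary path from the far end of $P$ through $e, e_P, e_2, e_1$ to vertex~$1$ yields $w_2 \leq w_1$, hence $w_1 = w_2 = M$, again case~(\ref{eqPanConditionEither}). For $v \in \{4, \ldots, m\}$, the analogous path together with Path Condition applied to $(e, e_{v-1}, e_1)$ forces $M = w_{v-1} \leq w_1$, again case~(\ref{eqPanConditionEither}). Finally, for $v = 2$, Star at vertex~$2$ applied to $\{e_1, e_2, e_P\}$ forces $w_1 = w_2$, giving case~(\ref{eqPanConditionEither}) if this common value equals $M$ and case~(\ref{eqPanConditionOr}) with $V(C)\cap V(P)=\{2\}$ otherwise. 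The identity $M = \hat{M}$ needed in case~(\ref{eqPanConditionEither}) will come directly from the chord part of the Intermediary Cycle condition.

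For Part~\ref{ItemPropPanConditionIfForEveryUnanimity}, I argue by contradiction. Assume case~(\ref{eqPanConditionOr}) holds with $w(e) < w_1$, but that $\{1,3\}$ is either not an edge of $G$ or is a chord of $C$ of weight strictly less than $\hat{M}$. By Theorem~\ref{thLGNuPNFNFSNuSSNNuSFABFABF}, it suffices to exhibit a vertex $i$ and connected subsets $A \subseteq B \subseteq N \setminus \{i\}$ with $A \cup \{i\}$ connected, for which $\mathcal{P}_{\min}(A)_{|A'} \neq \mathcal{P}_{\min}(B)_{|A'}$ for some $A' \in \mathcal{P}_{\min}(A \cup \{i\})$. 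The natural witness takes $A = V(C) \cup \{u_1, \ldots, u_{j-1}\}$, where $(2, u_1, \ldots, u_p)$ denotes $P$ and $e = \{u_{j-1}, u_j\}$, and $B = A \cup \{u_j\}$: inserting $u_j$ brings $e$ into $E(B)$ and shifts the minimum from $w_1 = w_2$ down to $w(e)$, so $e_1, e_2$ are deleted in computing $\mathcal{P}_{\min}(A)$ but kept in $\mathcal{P}_{\min}(B)$. Under the standing assumption on $\{1,3\}$, vertex~$2$ therefore merges differently with the block $\{1, 3, 4, \ldots, m\}$ in the two partitions, and choosing $i$ and a unanimity carrier $S$ that single out the block containing~$2$ in $\mathcal{P}_{\min}(A \cup \{i\})$ will produce the required mismatch.

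The main obstacle is Part~\ref{ItemPropPanConditionIfForEveryUnanimity}: the witness $(A, B, i, S)$ must be tuned so that the partition mismatch is exactly localized at a block $A'$ reflecting the failure of the $\{1,3\}$-chord condition, and one must separately handle the sub-case in which $\{1,3\}$ is a chord of weight strictly less than $\hat{M}$ from the sub-case in which $\{1,3\}$ is not an edge of $G$ at all; in particular the added vertex $i$ must be chosen so that $A \cup \{i\}$ remains connected while the block structure on $A'$ is controlled by the presence or absence of the max-weight chord $\{1,3\}$. By contrast, Part~\ref{ItemPropPanConditionIfGSatisfiesStarAndPath} is essentially mechanical once the Intermediary Cycle condition is in hand.
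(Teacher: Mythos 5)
Your Part~\ref{ItemPropPanConditionIfGSatisfiesStarAndPath} follows the paper's strategy (case analysis on the attachment vertex, Intermediary Cycle condition plus Star and Path), but the Star steps are invalid as written because you never bound $w(e_P)$. The Star condition on three edge-weights only forces the two \emph{largest} to coincide; so at $v=1$, applying Star to $\lbrace e_m, e_1, e_P \rbrace$ with $w_1 \leq w_m = M$ is perfectly consistent with $w(e_P) = w_m = M > w_1$ and does not force $w_1 = w_m$. The same objection defeats your conclusions at $v=2$ (where $w(e_P)=w_2>w_1$ is Star-consistent) and at $v=3$. The missing step, which the paper performs first, is to apply the Path condition to the elementary path $e_1 \cup P$ (truncated so that its last edge is $e$), giving $w(e_P) \leq \max(w_1, w(e)) = w_1$; only then do the Star applications yield $w_1 = w_m$, $w_1 = w_2$, etc. This gap is fillable with a technique you already use in the other cases.

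Part~\ref{ItemPropPanConditionIfForEveryUnanimity} has a more serious defect: your witness cannot detect the chord condition. With $A \supseteq V(C)$, deleting the minimum-weight edges $e_1, e_2$ (and the weight-$w_2$ chords incident to $2$) from $G_A$ still leaves the path $3, e_3, 4, \ldots, m, e_m, 1$ intact, so $V(C)\setminus\lbrace 2 \rbrace$ is a \emph{single} block of $\mathcal{P}_{\min}(A)$ whether or not $\lbrace 1,3 \rbrace$ is a chord. Your construction therefore only registers the separation of vertex $2$, which occurs in every instance of case~(\ref{eqPanConditionOr}) --- including the admissible ones where $\lbrace 1,3\rbrace$ is a maximum-weight chord --- so it cannot produce a contradiction conditional on the absence of that chord (and an unconditional contradiction would prove too much). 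The paper instead first reduces to a cycle with no maximum-weight chord (shortcutting $C$ via the Cycle condition, which it may invoke under the hypothesis of Part~\ref{ItemPropPanConditionIfForEveryUnanimity}), then removes an interior vertex: with $i \in V(C)\setminus\lbrace 1,2,3\rbrace$, $A = V(C)\setminus\lbrace i\rbrace$ and $B = A \cup V(P)$, the two arcs $\lbrace 3,\ldots,i-1\rbrace$ and $\lbrace i+1,\ldots,m,1\rbrace$ are distinct blocks of $\mathcal{P}_{\min}(A)$ precisely because $\lbrace 1,3\rbrace$ is absent, while in $B$ the minimum drops to $w(e)<w_1$, the edges $e_1,e_2$ survive, and $A\setminus\lbrace 2\rbrace$ becomes one block; restricting to $A' = A\cup\lbrace i\rbrace\setminus\lbrace 2\rbrace \in \mathcal{P}_{\min}(A\cup\lbrace i\rbrace)$ contradicts Theorem~\ref{thLGNuPNFNFSNuSSNNuSFABFABF}. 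Note also that the Cycle condition forces any chord non-incident to $2$ to have weight $\hat{M}$, so your sub-case ``$\lbrace 1,3\rbrace$ a chord of weight $<\hat{M}$'' is vacuous, and that the criterion of Theorem~\ref{thLGNuPNFNFSNuSSNNuSFABFABF} is purely about partitions --- no unanimity carrier $S$ needs to be chosen.
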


We recall the proof of Proposition~\ref{PropPanCondition}
as we will need it to prove a new formulation of the Pan condition
in Section~\ref{SectionComplexityOfInheritance}.

\begin{proof}
\ref{ItemPropPanConditionIfGSatisfiesStarAndPath})
Let us consider $C = \lbrace 1, e_{1}, 2, e_{2}, 3, \ldots, m, e_{m}, 1 \rbrace$,
and $P = \lbrace j, e_{m+1},\\ m+1, e_{m+2}, m+2, \ldots, e_{m+r}, m+r \rbrace$
with $j \in \lbrace 1, \ldots, m  \rbrace$,
as represented in Figure~\ref{figPanConditionCycle+Path}.
We can assume w.l.o.g. that $e = e_{m+r}$
(restricting $P$ if necessary)
and that $w_{m+j} > w_{m+r} = w(e)$
for all $1 \leq j \leq r-1$.
\begin{figure}[!h]
\centering
\begin{pspicture}(0,-.3)(0,2.5)
\tiny
\begin{psmatrix}[mnode=circle,colsep=0.4,rowsep=0.5]
	& { }	&  	&  { } \\
{$m$} 	& 	&	& 	&  {$j$}  & &  { } & & { } & & {}\\
	& {$1$} 	& 	& {$2$}
\psset{arrows=-, shortput=nab,labelsep={0.1}}
\tiny
\ncline{3,2}{3,4}_{$e_{1}$}
\ncline{2,1}{3,2}_{$e_{m}$}
\ncline{2,1}{1,2}
\ncline{3,4}{2,5}_{$e_{2}$}
\ncline{2,5}{1,4}_{$e_{j}$}
\ncline{1,2}{1,4}
\ncline{2,5}{2,7}_{$e_{m+1}$}
\ncline{2,7}{2,9}_{$e_{m+2}$}
\ncline{2,9}{2,11}_{$e_{m+r}$}
\normalsize
\end{psmatrix}
\end{pspicture}
\caption{Pan formed by the union of $C$ and $P$.}
\label{figPanConditionCycle+Path}
\end{figure}
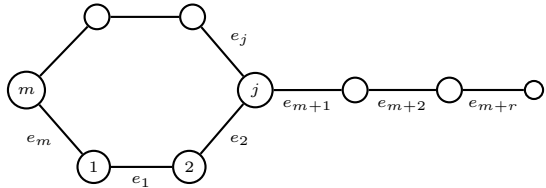
Applying Claim~\ref{itemPathAndStarConditionsImplyAnIntermediaryCycleCondition}
of Proposition~\ref{CorCycleConditionGeneralPmin}
to the cycle $C$,
we have after renumbering the edges if necessary:
\begin{equation}\label{eqproofu1<=u2<=u3=um=M}
w_{1} \leq w_{2} \leq w_{3} = \cdots = w_{m} = \hat{M}. 
\end{equation}
Let us first assume $3 \leq j \leq m$.
Applying Proposition~\ref{corPathCond} to the path
$\lbrace 2, e_{1}, 1, e_{m}, m, \ldots, j+1, e_{j}, j, e_{m+1}, m+1,
\ldots, m+r-1, e_{m+r}, m+r \rbrace$,
we have $w_{j} \leq \max (w_{1}, w(e)) = w_{1}$.
Then (\ref{eqproofu1<=u2<=u3=um=M}) implies (\ref{eqPanConditionEither}).\\
Let us now assume $j \in \lbrace 1, 2 \rbrace$.
If $r=1$
then $w_{m+1} = w(e) \leq w_{1}$.
Otherwise,
applying Proposition~\ref{corPathCond} to the path
$\lbrace  e_{1}, e_{m+1}\ldots, e_{m+r} \rbrace$,
we have $w_{m+1} \leq \max (w_{1}, w(e)) = w_{1}$.\\
If $j=1$,
Proposition~\ref{propStarCondInheritanceF-convexityP_min}
applied to the star defined by $\lbrace e_{1}, e_{m}, e_{m+1} \rbrace$,
implies $w_{m+1} \leq w_{1} = w_{m}$.
Hence (\ref{eqproofu1<=u2<=u3=um=M}) still implies (\ref{eqPanConditionEither}).\\
If $j=2$,
Proposition~\ref{propStarCondInheritanceF-convexityP_min}
applied to the star defined by $\lbrace e_{1}, e_{2}, e_{m+1} \rbrace$,
implies $w_{m+1} \leq w_{1} = w_{2}$.
If $w_{1} = w_{2} = \hat{M}$
then (\ref{eqPanConditionEither}) is satisfied.
Otherwise we have $w_{1} = w_{2} < \hat{M}$
and (\ref{eqPanConditionOr}) is satisfied.
Hence the Weak Pan condition is satisfied.\\
\ref{ItemPropPanConditionIfForEveryUnanimity})
Now if Claim (\ref{eqPanConditionOr}) of the Weak Pan condition is satisfied,
let us assume by contradiction that $\lbrace 1, 3 \rbrace \notin E(C)$.
Claim~\ref{itemCycleConditionGeneralPmin}
of Proposition~\ref{CorCycleConditionGeneralPmin} implies
$w(e) = \hat{M}$ (resp. $w(e) = w_{2}$) for any chord $e$ of $C$
non incident (resp. incident) to $2$.
Therefore we can assume w.l.o.g.
that $C$ has no maximum weight chord
(otherwise
we can replace $C$ by a smaller cycle which still contains the vertices $1, 2, 3$).
Let us consider $i \in V(C) \setminus \lbrace 1, 2, 3 \rbrace$,
$A = V(C) \setminus \lbrace i \rbrace$
and $B = A \cup V(P)$
as represented in Figure~\ref{figPanConditionCycle+Path2}.
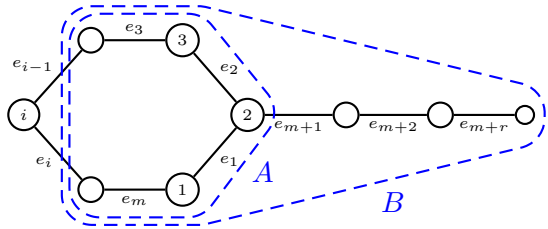
\begin{figure}[!h]
\centering
\begin{pspicture}(0,-.3)(0,2.5)
\tiny
\begin{psmatrix}[mnode=circle,colsep=0.4,rowsep=0.5]
	& { }	&  	&  {$3$} \\
{$i$} 	& 	&	& 	&  {$2$}  & &  { } & & { } & & {}\\
	& { } 	& 	& {$1$}
\psset{arrows=-, shortput=nab,labelsep={0.05}}
\tiny
\ncline{3,2}{3,4}_{$e_{m}$}
\ncline{2,1}{3,2}_{$e_{i}$}
\ncline{2,1}{1,2}^{$e_{i-1}$}
\ncline{3,4}{2,5}_{$e_{1}$}
\ncline{2,5}{1,4}_{$e_{2}$}
\ncline{1,2}{1,4}^{$e_{3}$}
\ncline{2,5}{2,7}_{$e_{m+1}$}
\ncline{2,7}{2,9}_{$e_{m+2}$}
\ncline{2,9}{2,11}_{$e_{m+r}$}
\normalsize
\uput[0](.8,.3){\textcolor{blue}{$A$}}
\pspolygon[framearc=1,linestyle=dashed,linecolor=blue,linearc=.4](-1.55,-.3)(-1.55,2.4)(.2,2.4)(1.25,1.1)(.2,-.3)
\uput[0](2.5,-.1){\textcolor{blue}{$B$}}
\pspolygon[framearc=1,linestyle=dashed,linecolor=blue,linearc=.4]
(-1.65,-.4)(-1.65,2.5)(.2,2.5)(4.7,1.4)(4.7,.7)(.2,-.4)
\end{psmatrix}
\end{pspicture}
\caption{$w_{m+r} < w_{1} = w_{2} < w_{3} = \cdots = w_{m} = M$.}
\label{figPanConditionCycle+Path2}
\end{figure}
Then $\mathcal{P}_{\min}(A) =
\lbrace \lbrace 2 \rbrace, \lbrace 3, 4, \ldots, i-1 \rbrace,
\lbrace i+1, \ldots, m, 1 \rbrace \rbrace$,
$\mathcal{P}_{\min}(A \cup \lbrace i \rbrace) =
\lbrace A \cup \lbrace i \rbrace \setminus \lbrace 2 \rbrace, \lbrace 2 \rbrace \rbrace$,
and
$\mathcal{P}_{\min}(B) = \lbrace B \setminus \lbrace m+r \rbrace, \lbrace m+r \rbrace \rbrace$
or $\mathcal{P}_{\min}(B) = \lbrace B \rbrace$
(this last case can occur if there exists an edge $e^{'}$ in $G_{B}$ with $w(e^{'}) <w(e)$
or with $m+r$ as an end-vertex and $w(e^{'}) > w(e)$).
Therefore $A^{'} :=  A \cup \lbrace i \rbrace \setminus \lbrace 2 \rbrace
\in \mathcal{P}_{\min}(A \cup \lbrace i \rbrace)$,
but
$\mathcal{P}_{\min}(B)_{|A^{'}} = 
\lbrace A \setminus \lbrace 2 \rbrace \rbrace
\not= \mathcal{P}_{\min}(A)_{|A{'}}$
and it contradicts Theorem~\ref{thLGNuPNFNFSNuSSNNuSFABFABF}.
\end{proof}

\begin{lemma}
\label{lemM=maxu(e)=maxu(e)=M'}
Let $G= (N,E,w)$ be a weighted graph satisfying the
Star and Path conditions.
Then for all pairs $(C,C^{'})$ of adjacent simple cycles in $G$,
we have:
\begin{equation}
\hat{M} =
\max_{e \in \hat{E}(C)} w(e) =
\max_{e \in E(C)} w(e) =
\max_{e \in E(C^{'})} w(e) =
\max_{e \in \hat{E}(C^{'})} w(e) = \hat{M}^{'}.
\end{equation}
\end{lemma}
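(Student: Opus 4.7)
The plan is to first reduce the lemma to proving $M = M'$, by establishing $\hat{M} = M$ (and symmetrically $\hat{M}' = M'$) as a direct consequence of the Intermediary Cycle condition. Since $G$ satisfies the Star and Path conditions, Claim~\ref{itemPathAndStarConditionsImplyAnIntermediaryCycleCondition} of Proposition~\ref{CorCycleConditionGeneralPmin} guarantees that every simple cycle of $G$ satisfies the Intermediary Cycle condition. Renumbering the edges of $C$ so that $w_1 \leq w_2 \leq w_3 = \cdots = w_m = M$, if one had $\hat{M} > M$ then $w_2 \leq M < \hat{M}$, and the Intermediary Cycle condition would force $w(e) = \hat{M}$ for every $e \in \hat{E}(C)$ non-incident to vertex~$2$; however, $e_3, \ldots, e_m$ are themselves edges of $\hat{E}(C)$ non-incident to~$2$ of weight $M < \hat{M}$, a contradiction.

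To prove $M = M'$, I argue by contradiction, assuming without loss of generality $M > M'$. Let $F := E(C) \cap E(C')$. Since $C$ and $C'$ are adjacent, $F \neq \emptyset$, and every $f \in F$ satisfies $w(f) \leq M' < M$. The Weak Cycle condition on $C$ implies that at most two edges of $C$ have weight strictly less than $M$, so $|F| \in \{1,2\}$. In each case, the plan is to locate a vertex $y$ that is an endpoint of some shared edge $f_0 \in F$ such that the $C$-neighbour $h$ of $f_0$ at $y$ satisfies $w(h) = M$ and is distinct from the $C'$-neighbour $h'$ of $f_0$ at $y$. Then $\{f_0, h, h'\}$ is a star $S_3$ of $G$ centred at $y$ with weights $w(f_0) < M$, $w(h) = M$, $w(h') \leq M'$; since $M$ is uniquely attained among these three, the top two weights are not equal, violating the Star condition.

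For $|F| = 1$, say $F = \{e\}$ with $e = \{u, v\}$, the two $C$-neighbours of $e$ (at $u$ and at $v$) cannot both have weight less than $M$, as this would exhibit three edges of $C$ of weight less than $M$, contradicting the Weak Cycle condition; one of these neighbours at the corresponding endpoint $y$ plays the role of $h$. For $|F| = 2$, the two shared edges are precisely the two non-maximum edges of $C$, so every other edge of $C$ has weight $M$; a short sub-case distinction according to whether the two shared edges meet at a common vertex or not exhibits an endpoint $y$ of a shared edge $f_0$ whose $C$-neighbour at $y$ is not the other shared edge, hence has weight $M$. In both situations the distinctness $h \neq h'$ is immediate: $h = h'$ would force $h \in F$ while $w(h) = M > M'$. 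I expect the main technical obstacle to be this book-keeping in the $|F| = 2$ case, and in particular the sub-case where the two shared edges meet at a common vertex (where one must pick $y$ away from that vertex); once the neighbours are located the Star-condition contradiction closes the argument uniformly.
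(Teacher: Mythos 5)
Your proof is correct and follows essentially the same route as the paper's: both derive $\hat{M}=M$ (and $\hat{M}^{'}=M^{'}$) from the Intermediary Cycle condition, and then refute $M\not=M^{'}$ by exhibiting, at an endpoint of a common edge, a star $S_{3}$ formed by that common edge, its maximum-weight neighbour in the cycle with the larger maximum (whose existence is guaranteed by the Weak Cycle condition), and the corresponding neighbour in the other cycle, contradicting the Star condition. Your case analysis on $|E(C)\cap E(C^{'})|$ merely makes explicit the existence step that the paper states in one line.
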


\begin{proof}
We first consider $M = \max_{e \in E(C)} w(e)$ and $M^{'} = \max_{e \in E(C^{'})} w(e)$.
Let us consider two adjacent cycles $C$ and $C^{'}$
with $M < M^{'}$.
There is at least one edge $e_{1}$ common to $C$ and $C^{'}$.
Then we have $w_{1} \leq M < M^{'}$
and therefore $e_{1}$ is a non-maximum weight edge in $C^{'}$.
By Claim~\ref{itemPathConditionImpliesAWeakCycleCondition}
of Proposition~\ref{CorCycleConditionGeneralPmin}
the Weak Cycle condition is satisfied.
It implies that
there are at most two non-maximum weight edges in $C^{'}$.
Therefore there exists an edge $e_{2}^{'}$ in $C^{'}$ adjacent to $e_{1}$
with $w_{2}^{'} = M^{'}$.
As $M^{'} > M$,
$e_{2}^{'}$ is not an edge of $C$.
Let $e_{2}$ be the edge of $C$ adjacent to $e_{1}$ and $e_{2}^{'}$.
Then we have $w_{2} \leq M < M^{'}$
but it contradicts the Star condition applied to $\lbrace e_{1}, e_{2}, e_{2}^{'} \rbrace$
(two edge-weights are strictly smaller than $w_{2}^{'}$).
Therefore $M = M^{'}$.
Finally by Claim~\ref{itemPathAndStarConditionsImplyAnIntermediaryCycleCondition}
of Proposition~\ref{CorCycleConditionGeneralPmin}
the Intermediary Cycle condition is satisfied
and we have $\hat{M} = M = M^{'} = \hat{M}^{'}$.
\end{proof}

\noindent 
\framebox[\linewidth]{
\parbox{.95\linewidth}{
\textbf{Adjacent Cycles Condition.}
\it
For all pairs $(C,C^{'})$ of adjacent simple cycles in $G$ such that:
\begin{enumerate}[(a)]
\item
\label{enumPropV(C)-V(C)^{'}nonempty}
$V(C) \setminus V(C^{'}) \not= \emptyset$ and $V(C^{'}) \setminus V(C) \not= \emptyset$,
\item
\label{enumPropCatmost1non-maxweightchord}
$C$ has at most one non-maximum weight chord,
\item
\label{enumPropCC^{'}nomaxweightchord}
$C$ and $C^{'}$ have no maximum weight chord,
\item
\label{enumPropCandC'HaveNoCommonChord}
$C$ and $C^{'}$ have no common chord,
\end{enumerate}
\noindent
then $C$ and $C^{'}$ cannot have two common non-maximum weight edges.
Moreover $C$ and $C^{'}$ have a unique common non-maximum weight edge $e_{1}$ if and only if 
there are  non-maximum weight edges $e_{2} \in E(C) \setminus E(C^{'})$ and $e_{2}^{'} \in E(C^{'}) \setminus E(C)$
such that $e_{1}, e_{2}, e_{2}^{'}$ are adjacent and:
\begin{itemize}
\item $w_{1} = w_{2} = w_{2}^{'}$ if $|E(C)| \geq 4$ and $|E(C^{'})| \geq 4$.
\item $w_{1} = w_{2} \geq w_{2}^{'}$ or $w_{1} = w_{2}^{'} \geq w_{2}$ if $|E(C)| = 3$ or $|E(C^{'})| = 3$.
\end{itemize}
}}

\begin{proposition}\label{propAdjcyclesCond}
Let $G= (N,E,w)$ be a weighted graph.
If for every unanimity game $(N, u_{S})$
the $\mathcal{P}_{\min}$-restricted game $(N, \overline{u_{S}})$ is $\mathcal{F}$-convex,
then the Adjacent Cycles Condition is satisfied.
\end{proposition}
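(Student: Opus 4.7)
The plan is to reduce to the characterization of $\mathcal{F}$-convexity of all $(N, \overline{u_{S}})$ provided by Theorem~\ref{thLGNuPNFNFSNuSSNNuSFABFABF}: for every triple $(i, A, B)$ with $A, B, A \cup \{i\} \in \mathcal{F}$ and $A \subseteq B \subseteq N \setminus \{i\}$, and every $A^{'} \in \mathcal{P}_{\min}(A \cup \{i\})$, one must have $\mathcal{P}_{\min}(A)_{|A^{'}} = \mathcal{P}_{\min}(B)_{|A^{'}}$; I will assume the Adjacent Cycles Condition fails and exhibit an explicit triple violating this equality. The theorem applies because superadditivity of the $\overline{u_{S}}$ always holds for $\mathcal{P}_{\min}$ (cf.\ Theorem~\ref{thLGNEPNFaSNuSsFaSNuSin} and the introduction). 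Moreover, by Propositions~\ref{propStarCondInheritanceF-convexityP_min}, \ref{corPathCond}, Claim~\ref{itemCycleConditionGeneralPmin} of Proposition~\ref{CorCycleConditionGeneralPmin}, and Proposition~\ref{PropPanCondition}, the hypothesis forces the Star, Path, Cycle and Pan conditions, and by Lemma~\ref{lemM=maxu(e)=maxu(e)=M'}, $\hat{M} = \hat{M}^{'}$ for the adjacent pair $(C, C^{'})$. The Cycle condition then implies that in each cycle the non-maximum-weight edges (at most two and, if two, adjacent) meet at a distinguished vertex, and by condition~(c) every chord is incident to that vertex with weight equal to the second smallest cycle weight.

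\textbf{First claim.} Suppose for contradiction that $e_1, e_2 \in E(C) \cap E(C^{'})$ are both non-maximum weight. Then these are the two non-max edges of each cycle, adjacent at a common vertex $v$; write $e_1 = \{v, u\}$, $e_2 = \{v, x\}$ with $w_1 \leq w_2 < \hat{M}$. Using condition~(a), pick $a_1 \in V(C) \setminus V(C^{'})$ adjacent to $u$ along the max-weight arc of $C$, and set $i = a_1$, $A = V(C) \setminus \{a_1\}$, $B = (V(C) \cup V(C^{'})) \setminus \{a_1\}$. The chord-and-edge analysis above yields that $A$, $B$, and $A \cup \{i\} = V(C)$ are all connected. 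Computing $\mathcal{P}_{\min}(A)$ and $\mathcal{P}_{\min}(B)$: after deletion of minimum-weight edges, $u$ is isolated in $G_A$ (its only minimum-weight incident edge in $G_A$ being $e_1$, since $\{u, a_1\} \notin E(A)$), whereas $u$ stays connected to $x$ in $G_B$ via the max-weight arc $\{u, b_1\}, \ldots, \{b_q, x\}$ of $C^{'}$. Taking $A^{'}$ to be the component of $\mathcal{P}_{\min}(A \cup \{i\})$ containing $u$ (namely $V(C)$ itself when $w_1 < w_2$, and $V(C) \setminus \{v\}$ when $w_1 = w_2$), one obtains $\mathcal{P}_{\min}(A)_{|A^{'}} \neq \mathcal{P}_{\min}(B)_{|A^{'}}$, contradicting Theorem~\ref{thLGNuPNFNFSNuSSNNuSFABFABF}.

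\textbf{Second claim.} One direction (existence of $e_2, e_2^{'}$ with the stated weight relations forces $e_1$ to be the unique common non-max edge) follows from the first claim, since any second common non-max edge would violate it. For the substantive reverse direction, assume $e_1$ is the unique common non-max edge. If $C$ has a second non-max edge it must lie in $E(C) \setminus E(C^{'})$ by uniqueness; otherwise $C$ has $e_1$ as its only non-max edge, and one derives a contradiction to Theorem~\ref{thLGNuPNFNFSNuSSNNuSFABFABF} by a construction analogous to the one above (with $i \in V(C^{'}) \setminus V(C)$ adjacent to an endpoint of $e_1$, and $A \subseteq B$ chosen so that $e_1$ is the only minimum-weight edge in both $E(A \cup \{i\})$ and $E(B)$; the max-weight arc of $C$, present in $G_B$ but absent in $G_A$, then reconnects in $B$ vertices that are separated in $A$). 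Hence $e_2 \in E(C) \setminus E(C^{'})$ exists and is adjacent to $e_1$, and symmetrically $e_2^{'} \in E(C^{'}) \setminus E(C)$. The Star Condition applied to the star $\{e_1, e_2, e_2^{'}\}$ centered at their common vertex gives $w_1 = w_2 = w_2^{'}$ when the three edges form a true $S_3$ (the case $|E(C)|, |E(C^{'})| \geq 4$); when one cycle is a triangle, $\{e_1, e_2, e_2^{'}\}$ forms a triangle rather than an $S_3$, and the Path Condition applied to the length-two path crossing the triangle yields the weaker disjunction $w_1 = w_2 \geq w_2^{'}$ or $w_1 = w_2^{'} \geq w_2$.

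\textbf{Main obstacle.} The chief difficulty is the careful bookkeeping of the minimum-weight edges in $G_A$, $G_B$, and $G_{A \cup \{i\}}$: one must combine conditions~(b)--(d) with the Cycle condition to pin down exactly which edges of $G$ lie in each induced subgraph and their weights, and then treat separately the sub-cases $w_1 < w_2$ versus $w_1 = w_2$, and $|E(C)| = 3$ versus $|E(C)| \geq 4$, because the partition $\mathcal{P}_{\min}$ behaves qualitatively differently in each. The weight analysis in Part~2 when one cycle is a triangle (where the Star Condition does not apply directly) is the most delicate step.
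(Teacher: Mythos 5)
The paper does not actually reprove Proposition~\ref{propAdjcyclesCond} here (it is recalled from \citep{Skoda2016}), so your argument can only be judged on its own merits. Your overall strategy --- reduce to Claim 2) of Theorem~\ref{thLGNuPNFNFSNuSSNNuSFABFABF} and exhibit a triple $(i,A,B)$ with $\mathcal{P}_{\min}(A)_{|A^{'}} \neq \mathcal{P}_{\min}(B)_{|A^{'}}$ --- is the right one, but two steps do not go through as written. In the first claim you ``pick $a_{1} \in V(C) \setminus V(C^{'})$ adjacent to $u$ along the max-weight arc of $C$'': condition~\ref{enumPropV(C)-V(C)^{'}nonempty} only guarantees that $V(C) \setminus V(C^{'})$ is nonempty, not that it contains the neighbour of $u$ on $C$. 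If $a_{1}$ must be taken elsewhere on the arc and $w_{1} < w_{2}$, only $e_{1}$ is deleted in $G_{A}$, and the one non-maximum-weight chord $\lbrace v, c \rbrace$ that condition~\ref{enumPropCatmost1non-maxweightchord} still permits can reconnect the $u$-side piece of $C \setminus \lbrace a_{1} \rbrace$ to $v$, so that $\tilde{G}_{A}$ is connected and no contradiction arises; this case has to be handled (e.g.\ by passing to the smaller cycle cut off by the chord). You also never check that $\Sigma(B)$ is what you assume: $E(B)$ may contain edges outside $\hat{E}(C) \cup \hat{E}(C^{'})$ of smaller weight.

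The more serious gap is in the second claim. For three pairwise-adjacent non-maximum-weight edges the Star condition only forces the two \emph{largest} of $w_{1}, w_{2}, w_{2}^{'}$ to be equal; it leaves open precisely the case $w_{1} < w_{2} = w_{2}^{'}$, which is the case that must be excluded, and excluding it requires a further explicit $\mathcal{P}_{\min}$ construction using the $\mathcal{F}$-convexity hypothesis (this is exactly the case isolated in the proof of the proposition verifying the Weak Second Part of the Adjacent Cycles condition, and the upgrade from $w_{1} = w_{2} \geq w_{2}^{'}$ to $w_{1} = w_{2} = w_{2}^{'}$ when both cycles have length at least $4$ comes from the Pan condition, not from the Star condition). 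In addition, you do not justify that $e_{2}$ and $e_{2}^{'}$ meet $e_{1}$ at the \emph{same} endpoint --- a priori they could be incident to opposite ends of $e_{1}$, in which case $\lbrace e_{1}, e_{2}, e_{2}^{'} \rbrace$ is a path and the Star condition does not apply --- and the promised construction for the case where $e_{1}$ is the unique non-maximum-weight edge of $C$ is only sketched. These are genuine gaps, not matters of presentation.
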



Finally the following characterization of inheritance of $\mathcal{F}$-convexity
was established  in  \citep{Skoda2016}.

\begin{theorem}\label{thPathBranchCyclePanAdjacentCond}
Let $G = (N,E,w)$ be a weighted graph.
For every superadditive and $\mathcal{F}$-convex game $(N,v)$,
the $\mathcal{P}_{\min}$-restricted game $(N, \overline{v})$ is $\mathcal{F}$-convex if and only if
the Path, Star, Cycle, Pan, and Adjacent cycles conditions are satisfied.
\end{theorem}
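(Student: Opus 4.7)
The plan is to treat the two implications separately. The necessity direction is essentially an assembly of the propositions already recalled: each of the Star, Path, Cycle, Pan, and Adjacent Cycles conditions is the conclusion of one of Propositions~\ref{propStarCondInheritanceF-convexityP_min}, \ref{corPathCond}, \ref{CorCycleConditionGeneralPmin}, \ref{PropPanCondition}, and~\ref{propAdjcyclesCond}. Each of those propositions assumes only that $\overline{u_S}$ is $\mathcal{F}$-convex for every non-empty $S\subseteq N$; but each unanimity game $(N,u_S)$ is itself superadditive and (being supermodular) $\mathcal{F}$-convex, so the hypothesis of the theorem applied to $v=u_S$ yields precisely that assumption, and all five conditions follow.

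For sufficiency I would proceed in three stages. In the first stage I check that the five conditions imply the partition-refinement property of Claim~\ref{Claim3ThLGNEPNFaSNuSsFaSNuSin} of Theorem~\ref{thLGNEPNFaSNuSsFaSNuSin}, which then guarantees that $\overline{v}$ is superadditive whenever $v$ is. Concretely, for $A\subseteq B$, a walk in $G_B$ connecting two vertices of some $F\in\mathcal{P}_{\min}(A)$ to vertices of two different components of $\mathcal{P}_{\min}(A)$ would have to avoid the minimum-weight edges of $E(B)$; the Path and Star conditions then force this walk to contain an edge of $E(A)\setminus\Sigma(A)$ whose endpoints lie in different components of $\mathcal{P}_{\min}(A)$, a contradiction.

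In the second stage, with superadditivity in hand, I invoke Theorem~\ref{thLGNuPNFNFSNuSSNNuSFABFABF} and reduce inheritance of $\mathcal{F}$-convexity for the unanimity games to the combinatorial statement that for every $i\in N$ and every pair of connected subsets $A\subseteq B\subseteq N\setminus\{i\}$ with $A\cup\{i\}$ connected, the partitions $\mathcal{P}_{\min}(A)$ and $\mathcal{P}_{\min}(B)$ coincide on each $A'\in\mathcal{P}_{\min}(A\cup\{i\})$. Verifying this combinatorial statement is the main obstacle. The argument is a case analysis driven by how the minimum weight $\min_{e\in E(B)}w(e)$ compares with the thresholds governing $\mathcal{P}_{\min}(A)$ and $\mathcal{P}_{\min}(A\cup\{i\})$: merging two vertices of some $A'$ through $B$ would have to traverse a forbidden configuration which, by minimality, reduces to one of a star, a path, a simple cycle, a pan, or a pair of adjacent cycles, and each of these is ruled out by exactly one of the five weight conditions. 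The delicate point is showing that this short list of minimal obstructions exhausts all ways the partitions can disagree.

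Finally, I would lift inheritance from unanimity games to all superadditive $\mathcal{F}$-convex games $(N,v)$. Given connected $A,B$ with $A\cap B$ connected, the refinement relations between $\mathcal{P}_{\min}(A\cap B)$, $\mathcal{P}_{\min}(A)$, $\mathcal{P}_{\min}(B)$ and $\mathcal{P}_{\min}(A\cup B)$ established in the second stage allow us to rewrite
\[
\Delta\overline{v}(A,B)=\overline{v}(A\cup B)+\overline{v}(A\cap B)-\overline{v}(A)-\overline{v}(B)
\]
as a sum of terms, each of which is a $\mathcal{F}$-convexity inequality $\Delta v(F,F')$ or a superadditivity inequality $v(F\cup F')-v(F)-v(F')$ applied to connected subsets $F,F'$ lying inside a single component of $\mathcal{P}_{\min}(A\cup B)$; each such term is non-negative by hypothesis on $v$. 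The bookkeeping of matching each component of $\mathcal{P}_{\min}(A)$ and $\mathcal{P}_{\min}(B)$ to the components of $\mathcal{P}_{\min}(A\cap B)$ it aggregates and to the component of $\mathcal{P}_{\min}(A\cup B)$ containing it is where the partition properties from the second stage are used crucially.
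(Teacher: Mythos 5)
The paper itself does not prove Theorem~\ref{thPathBranchCyclePanAdjacentCond}: it is quoted from \citep{Skoda2016}, so there is no in-paper proof to compare against step by step. Your necessity direction is nevertheless complete and correct: each unanimity game $u_{S}$ is superadditive and supermodular, hence $\mathcal{F}$-convex, so the theorem's hypothesis applied to $v=u_{S}$ places you exactly in the hypotheses of Propositions~\ref{propStarCondInheritanceF-convexityP_min}, \ref{corPathCond}, \ref{CorCycleConditionGeneralPmin}, \ref{PropPanCondition} and~\ref{propAdjcyclesCond}, which deliver the five conditions.

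The sufficiency direction, however, is a plan rather than a proof, and the two places where you defer are precisely where all the mathematical content lives. Your stage two --- showing that the Star, Path, Cycle, Pan and Adjacent Cycles conditions force, for every $i$ and every pair of connected sets $A\subseteq B\subseteq N\setminus\lbrace i\rbrace$ with $A\cup\lbrace i\rbrace$ connected, the equality $\mathcal{P}_{\min}(A)_{|A^{'}}=\mathcal{P}_{\min}(B)_{|A^{'}}$ for all $A^{'}\in\mathcal{P}_{\min}(A\cup\lbrace i\rbrace)$ --- is exactly the theorem's difficulty, and ``showing that this short list of minimal obstructions exhausts all ways the partitions can disagree'' is the entire case analysis, not a detail to be filled in later. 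Stage three has the same character: inheritance of $\mathcal{F}$-convexity for all unanimity games does not formally imply it for all superadditive $\mathcal{F}$-convex games (a general such game is not a nonnegative combination of unanimity games), so the decomposition of $\Delta\overline{v}(A,B)$ into supermodularity and superadditivity terms on connected pieces must actually be exhibited; you name the bookkeeping but do not carry it out. A minor further point: your stage one is unnecessary and slightly misleading --- for $\mathcal{P}_{\min}$ the refinement property of Claim~\ref{Claim3ThLGNEPNFaSNuSsFaSNuSin} of Theorem~\ref{thLGNEPNFaSNuSsFaSNuSin} holds unconditionally (if $A\subseteq B$ then $\Sigma(B)\cap E(A)$ is either empty or equal to $\Sigma(A)$, so any path in $E(A)\setminus\Sigma(A)$ survives in $E(B)\setminus\Sigma(B)$), and the paper already records that superadditivity is always inherited for $\mathcal{P}_{\min}$; no weight condition is needed there.
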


\section{Complexity of inheritance of $\mathcal{F}$-convexity with $\mathcal{P}_{\min}$}
\label{SectionComplexityOfInheritance}

We now prove that to verify Star, Path,
Cycle,
Pan, and Adjacent Cycles conditions we only have to consider
a polynomial number of paths, stars, pans and cycles.
Therefore we can build a polynomial algorithm to decide for a given weighted graph
if there is inheritance of $\mathcal{F}$-convexity from underlying games
to $\mathcal{P}_{\min}$-restricted games.
We assume that graphs are represented by their adjacency matrices.

\begin{proposition}
Let $G= (N,E,w)$ be a weighted graph.
Star condition can be verified
in $O(n^{2})$ time.
\end{proposition}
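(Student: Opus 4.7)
The plan is to reduce the Star Condition to a simple per-vertex check that can be performed in a single pass over each row of the adjacency matrix.

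First I would reformulate the Star Condition locally at each vertex. Every star $S_3$ in $G$ is centered at some vertex $v$ of degree at least~$3$ and uses three of the edges incident to $v$. So the Star Condition is equivalent to: for every vertex $v$ with $\deg(v)\geq 3$ and every three edges $e_1,e_2,e_3$ incident to $v$, the two largest weights among $w(e_1),w(e_2),w(e_3)$ are equal. Let $a_1\leq a_2\leq\cdots\leq a_d$ be the multiset of weights of edges incident to $v$, with $d=\deg(v)\geq 3$. By considering for each $j\in\{2,\ldots,d-1\}$ the triple $(a_1,a_j,a_d)$, the condition forces $a_j=a_d$; hence the Star Condition at $v$ is equivalent to
\[
a_2=a_3=\cdots=a_d=:M_v,
\]
i.e.\ all edges incident to $v$, except at most one, carry the same (maximum) weight $M_v$.

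Next I would describe the algorithm. For each vertex $v\in N$, scan the row of the adjacency matrix $A$ corresponding to $v$: in a single linear pass compute the degree $d_v$, the maximum value $M_v$ among the non-zero entries, and the number $k_v$ of non-zero entries strictly smaller than $M_v$. If $d_v\leq 2$ no $S_3$ is centered at $v$ and nothing has to be checked; otherwise the reformulation says the Star Condition holds at $v$ if and only if $k_v\leq 1$. Declare that the Star Condition is satisfied if and only if this test succeeds at every vertex.

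For the complexity, each row scan takes $O(n)$ time and there are $n$ rows, yielding $O(n^2)$ total. Correctness follows from the equivalence established in the first step, since every $S_3$ subgraph of $G$ is captured by the three-edge test at its center.

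The only delicate point is the reformulation itself; once the implication "Star Condition at $v$ $\Longleftrightarrow$ at most one incident edge has weight below the maximum" is established, the rest is a direct matrix traversal, and no search over triples of neighbours is required.
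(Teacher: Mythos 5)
Your proposal is correct and follows essentially the same approach as the paper: both reduce the Star Condition to a per-vertex check that at most one incident edge has weight strictly below the maximum incident weight, verified by an $O(n)$ scan of each row of the adjacency matrix (the paper implements this as a streaming min/max procedure, you as an explicit count $k_v\leq 1$, which is an equivalent and arguably cleaner formulation). The complexity analysis $O(n)$ per vertex, $O(n^2)$ overall, matches the paper's.
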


\begin{proof}
For a given node $i$ in $N$
we denote by $E_{i}$ the set of edges incident to $i$.
If $|E_{i}| \geq 3$ then we select two edges $e_{1}$, $e_{2}$ in $E_{i}$
and replace $E_{i}$ by $E_{i}\setminus \lbrace e_{1}, e_{2} \rbrace$.
If $w_{1} < w_{2}$ we set $\min = w_{1}$, $\max = w_{2}$,
otherwise $\min = w_{2}$, $\max = w_{1}$.
Then we apply the following procedure.
While $E_{i} \not= \emptyset$
we select an edge $e \in E_{i}$.
If $\min = \max$
then if $w(e) \leq \max$
we set $\min = w(e)$,
otherwise we stop.
Otherwise we have $\min < \max$
and if $w(e) \not= \max$ we stop.
Otherwise we replace $E_{i}$ by $ E_{i} \setminus \lbrace e \rbrace$.
If the procedure stops with $E_{i} \not= \emptyset$,
then there is a contradiction to the Star condition.
Otherwise the Star condition is satisfied for the star centered in $i$.
Hence Star condition can be verified for the star centered in $i$
in $O(n)$ time.
As we have to repeat this procedure for all nodes in $N$
we can verify Star condition
in $O(n^{2})$ time.
\end{proof}

We now assume that the Star condition is satisfied by $G$.

Adding an edge to a spanning tree creates a unique cycle,
called a \textit{fundamental cycle}.
Let $T$ be a minimum weight spanning tree of $G$
and $\gamma = \lbrace 1, e_{1}, 2, e_{2}, \ldots, m, e_{m}, m+1 \rbrace$
be an elementary path of $G$.
Let $e_{j_{1}}, e_{j_{2}}, \ldots, e_{j_{k}}$,
with $1 \leq j_{1} < j_{2} < \ldots < j_{k} \leq m$,
be the edges in $E(\gamma) \setminus E(T)$.
For every $e_{j_{l}} \in E(\gamma) \setminus E(T)$,
we denote by $C_{l}$ the fundamental cycle in $G$
associated with $T$ and $e_{j_{l}}$.

\begin{remark}
Of course two fundamental cycles associated with a spanning tree
can be adjacent.
\end{remark}

We suppose that $G$ satisfies the Star condition
and that every fundamental cycle associated with $T$
satisfies the Weak Cycle condition.
(Note that we only have to check for each fundamental cycle
that there are at most two edges of non-maximum weight
and that these edges are adjacent.
And there are exactly $m - n + 1$ fundamental cycles associated with $T$.
)
For each fundamental cycle $C_{l}$,
we denote by $M_{l}$ its maximum edge weight.
As $T$ is a minimum weight spanning tree
we have $w_{j_{l}} = M_{l} \geq w(e)$ for all $e \in E(C_{l})$.
Let $\gamma_{l}$ be the restriction of $\gamma$ to $C_{l}$
and $v_{l}$, $v_{l}^{'}$ its end-vertices.
Let $\gamma_{l}^{'} := C_{l} \setminus \gamma_{l}$ be the other path in $C_{l}$
linking $v_{l}$ and $v_{l}^{'}$
as represented in Figure~\ref{figTreeCycleTwoPaths}.
\begin{figure}[!h]
\centering
\begin{pspicture}(0,-.1)(0,1.8)
\tiny
\begin{psmatrix}[mnode=circle,colsep=.8,rowsep=1]
{$v_{l}$}	& {} & {}	& {$v_{l}^{'}$}\\
{} 	& {}	& {}
\psset{arrows=-, shortput=nab,labelsep={0.05}}
\tiny
\ncline[linecolor=cyan]{1,1}{1,2}
\ncline[linestyle=dashed,linecolor=cyan]{1,2}{1,3}_{$e_{j_{l}}$}
\ncline[linecolor=cyan]{1,3}{1,4}
\psset{labelsep=-.05}
\ncline[linecolor=orange]{2,2}{2,3}
\normalsize
\uput[0](-1,.7){\textcolor{blue}{$C_{l}$}}
\uput[0](-1.8,1.9){\textcolor{cyan}{$\gamma_{l}$}}
\uput[0](-2.8,.7){\textcolor{orange}{$\gamma_{l}^{'}$}}
\ncline[linecolor=orange]{1,1}{2,1}
\ncline[linecolor=orange]{2,1}{2,2}
\ncline[linecolor=orange]{2,3}{1,4}
\end{psmatrix}
\end{pspicture}
\caption{Paths $\gamma_{l}$ and $\gamma_{l}^{'}$ in $C_{l}$.}
\label{figTreeCycleTwoPaths}
\end{figure}
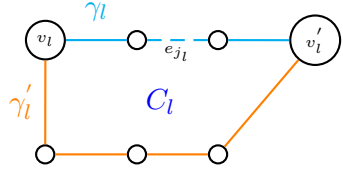

Note that $\gamma_{l}^{'}$ contains at least one edge
otherwise $v_{l} = v_{l}^{'}$ and $\gamma$ is not an elementary path.
We denote by $\gamma_{l,l+1}$ the part of $\gamma$
linking $v_{l}^{'}$ to $v_{l+1}$.
$\gamma_{l,l+1}$ can be reduced to one vertex (if $v_{l}^{'} = v_{l+1}$).
We denote by $\gamma_{0,1}$
(resp. $\gamma_{k, m+1}$)
the part of $\gamma$ linking $1$ to $v_{1}$
(resp. $v_{k}^{'}$ to $m+1$).
$\gamma_{0,1}$ (resp. $\gamma_{k, m+1}$) can also be reduced to one vertex.
Then $\tilde{\gamma} = \gamma_{0,1} \cup \gamma_{1}^{'} \cup \gamma_{1,2}
\cup \gamma_{2}^{'} \cup \cdots \cup \gamma_{k-1,k} \cup \gamma_{k}^{'}
\cup \gamma_{k, m+1}$
corresponds to a path in $T$ linking $1$ to $m+1$.
Let us observe that
if two successive fundamental cycles $C_{l}$, $C_{l+1}$ are adjacent
then $\gamma_{l}{'} \cup \gamma_{l, l+1} \cup \gamma_{l+1}^{'}$
is not a simple path
($\gamma_{l, l+1}$ is reduced to one vertex
but $\gamma_{l}^{'}$ and $\gamma_{l+1}^{'}$ have common edges). 
In this case we delete the common edges to get a simple path
(in fact we get a unique elementary path in $T$)
as represented in Figures~\ref{figTree2CyclesThreePaths}
and \ref{figTree2CyclesThreePaths2}.
Keeping the same notations for simplicity,
we now consider that $\tilde{\gamma}$\label{DefinitionOfTildeGamma} is an elementary path in $T$
linking $1$ to $m+1$.

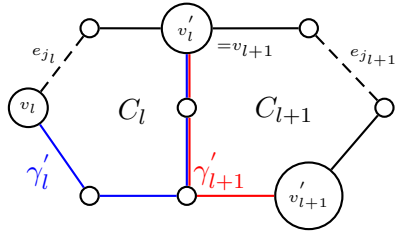
\begin{figure}[!h]
\centering
\begin{pspicture}(0,-.3)(0,2.6)
\tiny
\begin{psmatrix}[mnode=circle,colsep=0.4,rowsep=0.4]
	& {}	&  	&  {$v_{l}^{'}$} & & { }\\
{$v_{l}$} 	& 	&	&  {} & & & { }\\
	& {} 	& 	& {} & & {$v_{l+1}^{'}$}
\psset{arrows=-, shortput=nab,labelsep={0.02}}
\tiny
\ncline[linecolor=blue]{2,1}{3,2}
\ncline[linecolor=blue]{3,2}{3,4}
\ncline[linestyle=dashed]{2,1}{1,2}^{$e_{j_{l}}$}
\ncline[linecolor=blue,shadow=true,shadowsize=1pt,shadowangle=0,shadowcolor=red]{2,4}{3,4}
\ncline[linecolor=blue,shadow=true,shadowsize=1pt,shadowangle=0,shadowcolor=red]{2,4}{1,4}
\ncline{1,2}{1,4}
\ncline{1,4}{1,6}
\ncline[linestyle=dashed]{1,6}{2,7}^{$e_{j_{l+1}}$}
\ncline{2,7}{3,6}
\ncline[linecolor=red]{3,6}{3,4}
\normalsize
\uput[0](-1.5,2){$\scriptscriptstyle{=v_{l+1}}$}
\uput[0](-4,.4){$\textcolor{blue}{\gamma_{l}^{'}}$}
\uput[0](-1.8,.4){$\textcolor{red}{\gamma_{l+1}^{'}}$}
\uput[0](-1,1.2){$C_{l+1}$}
\uput[0](-2.8,1.2){$C_{l}$}
\end{psmatrix}
\end{pspicture}
\caption{We replace $\gamma_{l}{'} \cup \gamma_{l, l+1} \cup \gamma_{l+1}^{'}$
by $(\gamma_{l}{'} \cup \gamma_{l+1}^{'})
\setminus (\gamma_{l}^{'} \cap \gamma_{l+1}^{'})$.}
\label{figTree2CyclesThreePaths}
\end{figure}

\begin{figure}[!h]
\centering
\begin{pspicture}(0,-.3)(0,2.6)
\tiny
\begin{psmatrix}[mnode=circle,colsep=0.4,rowsep=0.4]
	& {}	&  	&  {$v_{l}^{'}$} & & { }\\
{$v_{l}$} 	& 	&	&  {} & & & { }\\
	& {} 	& 	& {} & & {$v_{l+1}^{'}$}
\psset{arrows=-, shortput=nab,labelsep={0.02}}
\tiny
\ncline[linecolor=blue,shadow=true,shadowsize=1pt,shadowangle=-90,shadowcolor=red]{2,1}{2,4}
\ncline[linecolor=red]{2,1}{3,2}
\ncline[linecolor=red]{3,2}{3,4}
\ncline[linestyle=dashed]{2,1}{1,2}^{$e_{j_{l}}$}
\ncline[linecolor=blue,shadow=true,shadowsize=1pt,shadowangle=0,shadowcolor=red]{2,4}{1,4}
\ncline{1,2}{1,4}
\ncline{1,4}{1,6}
\ncline[linestyle=dashed]{1,6}{2,7}^{$e_{j_{l+1}}$}
\ncline{2,7}{3,6}
\ncline[linecolor=red]{3,6}{3,4}
\normalsize
\uput[0](-1.5,2){$\scriptscriptstyle{=v_{l+1}}$}
\uput[0](-3.5,1.7){$\textcolor{blue}{\gamma_{l}^{'}}$}
\uput[0](-1.8,.4){$\textcolor{red}{\gamma_{l+1}^{'}}$}
\uput[0](-1,1.2){$C_{l+1}$}
\uput[0](-2.8,1.8){$C_{l}$}
\end{psmatrix}
\end{pspicture}
\caption{We can have $\gamma_{l}^{'} \subset \gamma_{l+1}^{'}$.}
\label{figTree2CyclesThreePaths2}
\end{figure}
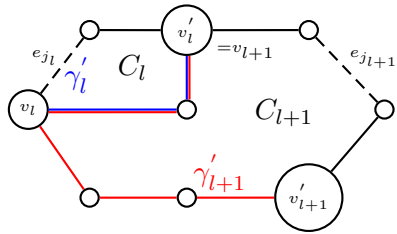

\begin{lemma}
\label{lemIfC_lAndC_l+1AreAdjacentThenM_l=M_l+1}
If $C_{l}$ and $C_{l+1}$ are adjacent
then $M_{l} = M_{l+1}$.
\end{lemma}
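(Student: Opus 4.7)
The plan is to adapt the proof of Lemma~\ref{lemM=maxu(e)=maxu(e)=M'}. That earlier lemma was established using the Star condition together with the Path condition on $G$, but its argument only really uses the Weak Cycle condition on the specific cycles involved. In the present setting we are assuming the Star condition on $G$ together with the Weak Cycle condition on every fundamental cycle associated with $T$, which is exactly what is needed to transport the argument to $C_l$ and $C_{l+1}$.

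I would begin by noting that, because the only non-tree edge of $C_l$ (resp.\ $C_{l+1}$) is $e_{j_l}$ (resp.\ $e_{j_{l+1}}$) and these are distinct, any common edge of $C_l$ and $C_{l+1}$ must be a tree edge of $T$. Since the two cycles are assumed adjacent, such a common edge $e_1$ exists. Arguing by contradiction, assume $M_l < M_{l+1}$. Then $w_1 \leq M_l < M_{l+1}$, so $e_1$ is a non-maximum weight edge of $C_{l+1}$. Applying the Weak Cycle condition to $C_{l+1}$ produces a neighbor $e_2'$ of $e_1$ in $C_{l+1}$ with $w_2' = M_{l+1}$, and the same strict inequality prevents $e_2'$ from lying in $E(C_l)$. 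Picking the second edge $e_2 \in E(C_l)$ incident to the common endpoint $v$ of $e_1$ and $e_2'$, I would exhibit $\{e_1, e_2, e_2'\}$ as a genuine $S_3$ star centered at $v$ (pairwise distinct, with distinct second endpoints) whose maximum weight $w_2'$ is strictly larger than both $w_1$ and $w_2$; this contradicts the Star condition. The conclusion $M_l = M_{l+1}$ then follows by the symmetric argument with the roles of $C_l$ and $C_{l+1}$ swapped.

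The only delicate point, and what I would treat as the main obstacle, is to verify that the needed neighbor $e_2'$ with weight $M_{l+1}$ really exists in $C_{l+1}$. This follows from the structural content of the Weak Cycle condition: the non-maximum weight edges of $C_{l+1}$ are at most two in number and are consecutive in the cycle, so any non-maximum weight edge (in particular $e_1$) has at least one of its two cyclic neighbors of maximum weight $M_{l+1}$. This replaces the use of the Path condition in Lemma~\ref{lemM=maxu(e)=maxu(e)=M'} without weakening the conclusion, keeping the whole argument inside the weaker hypotheses that are actually available at this stage of the algorithm.
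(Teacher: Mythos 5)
Your proof is correct and follows essentially the same route as the paper: the paper's proof of this lemma is a one-line appeal to Lemma~\ref{lemM=maxu(e)=maxu(e)=M'}, whose own argument (common edge $e_{1}$, a maximum-weight neighbour $e_{2}^{'}$ supplied by the Weak Cycle condition, and a Star-condition contradiction on $\lbrace e_{1}, e_{2}, e_{2}^{'} \rbrace$) is exactly what you reproduce. Your explicit observation that only the Weak Cycle condition on the two cycles is needed, rather than the full Path condition on $G$, is precisely the point the paper leaves implicit when invoking that lemma under the weaker hypotheses.
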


\begin{proof}
As the Star condition is satisfied
and as the fundamental cycles associated with $T$
satisfy the Weak Cycle condition,
the result is an immediate consequence
of Lemma~\ref{lemM=maxu(e)=maxu(e)=M'} page~\pageref{lemM=maxu(e)=maxu(e)=M'}.
\end{proof}

\begin{lemma}
\label{lemEitherM_l=w_1Orw_mOrThereExistsAnEdgeeInE(TildeGamma)SuchThatM_l=w(e)}
Let $G= (N,E,w)$ be a weighted graph
satisfying the Star condition.
Let us consider a minimum weight spanning tree $T$
and an elementary path $\gamma = \lbrace 1, e_{1}, 2, e_{2}, \ldots, m, e_{m}, m+1 \rbrace$
in $G$.
Let us assume that every fundamental cycle associated with $T$
satisfies the Weak Cycle condition.
Then for every fundamental cycle $C_{l}$ associated with $T$ and $e_{j_{l}}$
in $E(\gamma) \setminus E(T)$,
with $1 \leq l \leq k$,
either $M_{l} = w_{1}$ or $w_{m}$
or there exists an edge $e \in E(\tilde{\gamma})$ such that
$M_{l} = w(e)$.
\end{lemma}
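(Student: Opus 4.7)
I begin by noting that, since $T$ is a minimum weight spanning tree and $e_{j_l}$ is the unique non-tree edge of its fundamental cycle $C_l$, the MST cycle-property gives $M_l = w(e_{j_l}) = w_{j_l}$. The boundary cases are then immediate: if $j_l = 1$ we have $M_l = w_1$, and if $j_l = m$ we have $M_l = w_m$. It remains to treat the case $1 < j_l < m$ and exhibit an edge $e \in E(\tilde{\gamma})$ with $w(e) = M_l$.

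The idea is to propagate $M_l$ along $\gamma$ by iterated use of the Star condition. Fix the endpoint $b_l = j_l+1$ of $e_{j_l}$ and let $f_R$ denote the edge of $C_l$ other than $e_{j_l}$ that is incident to $b_l$. When $e_{j_l+1} \notin C_l$, the three edges $e_{j_l}, e_{j_l+1}, f_R$ meeting at $b_l$ are distinct; Weak Cycle on $C_l$ gives $w(f_R) \leq M_l = w(e_{j_l})$, and the Star condition applied to this triple then forces either $w(e_{j_l+1}) = M_l$ or $w(f_R) = M_l$. In the first case, a tree edge $e_{j_l+1}$ lies in the tree segment $\gamma_{l,l+1}$ of $\tilde{\gamma}$ and we are done, while a non-tree edge $e_{j_l+1} = e_{j_{l+1}}$ gives $M_{l+1} = M_l$ by Lemma~\ref{lemIfC_lAndC_l+1AreAdjacentThenM_l=M_l+1}, allowing us to iterate on $C_{l+1}$. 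In the second case, $f_R$ is a tree edge of $\gamma_l'$ incident to $v_l' = b_l$, belonging to the tree path that realises the $C_l$-bypass in $\tilde{\gamma}$. When $e_{j_l+1} \in C_l$ (so $f_R = e_{j_l+1}$ and $\gamma_l$ extends beyond $b_l$), the Star triple at $b_l$ collapses; I then shift one vertex along $\gamma$ and apply Star at $j_l+2$ to the triple $\{e_{j_l+1}, e_{j_l+2}, g\}$, where $g$ is the other edge of $C_l$ at $j_l+2$. This is again a valid triple because $\gamma$, being elementary, cannot traverse the entire cycle $C_l$. A symmetric propagation at $a_l$ handles the left-hand direction, and Weak Cycle guarantees at least one direction can be started; in the only exceptional case, namely $|C_l| = 3$ with both cycle-neighbours of $e_{j_l}$ non-maximum, the Star condition at $b_l$ directly forces $w(e_{j_l+1}) = M_l$, since $w(f_R) < M_l$ excludes the pairing $\{e_{j_l}, f_R\}$ as the two maxima.

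The iteration terminates because each step advances the active edge by one position along $\gamma$; eventually we reach $e_1$ (giving $M_l = w_1$), $e_m$ (giving $M_l = w_m$), or exhibit a tree edge in $\tilde{\gamma}$ of weight $M_l$. The hard part will be the bookkeeping around adjacent fundamental cycles: when $C_l$ and $C_{l\pm 1}$ share tree edges, those shared edges are deleted from $\tilde{\gamma}$, so a candidate max-weight edge of $\gamma_l'$ may fail to survive; here Lemma~\ref{lemIfC_lAndC_l+1AreAdjacentThenM_l=M_l+1} is used repeatedly to transfer $M_l = M_{l+1}$ between adjacent cycles and to locate a non-overlapping max-weight edge of $\gamma_l'$ or $\gamma_{l\pm 1}'$ that does remain in $\tilde{\gamma}$.
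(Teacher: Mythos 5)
Your opening moves are sound and match the paper's setup: the minimum-spanning-tree property does give $M_{l}=w_{j_{l}}$, and a star triple at an endvertex of $e_{j_{l}}$ does force a second edge of weight $M_{l}$ there. But the proof has a genuine gap exactly where you flag ``the hard part'': the whole difficulty of this lemma is not producing \emph{some} edge of weight $M_{l}$, it is producing one that actually lies in $E(\tilde{\gamma})$, and $\tilde{\gamma}$ is \emph{not} obtained from $\gamma$ by simply substituting each $\gamma_{l}$ with $\gamma_{l}^{'}$ -- when consecutive fundamental cycles are adjacent, the common edges of $\gamma_{l}^{'}$ and $\gamma_{l+1}^{'}$ are deleted (Figures~\ref{figTree2CyclesThreePaths} and \ref{figTree2CyclesThreePaths2}). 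Your argument asserts that Lemma~\ref{lemIfC_lAndC_l+1AreAdjacentThenM_l=M_l+1} can be ``used repeatedly\ldots to locate a non-overlapping max-weight edge that does remain in $\tilde{\gamma}$,'' but this is precisely the statement to be proved, and it is never carried out. The paper's proof is devoted almost entirely to this bookkeeping: it starts from an edge $\tilde{e}_{1}$ (and later $\tilde{e}_{2}$) chosen \emph{inside} $\tilde{\gamma}$ and incident to $v_{l}^{'}$ or $m+1$, identifies which $\gamma_{r}^{'}$ it belongs to, replaces $C_{l}$ by that adjacent $C_{r}$ (using Lemma~\ref{lemIfC_lAndC_l+1AreAdjacentThenM_l=M_l+1} to keep the same maximum), and only then applies Star and Weak Cycle; that ordering is what guarantees the exhibited edge survives in $\tilde{\gamma}$.

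Two of your intermediate claims fail as stated. First, in your ``first case'' you conclude that a tree edge $e_{j_{l}+1}\notin E(C_{l})$ of weight $M_{l}$ ``lies in the tree segment $\gamma_{l,l+1}$ of $\tilde{\gamma}$''; but when $v_{l}^{'}=v_{l+1}$ the edge $e_{j_{l}+1}$ can be an edge of $\gamma_{l+1}$ (a tree edge of $C_{l+1}$ traversed by $\gamma$), and $\gamma_{l+1}$ is replaced by $\gamma_{l+1}^{'}$ in $\tilde{\gamma}$, so this edge need not belong to $\tilde{\gamma}$ at all. Second, in your ``second case'' the edge $f_{R}\in\gamma_{l}^{'}$ incident to $v_{l}^{'}$ is exactly the kind of edge that gets deleted when $\gamma_{l}^{'}$ and $\gamma_{l+1}^{'}$ overlap. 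Moreover, since the Star condition only yields a disjunction, you cannot choose which branch you land in, so your termination argument (``eventually we reach $e_{1}$ or $e_{m}$'') is not secured: the propagation can be forced into a cycle edge that is subsequently deleted. Finally, the degenerate situation $|E(\tilde{\gamma})|=1$, where the paper must fall back on a star at vertex $1$ to conclude $M_{l}=w_{1}$ (Figures~\ref{figProofTreePath2PathsInC_lSize2} and \ref{figProofTreePath2PathsInC_l4}), is not covered by your scheme. The overall strategy could probably be repaired, but as written the proof does not establish the lemma.
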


\begin{proof}
Let us recall that $w_{j_{l}} = M_{l}$.\\
\textbf{
Case 1
}
Let us consider $C_{k}$
and
let us assume $v_{k}^{'} = m+1$,
\emph{i.e.},
$C_{k}$ incident to $m+1$.
Then $e_{m} \in E(\gamma_{k})$.
If $e_{j_{k}} = e_{m}$ or
if $w_{m} = M_{k}$ the result is satisfied.
Hence we can assume
$e_{j_{k}} \not= e_{m}$ 
and $w_{m} < M_{k}$.
Let $\tilde{e}_{1}$ be the (last) edge of
$\tilde{\gamma}$
incident to $m+1$.
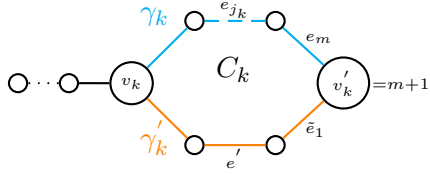
\begin{figure}[!h]
\centering
\begin{pspicture}(0,-.1)(0,2.4)
\tiny
\begin{psmatrix}[mnode=circle,colsep=0.4,rowsep=0.3]
	& 	&  	&  {} & & { }\\
{}  & {} & {$v_{k}$}		&   & & & {$v_{k}^{'}$}\\
	&  	& 	& {} & & {}
\psset{arrows=-, shortput=nab,labelsep={0.02}}
\tiny
\ncline[linestyle=dotted]{2,1}{2,2}
\ncline{2,2}{2,3}
\ncline[linecolor=orange]{2,3}{3,4}
\ncline[linecolor=cyan]{2,3}{1,4}
\ncline[linestyle=dashed,linecolor=cyan]{1,4}{1,6}^{$e_{j_{k}}$}
\ncline[linecolor=cyan]{1,6}{2,7}^{$e_{m}$}
\ncline[linecolor=orange]{2,7}{3,6}^{$\tilde{e}_{1}$}
\ncline[linecolor=orange]{3,4}{3,6}_{$e^{'}$}
\normalsize
\uput[0](1.25,.8){$\scriptscriptstyle{=m+1}$}
\uput[0](-1.8,1.7){$\textcolor{cyan}{\gamma_{k}}$}
\uput[0](-1.8,.1){$\textcolor{orange}{\gamma_{k}^{'}}$}
\uput[0](-.8,1){$C_{k}$}
\end{psmatrix}
\end{pspicture}
\caption{$\tilde{e}_{1}$ in $\gamma^{'}_{k}$.}
\label{figProofTreePath2PathsInC_l}
\end{figure}
\begin{figure}[!h]
\centering
\begin{pspicture}(0,-.1)(0,2.6)
\tiny
\begin{psmatrix}[mnode=circle,colsep=0.4,rowsep=0.3]
					&			&  					& {} & & { }\\
{$v_{l}$} & {}	& {$v_{k}$} & 	 & &		 & {$v_{k}^{'}$}\\
					& {} 		& 					& {} & & {}\\
					& 		&						& {} & &  & {}
\psset{arrows=-, shortput=nab,labelsep={0.02}}
\tiny
\ncline[linestyle=dotted]{2,1}{2,2}
\ncline{2,2}{2,3}
\ncline[linecolor=red,shadow=true,shadowsize=1.5pt,shadowangle=40,shadowcolor=orange]{2,3}{3,4}
\ncline[linecolor=cyan]{2,3}{1,4}
\ncline[linestyle=dashed,linecolor=cyan]{1,4}{1,6}^{$e_{j_{k}}$}
\ncline[linecolor=cyan]{1,6}{2,7}^{$e_{m}$}
\ncline[linecolor=orange,shadow=true,
shadowsize=1.5pt,shadowangle=-40,shadowcolor=red]{3,6}{2,7}_{$\hat{e}$}
\ncline[linecolor=red,shadow=true,shadowsize=1.5pt,shadowangle=90,shadowcolor=orange]{3,4}{3,6}
\ncline[linecolor=red]{2,1}{3,2}
\ncline[linecolor=red]{3,2}{4,4}
\ncline[linecolor=red]{4,4}{4,7}_{$e^{'}$}
\ncline[linecolor=red]{4,7}{2,7}_{$\tilde{e}_{1}$}
\normalsize
\uput[0](.35,1.4){$\scriptscriptstyle{=m+1}$}
\uput[0](-2.8,2.4){$\textcolor{cyan}{\gamma_{k}}$}
\uput[0](-2.4,1.4){$\textcolor{orange}{\gamma_{k}^{'}}$}
\uput[0](-3.2,0){$\textcolor{red}{\gamma_{l}^{'}}$}
\uput[0](-1.7,1.6){$C_{k}$}
\uput[0](-2.7,.6){$C_{l}$}
\end{psmatrix}
\end{pspicture}
\caption{$\tilde{e}_{1}$ in $\gamma^{'}_{l}$ with $l < k$.}
\label{figProofTreePath2PathsInC_l2}
\end{figure}
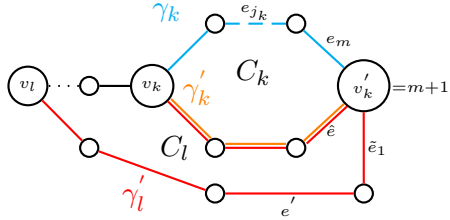
By definition of $\tilde{\gamma}$,
$\tilde{e}_{1}$ belongs
to a path $\gamma_{l}^{'}$ in $C_{l}$
with $1 \leq l \leq k$
as represented in Figure~\ref{figProofTreePath2PathsInC_l}
for $l = k$
and Figure~\ref{figProofTreePath2PathsInC_l2}
for $l <k$.
Note that if $l < k$ then
$C_{l}, C_{l+1}, \ldots, C_{k-1}, C_{k}$ are adjacent
and by Lemma~\ref{lemIfC_lAndC_l+1AreAdjacentThenM_l=M_l+1}
we have $M_{l} = M_{k}$.
If $w(\tilde{e}_{1}) = M_{k}$
we take $e = \tilde{e}_{1}$.
Otherwise
we have $w(\tilde{e}_{1}) < M_{k}$.
Let $\hat{e}$ be the edge in $E(C_{l}) \setminus \lbrace \tilde{e}_{1} \rbrace$
incident to $v_{k}^{'}$.
If $l = k$
then $\hat{e} = e_{m}$
and  we have  $w(\hat{e}) < M_{k}$.
Otherwise
we apply Star condition to $\lbrace e_{m}, \hat{e}, \tilde{e}_{1} \rbrace$
and $w(\hat{e}) < M_{k}$ is still satisfied.
Then the Weak Cycle condition applied to $C_{l}$
implies that there is an edge $e^{'}$ in $E(C_{l})$
adjacent to $\tilde{e}_{1}$
with $w(e^{'}) = M_{k}$.
If $e^{'}$ belongs to $\tilde{\gamma}$ we take $e = e^{'}$.
Otherwise,
if $|E(\tilde{\gamma})| \geq 2$
there is an edge $\tilde{e}_{2}$ in $\tilde{\gamma}$
adjacent to $e^{'}$ and $\tilde{e}_{1}$
as represented in Figure~\ref{figProofTreePath2PathsInC_lSize1}.
Star condition applied to $\lbrace e^{'}, \tilde{e}_{1}, \tilde{e}_{2} \rbrace$
implies $w(\tilde{e}_{2}) = M_{k}$
and we take $e = \tilde{e}_{2}$.
Finally if $|E(\tilde{\gamma})| = 1$
then $v_{l} = v_{1} = 1$.
Hence either $e^{'} = e_{1}$ and the result is satisfied
or $e^{'}$ is adjacent to $e_{1}$
as represented in Figure~\ref{figProofTreePath2PathsInC_lSize2}.
Then Star condition applied to $\lbrace e_{1}, e^{'}, \tilde{e}_{1} \rbrace$
implies $w(e^{'}) = w_{1}$.
Let us observe that
we can have $e^{'} = e_{j_{l}}$
hence in the assumptions,
 Star condition has to be satisfied for $G$
(and not only for $T$).

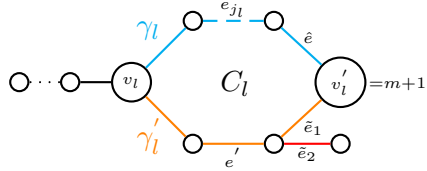
\begin{figure}[!h]
\centering
\begin{pspicture}(0,-.1)(0,2)
\tiny
\begin{psmatrix}[mnode=circle,colsep=0.4,rowsep=0.3]
		&		 &  					&  {}	& & { }\\
{}	& {} & {$v_{l}$}	&  		& &			 & {$v_{l}^{'}$}\\
		&		 &						&	 {} &	& {}   & {}
\psset{arrows=-, shortput=nab,labelsep={0.02}}
\tiny
\ncline[linestyle=dotted]{2,1}{2,2}
\ncline{2,2}{2,3}
\ncline[linecolor=cyan]{2,3}{1,4}
\ncline[linestyle=dashed,linecolor=cyan]{1,4}{1,6}^{$e_{j_{l}}$}
\ncline[linecolor=cyan]{1,6}{2,7}^{$\hat{e}$}
\ncline[linecolor=orange]{2,3}{3,4}
\ncline[linecolor=orange]{3,4}{3,6}_{$e^{'}$}
\ncline[linecolor=red]{3,6}{3,7}_{$\tilde{e}_{2}$}
\ncline[linecolor=orange]{3,6}{2,7}_{$\tilde{e}_{1}$}
\normalsize
\uput[0](.35,.8){$\scriptscriptstyle{=m+1}$}
\uput[0](-2.7,1.5){$\textcolor{cyan}{\gamma_{l}}$}
\uput[0](-2.7,.1){$\textcolor{orange}{\gamma_{l}^{'}}$}
\uput[0](-1.6,.8){$C_{l}$}
\end{psmatrix}
\end{pspicture}
\caption{$|E(\tilde{\gamma})| \geq 2$.}
\label{figProofTreePath2PathsInC_lSize1}
\end{figure}

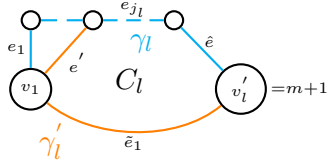
\begin{figure}[!h]
\centering
\begin{pspicture}(0,-.6)(0,1.4)
\tiny
\begin{psmatrix}[mnode=circle,colsep=0.4,rowsep=0.4]
{}	&  {} & & { }\\
{$v_{1}$}		&   & & & {$v_{l}^{'}$}
\psset{arrows=-, shortput=nab,labelsep={0.02}}
\tiny
\ncline[linecolor=cyan]{1,1}{2,1}_{$e_{1}$}
\ncline[linestyle=dashed,linecolor=cyan]{1,1}{1,2}
\ncline[linecolor=orange]{2,1}{1,2}_{$e^{'}$}
\ncline[linestyle=dashed,linecolor=cyan]{1,2}{1,4}^{$e_{j_{l}}$}
\ncline[linecolor=cyan]{1,4}{2,5}^{$\hat{e}$}
\ncarc[linecolor=orange,arcangle=-40]{2,1}{2,5}_{$\tilde{e}_{1}$}
\normalsize
\uput[0](.25,.1){$\scriptscriptstyle{=m+1}$}
\uput[0](-1.6,.7){$\textcolor{cyan}{\gamma_{l}}$}
\uput[0](-2.8,-.6){$\textcolor{orange}{\gamma_{l}^{'}}$}
\uput[0](-1.8,.2){$C_{l}$}
\end{psmatrix}
\end{pspicture}
\caption{$|E(\tilde{\gamma})| = 1$ and $e^{'} \not= e_{1}$.}
\label{figProofTreePath2PathsInC_lSize2}
\end{figure}

\noindent
\textbf{
Case 2
}
Let us now consider $C_{l}$ with $l \leq k$
and $v_{l}^{'} \not= m+1$.
Then there exists an edge $\tilde{e}_{1} \in E(\tilde{\gamma}) \setminus E(C_{l})$
incident to $v_{l}^{'}$.
If $l = k$,
then $\tilde{e}_{1}$ is the edge of $\gamma_{k, m+1}$ incident to $v_{k}^{'}$
as represented in Figure~\ref{figProofTreePath2PathsInC_l+OnePath}.
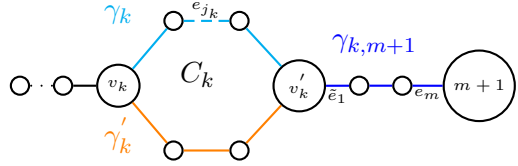
\begin{figure}[!h]
\centering
\begin{pspicture}(0,0)(0,2)
\tiny
\begin{psmatrix}[mnode=circle,colsep=0.3,rowsep=0.2]
	& 	&  	&  {} & & { }\\
{}  & {} & {$v_{k}$}		&   & & & {$v_{k}^{'}$} & {} & {} & {$m+1$}\\
	&  	& 	& {} & & {}
\psset{arrows=-, shortput=nab,labelsep={0.02}}
\tiny
\ncline[linestyle=dotted]{2,1}{2,2}
\ncline{2,2}{2,3}
\ncline[linecolor=orange]{2,3}{3,4}
\ncline[linecolor=cyan]{2,3}{1,4}
\ncline[linestyle=dashed,linecolor=cyan]{1,4}{1,6}^{$e_{j_{k}}$}
\ncline[linecolor=cyan]{1,6}{2,7}
\ncline[linecolor=orange]{2,7}{3,6}
\ncline[linecolor=orange]{3,4}{3,6}
\ncline[linecolor=blue]{2,7}{2,8}_{$\tilde{e}_{1}$}
\ncline[linecolor=blue]{2,8}{2,9}
\ncline[linecolor=blue]{2,9}{2,10}_{$e_{m}$}
\normalsize
\uput[0](-1.8,1.8){$\textcolor{cyan}{\gamma_{k}}$}
\uput[0](-1.8,.2){$\textcolor{orange}{\gamma_{k}^{'}}$}
\uput[0](1.2,1.4){$\textcolor{blue}{\gamma_{k,m+1}}$}
\uput[0](-.8,1){$C_{k}$}
\end{psmatrix}
\end{pspicture}
\caption{$\tilde{e}_{1}$ in $\gamma_{k,m+1}$.}
\label{figProofTreePath2PathsInC_l+OnePath}
\end{figure}

\noindent
If $l < k$,
we can assume that $C_{l}$ and $C_{l+1}$ are not adjacent.
Otherwise Lemma~\ref{lemIfC_lAndC_l+1AreAdjacentThenM_l=M_l+1}
implies $M_{l} = M_{l+1}$
and we can replace $C_{l}$ by $C_{l+1}$.
If $v_{l}^{'} \not= v_{l+1}$,
$\tilde{e}_{1}$ is the edge of $\gamma_{l,l+1}$
incident to $v_{l}^{'}$
as represented in Figure~\ref{figProofTreePath2PathsInC_l+OnePath+OneCycle}.

\begin{figure}[!h]
\centering
\begin{pspicture}(0,0)(0,1.9)
\tiny
\begin{psmatrix}[mnode=circle,colsep=0.3,rowsep=0.2]
	 & 		&  					& {} & & {}		&								&		 &		&		 & {} & 	 &	{}\\
{} & {} & {$v_{l}$}	&    & & 			& {$v_{l}^{'}$} & {} & {} & {} &		&		 & 		 & {} & {}\\
	 &  	& 					& {} & & {}		&								&		 &		&		 & {}	&		 &	{}				
\psset{arrows=-, shortput=nab,labelsep={0.02}}
\tiny
\ncline[linestyle=dotted]{2,1}{2,2}
\ncline{2,2}{2,3}
\ncline[linecolor=orange]{2,3}{3,4}
\ncline[linecolor=cyan]{2,3}{1,4}
\ncline[linestyle=dashed,linecolor=cyan]{1,4}{1,6}^{$e_{j_{l}}$}
\ncline[linecolor=cyan]{1,6}{2,7}
\ncline[linecolor=orange]{2,7}{3,6}
\ncline[linecolor=orange]{3,4}{3,6}
\ncline[linecolor=blue]{2,7}{2,8}_{$\tilde{e}_{1}$}
\ncline[linecolor=blue]{2,8}{2,9}
\ncline[linecolor=blue]{2,9}{2,10}
\ncline[linecolor=cyan]{2,10}{1,11}
\ncline[linestyle=dashed,linecolor=cyan]{1,11}{1,13}
\ncline[linecolor=cyan]{1,13}{2,14}
\ncline[linecolor=orange]{2,10}{3,11}
\ncline[linecolor=orange]{3,11}{3,13}
\ncline[linecolor=orange]{3,13}{2,14}
\ncline[linestyle=dotted]{2,14}{2,15}
\normalsize
\uput[0](-5.6,1.6){$\textcolor{cyan}{\gamma_{l}}$}
\uput[0](-5.6,0){$\textcolor{orange}{\gamma_{l}^{'}}$}
\uput[0](-2.6,1.2){$\textcolor{blue}{\gamma_{l,l+1}}$}
\uput[0](-.9,.8){$C_{l+1}$}
\uput[0](-4.8,.8){$C_{l}$}
\end{psmatrix}
\end{pspicture}
\caption{$v_{l}^{'} \not= v_{l+1}$, $\tilde{e}_{1}$ in $\gamma_{l,l+1}$.}
\label{figProofTreePath2PathsInC_l+OnePath+OneCycle}
\end{figure}

\noindent
If $v_{l}^{'} = v_{l+1}$,
$\tilde{e}_{1}$ is the first edge of $\gamma_{l+1}^{'}$ incident to $v_{l}^{'}$
as represented in Figure~\ref{figProofTreePath2IncidentCyclesC_lC_l+1}.
\begin{figure}[!h]
\centering
\begin{pspicture}(0,0)(0,2)
\tiny
\begin{psmatrix}[mnode=circle,colsep=0.3,rowsep=0.2]
	 & 		&  					& {} & & {}		&								& {} & 	 &	{}\\
{} & {} & {$v_{l}$}	&    & & 			& {$v_{l}^{'}$} &		&		 & 		 & {} & {}\\
	 &  	& 					& {} & & {}		&								& {}	&		 &	{}				
\psset{arrows=-, shortput=nab,labelsep={0.02}}
\tiny
\ncline[linestyle=dotted]{2,1}{2,2}
\ncline{2,2}{2,3}
\ncline[linecolor=orange]{2,3}{3,4}
\ncline[linecolor=cyan]{2,3}{1,4}
\ncline[linestyle=dashed,linecolor=cyan]{1,4}{1,6}^{$e_{j_{l}}$}
\ncline[linecolor=cyan]{1,6}{2,7}
\ncline[linecolor=orange]{2,7}{3,6}
\ncline[linecolor=orange]{3,4}{3,6}
\ncline[linecolor=cyan]{2,7}{1,8}
\ncline[linestyle=dashed,linecolor=cyan]{1,8}{1,10}^{$e_{j_{l+1}}$}
\ncline[linecolor=cyan]{1,10}{2,11}
\ncline[linecolor=orange]{2,7}{3,8}^{$\tilde{e}_{1}$}
\ncline[linecolor=orange]{3,8}{3,10}
\ncline[linecolor=orange]{3,10}{2,11}
\ncline[linestyle=dotted]{2,11}{2,12}
\normalsize
\uput[0](-4.1,1.6){$\textcolor{cyan}{\gamma_{l}}$}
\uput[0](-4.1,.1){$\textcolor{orange}{\gamma_{l}^{'}}$}
\uput[0](-1.8,1.6){$\textcolor{cyan}{\gamma_{l+1}}$}
\uput[0](-1.8,.1){$\textcolor{orange}{\gamma_{l+1}^{'}}$}
\uput[0](-.9,.8){$C_{l+1}$}
\uput[0](-3,.8){$C_{l}$}
\end{psmatrix}
\end{pspicture}
\caption{$v_{l}^{'} = v_{l+1}$, $\tilde{e}_{1}$ in $\gamma_{l+1}^{'}$.}
\label{figProofTreePath2IncidentCyclesC_lC_l+1}
\end{figure}
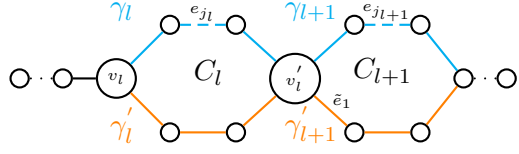

\noindent
Let us denote by $\tilde{\gamma}_{1,l}$ the part of $\tilde{\gamma}$
linking $1$ to $v_{l}^{'}$.
Let $\tilde{e}_{2}$
be the edge of $\tilde{\gamma}_{1,l}$
incident to $v_{l}^{'}$
as represented in Figure~\ref{figProofTreePath1Cycle}
with $\tilde{e}_{2}$ in $\gamma_{l}^{'}$.
If $\tilde{e}_{2} \notin \gamma_{l}^{'}$
then there exists a fundamental cycle $C_{r}$ for some $r$, with $1 \leq r \leq l$,
such that $\tilde{e}_{2} \in \gamma_{r}^{'}$
and $C_{r}, C_{r+1}, \ldots, C_{l-1}, C_{l}$ are adjacent.
By Lemma~\ref{lemIfC_lAndC_l+1AreAdjacentThenM_l=M_l+1}
we have $M_{r} = M_{l}$.
Therefore we can suppose $\tilde{e}_{2} \in \gamma_{l}^{'}$
replacing $C_{l}$ by $C_{r}$ if necessary.
Note that in this last case $C_{l}$ may be adjacent to $C_{l+1}$
but $\tilde{e}_{1} \notin E(C_{l})$.
Let $\hat{e}$
be the edge of $\gamma_{l}$
incident to $v_{l}^{'}$
as represented in Figure~\ref{figProofTreePath1Cycle}.

\begin{figure}[!h]
\centering
\begin{pspicture}(0,0)(0,2)
\tiny
\begin{psmatrix}[mnode=circle,colsep=0.3,rowsep=0.2]
	 & 		&  					& {} & & {}\\
{} & {} & {$v_{l}$}	&    & & 			& {$v_{l}^{'}$} & {}\\
	 &  	& 					& {} & & {}			
\psset{arrows=-, shortput=nab,labelsep={0.02}}
\tiny
\ncline[linestyle=dotted]{2,1}{2,2}
\ncline{2,2}{2,3}
\ncline[linecolor=orange]{2,3}{3,4}
\ncline[linecolor=cyan]{2,3}{1,4}
\ncline[linestyle=dashed,linecolor=cyan]{1,4}{1,6}^{$e_{j_{l}}$}
\ncline[linecolor=cyan]{1,6}{2,7}^{$\hat{e}$}
\ncline[linecolor=orange]{2,7}{3,6}^{$\tilde{e}_{2}$}
\ncline[linecolor=orange]{3,4}{3,6}
\ncline{2,7}{2,8}^{$\tilde{e}_{1}$}
\normalsize
\uput[0](-1.6,1.6){$\textcolor{cyan}{\gamma_{l}}$}
\uput[0](-1.6,.1){$\textcolor{orange}{\gamma_{l}^{'}}$}
\uput[0](-.7,.8){$C_{l}$}
\end{psmatrix}
\end{pspicture}
\caption{$\hat{e}$ in $\gamma_{l}$, $\tilde{e}_{2}$ in $\gamma_{l}^{'}$.}
\label{figProofTreePath1Cycle}
\end{figure}
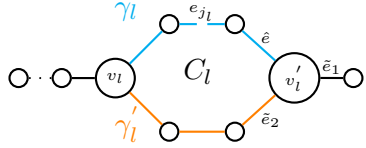

\noindent
If $w(\tilde{e}_{2}) = M_{l}$,
we take $e = \tilde{e}_{2}$.
Otherwise we have $w(\tilde{e}_{2}) <  M_{l}$.
If $w(\hat{e}) = M_{l}$,
Star condition applied to $\lbrace \hat{e}, \tilde{e}_{1}, \tilde{e}_{2} \rbrace$
implies $w(\tilde{e}_{1}) = M_{l}$
and we can take $e = \tilde{e}_{1}$.
If now $w(\hat{e})$
and $w(\tilde{e}_{2}) < M_{l}$,
then the Weak Cycle condition applied to $C_{l}$
implies that there is an edge $e^{'}$ in $E(C_{l})$
adjacent to $\tilde{e}_{2}$
with $w(e^{'}) = M_{l}$.
If $e^{'}$ belongs to $\tilde{\gamma}$ we take $e = e^{'}$.
Otherwise,
if $|E(\tilde{\gamma}_{1,l})| \geq 2$
there is an edge $\tilde{e}_{3}$ in $\tilde{\gamma}$
adjacent to $e^{'}$ and $\tilde{e}_{2}$
as represented in Figure~\ref{figProofTreePath2PathsInC_l3}.
Star condition applied to $\lbrace e^{'}, \tilde{e}_{2}, \tilde{e}_{3} \rbrace$
implies $w(\tilde{e}_{3}) = M_{l}$
and we take $e = \tilde{e}_{3}$.
Finally if $|E(\tilde{\gamma}_{1,l})| = 1$
then $v_{l}= v_{1} = 1$.
Then either $e^{'} = e_{1} $ and the result is satisfied
or $e^{'}$ is adjacent to $e_{1}$ as represented in Figure~\ref{figProofTreePath2PathsInC_l4}.
Then
as $w(\tilde{e}_{2}) < w(e^{'})$,
Star condition applied to $\lbrace e^{'}, e_{1}, \tilde{e}_{2} \rbrace$
implies $w(e^{'}) = w_{1}$.
Note that the situation is similar to Case 1 with $l < k$
replacing $\tilde{e}_{1}$ by $\tilde{e}_{2}$.

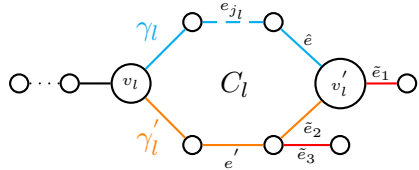
\begin{figure}[!h]
\centering
\begin{pspicture}(0,-.1)(0,2)
\tiny
\begin{psmatrix}[mnode=circle,colsep=0.4,rowsep=0.3]
		&		 &  					&  {}	& & { }\\
{}	& {} & {$v_{l}$}	&  		& &			 & {$v_{l}^{'}$}& {}\\
		&		 &						&	 {} &	& {}   & {}
\psset{arrows=-, shortput=nab,labelsep={0.02}}
\tiny
\ncline[linestyle=dotted]{2,1}{2,2}
\ncline{2,2}{2,3}
\ncline[linecolor=cyan]{2,3}{1,4}
\ncline[linestyle=dashed,linecolor=cyan]{1,4}{1,6}^{$e_{j_{l}}$}
\ncline[linecolor=cyan]{1,6}{2,7}^{$\hat{e}$}
\ncline[linecolor=orange]{2,3}{3,4}
\ncline[linecolor=orange]{3,4}{3,6}_{$e^{'}$}
\ncline[linecolor=red]{3,6}{3,7}_{$\tilde{e}_{3}$}
\ncline[linecolor=orange]{3,6}{2,7}_{$\tilde{e}_{2}$}
\ncline[linecolor=red]{2,7}{2,8}^{$\tilde{e}_{1}$}
\normalsize
\uput[0](-2.7,1.5){$\textcolor{cyan}{\gamma_{l}}$}
\uput[0](-2.7,.1){$\textcolor{orange}{\gamma_{l}^{'}}$}
\uput[0](-1.6,.8){$C_{l}$}
\end{psmatrix}
\end{pspicture}
\caption{$|E(\tilde{\gamma}_{1,l})| \geq 2$, $\tilde{e}_{3}$ in $\tilde{\gamma}$
adjacent to $e^{'}$ and $\tilde{e}_{2}$.}
\label{figProofTreePath2PathsInC_l3}
\end{figure}

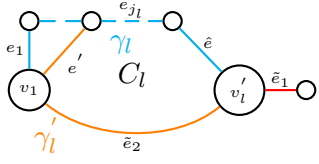
\begin{figure}[!h]
\centering
\begin{pspicture}(0,-.6)(0,1.4)
\tiny
\begin{psmatrix}[mnode=circle,colsep=0.4,rowsep=0.4]
{}	&  {} & & { }\\
{$v_{1}$}		&   & & & {$v_{l}^{'}$} & {}
\psset{arrows=-, shortput=nab,labelsep={0.02}}
\tiny
\ncline[linecolor=cyan]{1,1}{2,1}_{$e_{1}$}
\ncline[linestyle=dashed,linecolor=cyan]{1,1}{1,2}
\ncline[linecolor=orange]{2,1}{1,2}_{$e^{'}$}
\ncline[linestyle=dashed,linecolor=cyan]{1,2}{1,4}^{$e_{j_{l}}$}
\ncline[linecolor=cyan]{1,4}{2,5}^{$\hat{e}$}
\ncarc[linecolor=orange,arcangle=-40]{2,1}{2,5}_{$\tilde{e}_{2}$}
\ncline[linecolor=red]{2,5}{2,6}^{$\tilde{e}_{1}$}
\normalsize
\uput[0](-2.6,.6){$\textcolor{cyan}{\gamma_{l}}$}
\uput[0](-3.6,-.6){$\textcolor{orange}{\gamma_{l}^{'}}$}
\uput[0](-2.5,.2){$C_{l}$}
\end{psmatrix}
\end{pspicture}
\caption{$|E(\tilde{\gamma}_{1,l})| = 1$, $v_{l} = v_{1} = 1$ and $e^{'} \not= e_{1}$.}
\label{figProofTreePath2PathsInC_l4}
\end{figure}

\end{proof}

Using for instance Prim's algorithm,
we can find a spanning tree $T$ of minimum weight in $O(n^{2})$ time.
The following proposition shows that the Path condition
is satisfied for $G$
if it is satisfied for $T$
and if all fundamental cycles associated with $T$
satisfy the Weak Cycle condition,
and therefore can be verified in polynomial time.

\begin{proposition}
Let $G=(N,E,w)$ be a weighted connected graph
satisfying the Star condition.
Let $T$ be a minimum weight spanning tree of $G$.
If $T$ satisfies the Path condition
and if each fundamental cycle associated with $T$
satisfies the Weak Cycle condition,
then $G$ also satisfies the Path condition. 
Therefore the Path condition can be verified in $O(n^{3})$ time.
\end{proposition}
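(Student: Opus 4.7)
The plan is to take an arbitrary elementary path $\gamma$ of $G$ and indices $1\le i<j<k\le m$, and show $w_j\le\max(w_i,w_k)$ by reducing to a subpath and combining Lemma~\ref{lemEitherM_l=w_1Orw_mOrThereExistsAnEdgeeInE(TildeGamma)SuchThatM_l=w(e)} with the Path condition on $T$. Let $\gamma'$ denote the subpath of $\gamma$ from vertex~$i$ to vertex~$k+1$; its first and last edges are $e_i$ and $e_k$, and $e_j$ is an intermediate edge of~$\gamma'$. Let $\tilde\gamma'$ be the unique elementary path in $T$ between $i$ and $k+1$. Since $G$ satisfies the Star condition and every fundamental cycle of $T$ satisfies the Weak Cycle condition, Lemma~\ref{lemEitherM_l=w_1Orw_mOrThereExistsAnEdgeeInE(TildeGamma)SuchThatM_l=w(e)} applied to $\gamma'$ yields: for every fundamental cycle $C_l$ of $T$ whose non-tree edge $e_{j_l}$ lies on $\gamma'$, $M_l\in\{w_i,w_k\}\cup\{w(e):e\in E(\tilde\gamma')\}$.

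I would then distinguish three cases on $e_j$. If $e_j\notin T$, then $e_j=e_{j_l}$ for some $l$ and $w_j=M_l$, so $w_j$ equals $w_i$, $w_k$, or the weight of some edge of $\tilde\gamma'$. If $e_j\in T$ but $e_j\notin E(\tilde\gamma')$, then $e_j$ lies on the portion $\gamma_l$ of $\gamma'$ that was replaced in the construction of $\tilde\gamma'$; the Weak Cycle condition on $C_l$ then gives $w_j\le M_l$, which is in turn controlled by the Lemma. If $e_j\in E(\tilde\gamma')$, then $w_j$ is already the weight of an edge of $\tilde\gamma'$. In all three cases the task reduces to proving $w(e)\le\max(w_i,w_k)$ for the relevant edge $e\in E(\tilde\gamma')$.

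To establish this bound I would invoke the Path condition on $T$ applied to $\tilde\gamma'$: every intermediate edge $e$ of $\tilde\gamma'$ satisfies $w(e)\le\max(w(\tilde e_1),w(\tilde e_{\mathrm{last}}))$, where $\tilde e_1$ and $\tilde e_{\mathrm{last}}$ are the first and last edges of $\tilde\gamma'$. It therefore suffices to prove $w(\tilde e_1)\le\max(w_i,w_k)$ (the bound on $\tilde e_{\mathrm{last}}$ being symmetric). This is immediate when $\tilde e_1=e_i$; and when $e_i\notin T$, the edge $e_i=e_{j_1}$ is the non-tree edge of its fundamental cycle $C_1$, which contains $\tilde e_1$, so the minimality of $T$ yields $w(\tilde e_1)\le M_1=w_i$. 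The remaining subcase $e_i\in T$ with $\tilde e_1\neq e_i$ is the main technical obstacle: here $\gamma'$ must leave the component of~$i$ in $T\setminus\{e_i\}$ through $e_i$ and cross back via some non-tree edge $e_{j_h}$ whose fundamental cycle $C_h$ contains $e_i$, and a careful combination of Lemma~\ref{lemEitherM_l=w_1Orw_mOrThereExistsAnEdgeeInE(TildeGamma)SuchThatM_l=w(e)} and the Weak Cycle condition applied to $C_h$ with the Star condition at~$i$ on $\{e_i,\tilde e_1,e_{j_h}\}$ (using, if needed, a minimality argument on the index $j_h$) should yield $w(\tilde e_1)\le\max(w_i,w_k)$.

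Concerning the complexity: Prim's algorithm produces $T$ in $O(n^2)$. The Path condition on $T$ is verified by enumerating the $O(n^2)$ pairs of vertices, computing the unique tree path between each pair (of length at most $n-1$) and checking the condition via prefix and suffix minima of the edge-weights in $O(n)$ per path, for a total cost of $O(n^3)$. The $m-n+1=O(n^2)$ fundamental cycles, each of length at most~$n$, are tested for the Weak Cycle condition in $O(n^3)$ as well. Hence the whole test runs in $O(n^3)$ time.
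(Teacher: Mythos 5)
Your overall strategy is the same as the paper's: restrict to the subpath $\gamma'$ from $i$ to $k+1$, use Lemma~\ref{lemEitherM_l=w_1Orw_mOrThereExistsAnEdgeeInE(TildeGamma)SuchThatM_l=w(e)} to push every relevant maximum $M_l$ onto the tree path $\tilde{\gamma}'$, bound the edges of $\tilde{\gamma}'$ with the Path condition on $T$, and finally compare the end edges of $\tilde{\gamma}'$ with $e_i$ and $e_k$; the complexity analysis is also sound. The problem is the last step, which you yourself flag as unresolved: the subcase $e_i\in E(T)$ with $\tilde{e}_1\neq e_i$. Your sketch for it does not go through as written. The Star condition cannot be applied to $\lbrace e_i,\tilde{e}_1,e_{j_h}\rbrace$ because the non-tree edge $e_{j_h}$ closing the excursion need not be incident to $i$; and even granting some star at $i$ inside $C_h$, you only learn $w(\tilde{e}_1)\leq M_h$, and bounding $M_h$ by $\max(w_i,w_k)$ via the Lemma sends you back to bounding an edge of $\tilde{\gamma}'$ --- exactly the quantity you are trying to control. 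So this case is a genuine gap in your argument.

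The paper closes it with a short observation you are missing: if $e_i=\lbrace i,i+1\rbrace$ is an edge of $T$ and is not the first edge of $\tilde{\gamma}'$, then by uniqueness of paths in a tree $\tilde{\gamma}'$ never visits $i+1$, so $e_i\cup\tilde{\gamma}'$ (and, when $e_k\in E(T)$ as well, $e_i\cup\tilde{\gamma}'\cup e_k$) is itself an elementary path in $T$. Applying the Path condition on $T$ to this extended path makes $\tilde{e}_1$ an interior edge and bounds it, together with every other edge of $\tilde{\gamma}'$, by $\max\bigl(w_i,w(\tilde{e}_{\mathrm{last}})\bigr)$, resp.\ by $\max(w_i,w_k)$, with no further case analysis. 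With that device inserted in place of your third subcase, your proof coincides with the paper's.
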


\begin{proof}
Let $\gamma = \lbrace 1, e_{1}, 2, e_{2}, \ldots, m, e_{m}, m+1 \rbrace$ be a path of $G$.
We have to prove that for every $j$, $2 \leq j \leq m-1$,
we have $w_{j} \leq \max (w_{1}, w_{m})$.

Let us consider $e \in E(\gamma) \setminus E(T)$
and $C$ the fundamental cycle associated with $T$ and $e$.
Let $M$ be the maximum edge weight in $C$.
We necessarily have $w(e) = M$ otherwise $T$ is not a minimum weight
spanning tree.
We have to prove:
\begin{equation}
\label{eqProofM<=Max(w_1,w_m}
M\leq \max (w_{1}, w_{m}).
\end{equation}
Let $\tilde{\gamma}$ be the path in $T$ related to $\gamma$
linking $1$ to $m+1$ 
defined page~\pageref{DefinitionOfTildeGamma}. 
Lemma~\ref{lemEitherM_l=w_1Orw_mOrThereExistsAnEdgeeInE(TildeGamma)SuchThatM_l=w(e)}
implies that either $M = w_{1}$ or $w_{m}$
and then (\ref{eqProofM<=Max(w_1,w_m}) is obviously satisfied
or there exists $\tilde{e} \in E(\tilde{\gamma})$
such that $w(\tilde{e}) = M$.
Let $e_{1}^{'}$ (resp. $e_{m}^{'}$)
be the edge of $\tilde{\gamma}$
incident to $1$ (resp. $m+1$).
Path condition applied to $\tilde{\gamma}$ implies
$w(\tilde{e}) \leq \max(w_{1}^{'}, w_{m}^{'})$.
If $e_{1}$ (resp. $e_{m}$)
is not an edge of $T$,
then it forms a fundamental cycle containing $e_{1}^{'}$ (resp. $e_{m}^{'}$)
as represented in Figure~\ref{figLemmaTreeAndPath4}
and as $T$ is a minimum weight spanning tree
we have $w_{1}^{'} \leq w_{1}$ (resp. $w_{m}^{'} \leq w_{m}$).
If $e_{1}$ (resp. $e_{m}$) is an edge of $T$,
then Path condition applied to $e_{1} \cup \tilde{\gamma}$ (resp. $e_{m} \cup \tilde{\gamma}$)
implies $w(\tilde{e}) \leq \max(w_{1}, w_{m}^{'})$
(resp. $w(\tilde{e}) \leq \max(w_{1}^{'}, w_{m})$).
If both $e_{1}$ and $e_{m}$ are edges of $T$,
then Path condition applied to $e_{1} \cup \tilde{\gamma} \cup e_{m}$
implies $w(\tilde{e}) \leq \max(w_{1}, w_{m})$.
Hence in every case, $w(\tilde{e}) \leq \max (w_{1}, w_{m})$
and therefore (\ref{eqProofM<=Max(w_1,w_m}) is satisfied.
As $M$ is the maximum edge-weight in $C$,
we get $w(e) \leq \max(w_{1}, w_{m})$ for all $e \in E(C)$.
Therefore we also have $w(e) \leq \max(w_{1}, w_{m})$ for all $e$
in a fundamental cycle associated with $T$.
Let $e$ be a remaining edge in $E(\gamma) \cap E(T)$.
$e$ is necessarily in $\tilde{\gamma}$
(more precisely in one of the subpaths $\gamma_{0,1}, \gamma_{1,2}, \ldots, \gamma_{k,m+1}$)
and therefore satisfies $w(e) \leq \max(w_{1}, w_{m})$.

\begin{figure}[!h]
\centering
\begin{pspicture}(0,-.3)(0,2)
\tiny
\begin{psmatrix}[mnode=circle,colsep=1,rowsep=1]
{$1$}	& {} 	& {}	& {} & {m$$} \\
{} 	& {}	& {}	& {}
\psset{arrows=-, shortput=nab,labelsep={.1}}
\tiny
\ncline[linecolor=blue]{1,1}{1,2}^{$e_{1}$}
\ncline[linecolor=black,shadow=true,shadowsize=1pt,shadowangle=90,shadowcolor=blue]
{1,4}{1,5}^{$e_{m}$}_{$e_{m}^{'}$}
\psset{labelsep=.05}
\ncline{2,2}{1,2}
\ncline[linecolor=black,shadow=true,shadowsize=1pt,shadowangle=90,shadowcolor=blue]{1,2}{1,3}
\ncline{1,3}{2,3}
\ncline[linecolor=blue]{1,3}{1,4}
\ncline{2,3}{2,4}
\ncline{2,4}{1,4}
\ncline{1,1}{2,1}_{$e_{1}^{'}$}
\ncline{2,1}{2,2}
\end{psmatrix}
\normalsize
\uput[0](-6.4,1.2){$\textcolor{blue}{\gamma}$}
\uput[0](-6.4,.2){$\tilde{\gamma}$}
\end{pspicture}
\caption{$\gamma$ and $\tilde{\gamma}$, $e_{1} \notin E(T)$, $e_{m} \in E(T)$.}
\label{figLemmaTreeAndPath4}
\end{figure}
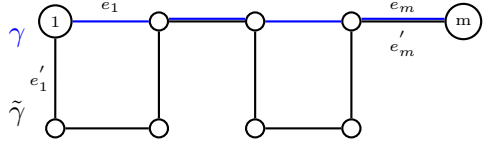

\noindent
Let us now analyze the complexity.
Prim's algorithm provides
a minimum weight spanning tree
in time $O(n^{2})$.


Then we verify Path condition for $T$.
For each vertex $v \in V$,
we build the arborescence in $T$ rooted in $v$.
It can be done using a breadth first search in $O(n^{2})$.
During the search we can verify the Path condition.
Indeed we only have to compare the weight of an edge added to the arborescence
to the weight of its predecessor in the arborescence.
If there is an increase of edge-weights followed by a decrease
then it contradicts the Path condition
(We can assign a label $\lbrace +, -\rbrace$ to each vertex entering the arborescence,
and therefore associated with the entering edge,
corresponding to an increase or decrease,
and we compare it to the label of the preceding node).
Hence we can verify the Path condition on $T$ in $ O(n^{3})$ time.

For a given fundamental cycle associated with $T$,
the Cycle condition can be verified in $O(n)$ time.
Indeed,
as $T$ contains at most $n-1$ edges,
a fundamental cycle contains at most $n$ edges.
Then an edge of minimum weight can be found after
at most $n-1$ comparisons.
An edge of minimum weight
among the $n-1$ remaining edges
can be found using at most $n-2$ comparisons.
Actually we only have to consider the two edges adjacent to the minimum
weight edge previously found.
Finally we only have to verify the equality of edge-weights
for the remaining edges.
It requires at most $n-3$ comparisons.
There exists $m - (n-1)$ edges in $E \setminus E(T)$
and therefore $m - (n-1)$ fundamental cycles associated with $T$.
Hence we can verify the Cycle condition for all fundamental cycles
associated with $T$ in $O(nm)$. 

Finally Path condition for $G$ can be verified in $O(n^{2} +  n^{3} +nm)$ time
and therefore in $O(n^{3})$ time (as $m \leq \frac{n(n-1)}{2}$).
\end{proof}

We now prove that the Adjacent cycles condition can be verified
in $O(n^{6})$ time.
We first establish that we only need to verify the condition on specific cycles
built with shortest paths.

\begin{lemma}
\label{lemMINADJCYCLESCOND}
Let $G = (N,E,w)$ be a weighted graph
satisfying the Star and Path
(and therefore Weak and Intermediary Cycle) conditions.
Let $e_{1} = \lbrace 1, 2 \rbrace$ and $e_{2} = \lbrace 2, 3 \rbrace$
be two adjacent edges of $G$.
Let us define
$\tilde{E} := \lbrace e \in E; \; w(e) > \max (w_{1}, w_{2}) \rbrace$
and $\tilde{G} := (N \setminus \lbrace 2 \rbrace, \tilde{E}, w_{|\tilde{E}})$.
Let $\gamma$ be a shortest path in $\tilde{G}$, if it exists,
linking $1$ and $3$.
Let us consider the following claims:
\begin{enumerate}[1)]
\item
\label{LemItem2cyclesCC'}
There exist two simple cycles $C$ and $C^{'}$ in $G$
satisfying conditions
\ref{enumPropV(C)-V(C)^{'}nonempty},
\ref{enumPropCatmost1non-maxweightchord},
\ref{enumPropCC^{'}nomaxweightchord},
\ref{enumPropCandC'HaveNoCommonChord}
of the Adjacent Cycles condition,
and such that:
\begin{enumerate}[(e)]
\item
\label{LemItemItemCC^{'}twocommonnonmaxweightedges}
$e_{1}$ and $e_{2}$ are two common non-maximum weight edges
of $C$ and $C^{'}$.
\end{enumerate}
\item
\label{LemItem2cyclesCC'2}
There exist two simple cycles $\tilde{C}$ and $\tilde{C}^{'}$
satisfying conditions
\ref{enumPropV(C)-V(C)^{'}nonempty},
\ref{enumPropCatmost1non-maxweightchord},
\ref{enumPropCC^{'}nomaxweightchord},
\ref{enumPropCandC'HaveNoCommonChord}
of the Adjacent Cycles condition,
and such that:
\begin{enumerate}
\setcounter{enumii}{5}
\item
\label{LemItemItemCC{'}twocommonmaxweightedges2}
$\tilde{C}$ and $\tilde{C}^{'}$ have two common non-maximum weight edges
$\tilde{e}_{1} = \lbrace \tilde{1}, 2 \rbrace$, $\tilde{e}_{2} = \lbrace 2, \tilde{3} \rbrace$.
\item \label{LemItemItemC2shortestpathG213iV13C'2G2i13}
$\tilde{C} \setminus \lbrace 2 \rbrace$ corresponds to a shortest path $\tilde{\gamma}$
in $\tilde{G}$
linking $\tilde{1}$ and $\tilde{3}$.
$\tilde{\gamma}$ is a subpath of $\gamma$
and for some $i \in V(\tilde{\gamma}) \setminus \lbrace \tilde{1}, \tilde{3} \rbrace$,
$\tilde{C}^{'} \setminus \lbrace 2 \rbrace$ corresponds to a shortest path $\gamma^{'}$
in $\tilde{G}_{N\setminus \lbrace 2, i \rbrace}$ linking $\tilde{1}$ and $\tilde{3}$.
\end{enumerate}
\end{enumerate}
Then Claim~\ref{LemItem2cyclesCC'} implies Claim~\ref{LemItem2cyclesCC'2}.
Moreover for given adjacent edges $e_{1}$ and $e_{2}$ of $G$,
Claim~\ref{LemItem2cyclesCC'2}
can be verified in  $O(n^{3})$ time.
\end{lemma}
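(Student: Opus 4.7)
\emph{Setup from Claim~\ref{LemItem2cyclesCC'}.} The starting point is to unpack the structure forced on $(C,C')$ by the hypotheses. Since $G$ satisfies the Star and Path conditions, the Intermediary Cycle condition applies to both cycles (by Proposition~\ref{CorCycleConditionGeneralPmin}, item~\ref{itemPathAndStarConditionsImplyAnIntermediaryCycleCondition}). Together with the fact that $e_{1}$ and $e_{2}$ are two common non-maximum weight edges, this forces each of $C$ and $C'$ to have exactly $\{e_{1},e_{2}\}$ as its set of non-maximum weight edges, and by Lemma~\ref{lemM=maxu(e)=maxu(e)=M'} the maximum weight $M$ is the same for both cycles; moreover $M>\max(w_{1},w_{2})$. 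Hence $P:=C\setminus\{2\}$ and $P':=C'\setminus\{2\}$ are paths in $\tilde{G}$ from $1$ to $3$, so a shortest such path $\gamma$ exists.

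\emph{Building $\tilde{C}$.} I define $\tilde{\gamma}$ to be the shortest subpath of $\gamma$ between two vertices $\tilde{1},\tilde{3}$ on $\gamma$ satisfying $\{\tilde{1},2\},\{2,\tilde{3}\}\in E$ (such endpoints exist: taking $\tilde{1}=1,\tilde{3}=3$ works a priori). Set $\tilde{e}_{1}=\{\tilde{1},2\}$, $\tilde{e}_{2}=\{2,\tilde{3}\}$, and $\tilde{C}=\tilde{\gamma}\cup\tilde{e}_{1}\cup\tilde{e}_{2}$. Any chord of $\tilde{C}$ between two internal vertices of $\tilde{\gamma}$ that lay in $\tilde{E}$ would contradict minimality of $\tilde{\gamma}$ in $\tilde{G}$, so such chords have weight at most $\max(w_{1},w_{2})<M$. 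Chords $\{2,v\}$ incident to $2$ are controlled by the Star condition applied at $2$ together with the Intermediary Cycle condition on $\tilde{C}$, which forces $w(\{2,v\})\le w_{2}<M$. Hence $\tilde{C}$ has no maximum weight chord, establishing the $\tilde{C}$-half of~(c). Condition~(b) is then secured by the minimality of $(\tilde{1},\tilde{3})$: any additional non-maximum chord incident to $2$ would give a strictly shorter subpath of $\gamma$ between some pair of vertices adjacent to $2$, contradicting the choice.

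\emph{Building $\tilde{C}'$.} Because $V(C')\setminus V(C)\ne\emptyset$, there is at least one vertex $i\in V(\tilde{\gamma})\setminus\{\tilde{1},\tilde{3}\}$ such that $\tilde{G}_{N\setminus\{2,i\}}$ still contains a path from $\tilde{1}$ to $\tilde{3}$ (witnessed by a suitable restriction of $P'$ after routing through $\tilde{1}$ and $\tilde{3}$). Let $\gamma'$ be a shortest such path and set $\tilde{C}'=\gamma'\cup\tilde{e}_{1}\cup\tilde{e}_{2}$. The same shortest-path/Star/Intermediary Cycle analysis as for $\tilde{C}$ shows $\tilde{C}'$ has no maximum weight chord, giving the $\tilde{C}'$-half of~(c). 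Condition~(a) follows from $i\in V(\tilde{C})\setminus V(\tilde{C}')$ and from choosing $\tilde{C}'$ long enough that some vertex lies outside $\tilde{C}$ (if $\gamma'\subseteq\tilde{\gamma}$ one contradicts minimality of $\tilde{\gamma}$ in $\tilde{G}$). Condition~(d) follows because any common chord would have to connect two vertices shared by both paths, but its non-maximum weight combined with the Intermediary Cycle condition at vertex $2$ forces a structural contradiction with the shortest-path property.

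\emph{Complexity and main obstacle.} For a fixed pair $(e_{1},e_{2})$, computing $\tilde{E}$ and $\tilde{G}$ is $O(n^{2})$; a BFS in $\tilde{G}$ yields $\gamma$ in $O(n^{2})$; determining $(\tilde{1},\tilde{3})$ from $\gamma$ is $O(n)$; and for each of the $O(n)$ candidates $i\in V(\tilde{\gamma})\setminus\{\tilde{1},\tilde{3}\}$ a BFS in $\tilde{G}_{N\setminus\{2,i\}}$ computes $\gamma'$ in $O(n^{2})$, while verifying~(a)--(d) for the resulting pair requires $O(n^{2})$; thus the total is $O(n^{3})$. The main technical difficulty is the simultaneous enforcement of~(b) on $\tilde{C}$ and~(d) on the pair $(\tilde{C},\tilde{C}')$: the choices of $(\tilde{1},\tilde{3})$ and $i$ must be coordinated so that trimming chords incident to $2$ for $\tilde{C}$ does not introduce a common chord with $\tilde{C}'$, and this coordination is what drives the shortest-path characterization in the two subgraphs $\tilde{G}$ and $\tilde{G}_{N\setminus\{2,i\}}$.
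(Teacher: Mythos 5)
Your overall strategy is the paper's: realize $\tilde{C}$ as $e_{1}\cup e_{2}$ plus a shortest $1$--$3$ path in $\tilde{G}$, then realize $\tilde{C}^{'}$ the same way in $\tilde{G}_{N\setminus\{2,i\}}$, using $C$ and $C^{'}$ only as witnesses of existence, and the complexity count is the same in outline. The gap is in how you produce the vertex $i$. You trim $\gamma$ to the \emph{shortest} subpath between two neighbours of $2$, which makes $\tilde{C}$ chordless, and then assert that because $V(C^{'})\setminus V(C)\neq\emptyset$ some internal vertex $i$ of $\tilde{\gamma}$ admits a $\tilde{1}$--$\tilde{3}$ path in $\tilde{G}_{N\setminus\{2,i\}}$ ``witnessed by a suitable restriction of $P^{'}$''. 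That inference does not hold: the natural detour $\tilde{1}\to 1\to (C^{'}\setminus\{2\})\to 3\to\tilde{3}$ avoids $i$ only if $i\notin V(C^{'})$, and nothing in your construction prevents every internal vertex of the trimmed $\tilde{\gamma}$ from lying in $V(C)\cap V(C^{'})$; worse, if two consecutive neighbours of $2$ on $\gamma$ are adjacent, your $\tilde{\gamma}$ has no internal vertex at all and there is no candidate $i$. This is exactly the coordination problem you name in your last paragraph but do not resolve.

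The paper resolves it by splitting on whether $\hat{C}=\{2,e_{1},1\}\cup\gamma\cup\{3,e_{2},2\}$, built on the \emph{full} shortest path $\gamma$, has a chord. If it has none, it keeps $\tilde{1}=1$, $\tilde{3}=3$ and picks $i\in V(\hat{C})\setminus V(C^{'})$ when that set is nonempty; otherwise $V(\hat{C})\subseteq V(C^{'})$, a strict inclusion would give $C^{'}$ a maximum-weight chord contradicting condition~\ref{enumPropCC^{'}nomaxweightchord}, so $V(\hat{C})=V(C^{'})$ and one may take $i\in V(\hat{C})\setminus V(C)=V(C^{'})\setminus V(C)\neq\emptyset$; in either case the cycle not containing $i$ supplies the detour. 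If $\hat{C}$ has a chord, every chord is non-maximum and incident to $2$; the paper fixes one, $\hat{e}_{0}=\{2,i\}$, which by condition~\ref{enumPropCandC'HaveNoCommonChord} on $(C,C^{'})$ may be assumed not to be a chord of $C^{'}$, forcing $i\notin V(C^{'})$, and then trims $\gamma$ to the \emph{smallest subpath whose interior contains $i$} and whose endpoints are neighbours of $2$. Thus $\tilde{C}$ keeps exactly the one non-maximum chord $\{2,i\}$ (permitted by condition~\ref{enumPropCatmost1non-maxweightchord}) and $\gamma_{1}\cup(C^{'}\setminus\{2\})\cup\gamma_{2}$ is a $\tilde{1}$--$\tilde{3}$ path avoiding $i$. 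Your chordless trimming discards precisely the chord whose endpoint the paper uses as $i$, so the existence step on which both the construction and the $O(n^{3})$ procedure rest is missing.
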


\begin{remark}
It can happen that $\tilde{\gamma} = \gamma$
and then $1 = \tilde{1}$
and $3 = \tilde{3}$.
\end{remark}

\begin{remark}
Let us observe that $|V(C)| \geq 4$ and $|V(C^{'})| \geq 4$,
otherwise
by the Weak Cycle condition
$C$ or $C^{'}$ would have a maximum weight chord.
Let us also observe that by the Weak Cycle condition
and Lemma~\ref{lemM=maxu(e)=maxu(e)=M'} page~\pageref{lemM=maxu(e)=maxu(e)=M'},
all edges of $\gamma$ and $\gamma^{'}$ have the same weight $M$.
\end{remark}

\begin{figure}[!h]
\centering
\begin{pspicture}(0,-.2)(0,2.2)
\tiny
\begin{psmatrix}[mnode=circle,colsep=0.4,rowsep=0.4]
	& {}	&  	&  {$\tilde{3}$} & & { }\\
{} 	& 	&	&  {$2$} & & & { }\\
	& {$i$} 	& 	& {$\tilde{1}$} & & { }
\psset{arrows=-, shortput=nab,labelsep={0.02}}
\tiny
\ncline[linestyle=dotted]{3,2}{2,4}
\ncline[linestyle=dotted]{2,4}{1,6}
\ncline[linestyle=dotted]{2,4}{3,6}
\ncline{2,1}{3,2}_{$M$}
\ncline{3,2}{3,4}_{$M$}
\ncline{2,1}{1,2}^{$M$}
\ncline{2,4}{3,4}_{$\tilde{w}_{1}$}
\ncline{2,4}{1,4}^{$\tilde{w}_{2}$}
\ncline{1,2}{1,4}^{$M$}
\ncline{1,4}{1,6}^{$M$}
\ncline{1,6}{2,7}^{$M$}
\ncline{2,7}{3,6}^{$M$}
\ncline{3,6}{3,4}^{$M$}
\normalsize
\uput[0](-.6,1){\textcolor{blue}{$\tilde{C}^{'}$}}
\uput[0](-2.2,1){\textcolor{blue}{$\tilde{C}$}}
\end{psmatrix}
\end{pspicture}
\caption{$i \in \tilde{\gamma} = \tilde{C} \setminus \lbrace 2 \rbrace$
and $i \not \in \gamma^{'} = \tilde{C}^{'} \setminus \lbrace 2 \rbrace$.}
\label{fig2adjacentcycles13}
\end{figure}

\begin{proof}[Proof of Lemma~\ref{lemMINADJCYCLESCOND}]
We assume w.l.o.g. $w_{1} \leq w_{2}$.
Let us assume that Claim~\ref{LemItem2cyclesCC'} is satisfied.
Lemma~\ref{lemM=maxu(e)=maxu(e)=M'} implies
$\max_{e \in \hat{E}(C)} w(e) = \max_{e \in \hat{E}(C^{'})} w(e) = M$.
Let $\gamma$ be a shortest path in $\tilde{G}$
linking $1$ to $3$.
Then $\hat{C} = \lbrace 2, e_{1}, 1 \rbrace \cup \gamma \cup \lbrace 3, e_{2}, 2  \rbrace$
is a simple cycle in $G$ adjacent to $C$
or $\hat{C} = C$.
The Weak Cycle condition
and Lemma~\ref{lemM=maxu(e)=maxu(e)=M'} imply that $e_{1}$ and $e_{2}$
are edges of non-maximum weight in $\hat{C}$ and all edges of $\gamma$ have weight
$M = \max_{e \in \hat{E}(C)} w(e)$.
As $M > w_{2} \geq w_{1}$,
the Intermediary Cycle condition
implies w(e) = M for all $e \in \hat{E}(\hat{C})$ non incident to $2$
and $w(e) = w_{2}$ or $w(e) < w_{1} = w_{2}$ for any chord incident to $2$.
Hence $\hat{C}$ has no chord of maximum weight $M$
otherwise $\gamma$ is not a shortest path.

\noindent
\textbf{
Case 1
}
Let us assume that $\hat{C}$ has no chord.
Then we take $\tilde{C} = \hat{C}$.
If $V(\tilde{C}) \setminus V(C^{'}) \not= \emptyset$
we select a node $i$ in $V(\tilde{C}) \setminus V(C^{'})$.
Otherwise we have $V(\tilde{C}) \subseteq V(C^{'})$.
$V(\tilde{C}) \subset V(C^{'})$ would imply that
$C^{'}$ has a chord of maximum weight $M$, a contradiction.
Hence $V(\tilde{C}) = V(C^{'})$
and in this case we have
$V(\tilde{C}) \setminus V(C) = V(C^{'}) \setminus V(C) \not= \emptyset$.
Then we select a node $i$ in $V(\tilde{C}) \setminus V(C)$
(we can interchange $C$ and $C^{'}$).
By construction of $i$
there exists a path in $\tilde{G}_{N \setminus \lbrace 2, i \rbrace}$
linking $1$ to $3$
(\emph{i.e.}, a path in $C^{'}$ or $C$).
Therefore there exists a shortest path $\gamma^{'}$
in $\tilde{G}_{N \setminus \lbrace 2, i \rbrace}$ linking $1$ to $3$.
Then the cycle
$\tilde{C}^{'} = \lbrace 2, e_{1}, 1\rbrace \cup \gamma^{'} \cup \lbrace 3, e_{2}, 2 \rbrace$ 
is convenient.
We have $V(\tilde{C}) \setminus V(\tilde{C}^{'}) \not= \emptyset$
as $i \in V(\tilde{C}) \setminus V(\tilde{C}^{'})$.
If $V(\tilde{C}^{'}) \setminus V(\tilde{C}) = \emptyset$
then $V(\tilde{C}^{'}) \subseteq V(\tilde{C})$.
$V(\tilde{C}^{'}) = V(\tilde{C})$ is not possible as $i \in V(\tilde{C}) \setminus V(\tilde{C}^{'})$.
Then $V(\tilde{C}^{'}) \subset V(\tilde{C})$
and $\gamma^{'}$ is a path strictly shorter than $\gamma$
linking $1$ to $3$ in $\tilde{G}$,
a contradiction.
$\tilde{C}$ and $\tilde{C}^{'}$ have no maximum weight chord
because $\gamma$ and $\gamma^{'}$ are shortest path linking $1$ and $3$
in respectively $\tilde{G}$
and $\tilde{G}_{N \setminus \lbrace 2, i \rbrace}$.
$\tilde{C}$ has no chord.
$\tilde{C}$ and $\tilde{C}^{'}$ satisfy conditions
\ref{enumPropV(C)-V(C)^{'}nonempty} to \ref{enumPropCandC'HaveNoCommonChord},
\ref{LemItemItemCC{'}twocommonmaxweightedges2}
and \ref{LemItemItemC2shortestpathG213iV13C'2G2i13}.

If $e_{1}$ and $e_{2}$ are given,
we can obtain a shortest path $\gamma$,
if it exists,
using a Breadth First Search algorithm in $O(n^{2})$ time.
Then to decide if $\hat{C}$ has a chord or not,
take $O(n)$ time.
Indeed $\hat{C}$ cannot have a maximum weight chord
otherwise $\gamma$ is not a shortest path,
and any non maximum weight chord is incident to $2$.
Therefore we only have to check in $G$
if there exists a neighbor of $2$
in $\gamma \setminus \lbrace 1, 3 \rbrace$.
In the worst case, we have to consider $n-3$ vertices.
For a given node $i \in \gamma \setminus \lbrace 1, 3 \rbrace$
checking the existence of $\lbrace 2, i \rbrace$
in the adjacency matrix 
is in $O(1)$ time.
Therefore we can check the existence of a chord in $O(n)$ time.
If $\hat{C}$ is chordless
then for a given $i \in V(\gamma)$ 
we look for a shortest path $\gamma^{'}$ in $\tilde{G}_{N \setminus \lbrace 2, i \rbrace}$
linking $1$ and $3$ using the BFS algorithm in $O(n^{2})$ time.
In the worst case we repeat this for all $i \in V(\gamma)$.
Hence in this case ($\hat{C}$ without chord)
we need  $O(n^{3})$ time
to decide if Claim~\ref{LemItem2cyclesCC'2} is satisfied.

\noindent
\textbf{
Case 2
}
Let us now assume that $\hat{C}$ has a chord $\hat{e}$.
Then it is necessarily a non-maximum weight chord
incident to $2$
with $w(\hat{e}) \leq w_{2}$ (by the Intermediary Cycle condition).
We choose arbitrarily the end-vertex $i$
of any chord $\hat{e}_{0} = \lbrace 2, i \rbrace$ of $\hat{C}$.
We select two other edges in $\hat{E}(\hat{C})$,
$\hat{e}_{1} = \lbrace 2, k \rbrace$
and $\hat{e}_{2} = \lbrace 2, j \rbrace$,
with $3 \leq j < i < k$,
which are as close as possible from $\hat{e}_{0}$,
\emph{i.e.}, $j$ is the maximum possible index
for such an edge $\hat{e}_{2}$
and $k$ is the minimum possible index for such an edge $\hat{e}_{1}$.
Hence if $j=3$ (resp. $k=1$) then $\hat{e}_{2} = e_{2}$ (resp. $\hat{e}_{1} = e_{1}$).
We consider the cycle
$\tilde{C} = \lbrace k, \hat{e}_{1}, 2, \hat{e}_{2}, j, e_{j}, j+1, \ldots,
e_{i-1}, i, e_{i}, \ldots, e_{k-1}, k \rbrace$
as represented in Figure~\ref{fig1contractedcycle}.
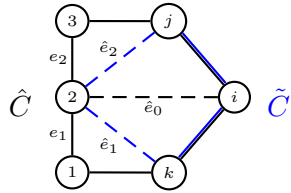
\begin{figure}[!h]
\centering
\begin{pspicture}(0,0)(0,2.5)
\tiny
\begin{psmatrix}[mnode=circle,colsep=0.4,rowsep=0.5]
{$3$} & & {$j$}\\
{$2$} & & & {$i$}\\
{$1$} & & {$k$}
\psset{arrows=-, shortput=nab,labelsep={0.02}}
\tiny
\ncline{1,1}{1,3}
\ncline[linecolor=black,shadow=true,shadowsize=1pt,shadowangle=45,shadowcolor=blue]{1,3}{2,4}
\ncline{1,1}{2,1}_{$e_{2}$}
\ncline{2,1}{3,1}_{$e_{1}$}
\ncline{3,1}{3,3}
\ncline[linecolor=black,shadow=true,shadowsize=1pt,shadowangle=135,shadowcolor=blue]{3,3}{2,4}
\ncline[linestyle=dashed]{2,1}{2,4}_{$\hat{e}_{0}$}
\ncline[linestyle=dashed,linecolor=blue]{2,1}{1,3}^{$\hat{e}_{2}$}
\ncline[linestyle=dashed,linecolor=blue]{2,1}{3,3}_{$\hat{e}_{1}$}
\normalsize
\uput[0](1.2,1){\textcolor{blue}{$\tilde{C}$}}
\uput[0](-2.2,1){\textcolor{black}{$\hat{C}$}}
\end{psmatrix}
\end{pspicture}
\caption{$\hat{C}$ and $\tilde{C}$.}
\label{fig1contractedcycle}
\end{figure}
By renumbering as follows,
$\tilde{1} = k$, $\tilde{e}_{1} = \hat{e}_{1}$, $\tilde{2} = 2$,
$\tilde{e}_{2} = \hat{e}_{2}$, $\tilde{3} = j$,
$\tilde{e}_{\tilde{i} - 1} = e_{i-1}$, $\tilde{i} = i$, $\tilde{e}_{\tilde{i}} = e_{i}$,
$\ldots$,
$\tilde{e}_{m} = e_{k-1}$, $k = \tilde{1}$
we get
$\tilde{C} = \lbrace \tilde{1}, \tilde{e}_{1}, \tilde{2}, \tilde{e}_{2}, \tilde{3}, \ldots,
\tilde{e}_{\tilde{i}-1}, \tilde{i}, \tilde{e}_{\tilde{i}}, \ldots, \tilde{e}_{m}, \tilde{1} \rbrace$.
$\tilde{C}$ has only 
one non-maximum weight chord
$\tilde{e}_{0} = \lbrace 2, \tilde{i} \rbrace$
corresponding to
$\hat{e}_{0} = \lbrace 2, i \rbrace$
and $\tilde{e}_{1}$, $\tilde{e}_{2}$
are the two non-maximum weight edges of $\tilde{C}$.
The restricted path
$\tilde{\gamma} = \lbrace \tilde{3}, \tilde{e}_{3}, \ldots,
\tilde{e}_{\tilde{i}-1}, \tilde{i}, \tilde{e}_{\tilde{i}}, \ldots, \tilde{e}_{m}, \tilde{1} \rbrace$
is a shortest path linking $\tilde{3}$ to $\tilde{1}$
in $\tilde{G}$
as $\gamma$ is a shortest path linking $1$ to $3$ in $\tilde{G}$.
As $C$ and $C^{'}$ have no common chord,
$\tilde{e}_{0} = \lbrace 2, \tilde{i} \rbrace$ cannot be a common chord of
$C$ and $C^{'}$.
We can assume w.l.o.g.
that $\tilde{e}_{0} = \lbrace 2, \tilde{i} \rbrace$ is not a chord of $C^{'}$.
Let $\gamma_{1}$ (resp. $\gamma_{2}$)
be the part of $\gamma$ linking $1$ and $k$ (resp. $3$ and $j$),
$\gamma_{1}= \lbrace k, e_{k}, k+1, \ldots, p, e_{p}, 1 \rbrace$
(resp. $\gamma_{2} = \lbrace 3, e_{3}, 4, \ldots, j-1, e_{j-1}, j \rbrace$)
then
$\gamma_{1} \cup (C^{'} \setminus \lbrace 2 \rbrace) \cup \gamma_{2}$
corresponds to a path linking $k$ to $j$
and therefore $\tilde{1}$ to $\tilde{3}$
in $\tilde{G} \setminus \lbrace 2, \tilde{i} \rbrace$
as represented in Figure~\ref{fig2contractedcycle}.
\begin{figure}[!h]
\centering
\begin{pspicture}(0,0)(0,2.2)
\tiny
\begin{psmatrix}[mnode=circle,colsep=0.4,rowsep=0.4]
	& {}	&  	&  {$3$} & & {$\tilde{3}$}\\
{} 	& 	&	&  {$2$} & & & {$\tilde{i}$}\\
	& {} 	& 	& {$1$} & &  {$\tilde{1}$}
\psset{arrows=-, shortput=nab,labelsep={0.02}}
\tiny
\ncline[linestyle=dashed,linecolor=gray]{2,4}{1,6}^{$\tilde{e}_{2}$}
\ncline[linestyle=dashed,linecolor=gray]{2,4}{2,7}_{$\tilde{e}_{0}$}
\ncline[linestyle=dashed,linecolor=gray]{2,4}{3,6}_{$\tilde{e}_{1}$}
\ncline{2,1}{3,2}
\ncline{3,2}{3,4}
\ncline{2,1}{1,2}
\ncline[linecolor=gray]{2,4}{3,4}_{$e_{1}$}
\ncline[linecolor=gray]{2,4}{1,4}^{$e_{2}$}
\ncline{1,2}{1,4}
\ncline{1,4}{1,6}
\ncline[linecolor=gray]{1,6}{2,7}
\ncline[linecolor=gray]{2,7}{3,6}
\ncline{3,6}{3,4}
\normalsize
\uput[0](-2.4,1.1){\textcolor{black}{$C^{'}$}}
\end{psmatrix}
\end{pspicture}
\caption{Path $\lbrace \tilde{1}, 1 \rbrace \cup (C^{'} \setminus \lbrace 2 \rbrace) \cup \lbrace 3, \tilde{3} \rbrace$.}
\label{fig2contractedcycle}
\end{figure}
Therefore there exists a shortest path $\gamma^{'}$
in $\tilde{G} \setminus \lbrace 2, \tilde{i} \rbrace$ linking $\tilde{1}$
to $\tilde{3}$.
Then we can consider the cycles $\tilde{C}$ and 
$\tilde{C}^{'} = \lbrace 2, \tilde{e}_{1}, \tilde{1}\rbrace
\cup \gamma^{'} \cup \lbrace \tilde{3}, \tilde{e}_{2}, 2 \rbrace$
as represented in Figure~\ref{fig3contractedcycle}.
\begin{figure}[!h]
\centering
\begin{pspicture}(0,0)(0,2.2)
\tiny
\begin{psmatrix}[mnode=circle,colsep=0.4,rowsep=0.4]
	& {}	&  	&  {} & & {$\tilde{3}$}\\
{} 	& 	&	&  {$2$} & & & {$\tilde{i}$}\\
	& {} 	& 	& {} & &  {$\tilde{1}$}
\psset{arrows=-, shortput=nab,labelsep={0.02}}
\tiny
\ncline[linecolor=black,shadow=true,shadowsize=1pt,shadowangle=-45,shadowcolor=blue]{1,6}{2,4}_{$\tilde{e}_{2}$}
\ncline[linestyle=dashed]{2,4}{2,7}_{$\tilde{e}_{0}$}
\ncline[linecolor=black,shadow=true,shadowsize=1pt,shadowangle=45,shadowcolor=blue]{2,4}{3,6}_{$\tilde{e}_{1}$}
\ncline{2,1}{3,2}
\ncline{3,2}{3,4}
\ncline{2,1}{1,2}
\ncline{1,2}{1,4}
\ncline{1,4}{1,6}
\ncline[linecolor=blue]{1,6}{2,7}
\ncline[linecolor=blue]{2,7}{3,6}
\ncline{3,6}{3,4}
\normalsize
\uput[0](1.2,1.1){\textcolor{blue}{$\tilde{C}$}}
\uput[0](-3.8,1.1){\textcolor{black}{$\tilde{C}^{'}$}}
\end{psmatrix}
\end{pspicture}
\caption{Adjacent cycles $\tilde{C}$ and $\tilde{C}^{'}$.}
\label{fig3contractedcycle}
\end{figure}
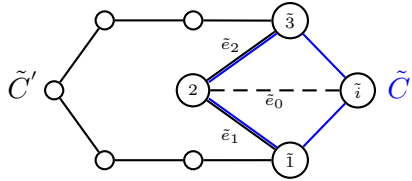
We can achieve the proof as in the first case.
By construction
$\tilde{C}$ has only one non-maximum weight chord $\tilde{e}_{0} = \lbrace 2, \tilde{i} \rbrace$
which is not a chord of $\tilde{C}^{'}$ ($\tilde{i} \notin \tilde{C}^{'}$)
and $\tilde{C}$ and $\tilde{C}^{'}$ have no maximum weight chords.
$\tilde{C}$ and $\tilde{C}^{'}$  satisfy conditions
\ref{enumPropV(C)-V(C)^{'}nonempty} to \ref{enumPropCandC'HaveNoCommonChord},
\ref{LemItemItemCC{'}twocommonmaxweightedges2} and \ref{LemItemItemC2shortestpathG213iV13C'2G2i13}
(replacing $e_{1}$ (resp. $e_{2}$) by $\tilde{e}_{1}$ (resp. $\tilde{e}_{2}$)).

As in Case 1,
we can obtain $\hat{C}$,
if it exists,
using a Breadth First Search algorithm in $O(n^{2})$ time.
Then to find a chord $\tilde{e}_{0} = \lbrace 2, \tilde{i} \rbrace$ of $\hat{C}$,
we need at most $O(n)$ time.
To select the edges $\tilde{e}_{1} = \lbrace 2, k \rbrace$
and $\tilde{e}_{2} = \lbrace 2, j \rbrace$,
we also need at most $O(n)$ comparisons.
To build the shortest path $\gamma^{'}$ in $\tilde{G} \setminus \lbrace 2, \tilde{i} \rbrace$
using a Breadth First Search algorithm,
we need $O(n^{2})$ time.
Hence to verify Claim~\ref{LemItem2cyclesCC'2}
when $\hat{C}$ has a chord,
we need $O(n^{2})$ time.
Note that if there exists a chord in $\tilde{C}$
we do not need to check Claim~\ref{LemItemItemC2shortestpathG213iV13C'2G2i13}
for all $i \in V(\tilde{\gamma}) \setminus \lbrace \tilde{1}, \tilde{3} \rbrace$.
We check the condition only for the node $i$ corresponding to the chord we have found.
\end{proof}

We consider that the Adjacent cycles condition can be divided in two parts.
The first part is the condition of non-existence of cycles with two common non-maximum weight edges.
Then the second part is the condition on edge-weights
for cycles with a unique common non-maximum weight edge.

\begin{proposition}
\label{PropFirstPartAdjacentCyclesCondPolyTime}
The first part of the Adjacent cycles condition can be verified
in $O(n^{6})$ time.
\end{proposition}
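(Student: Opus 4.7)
The plan is to reduce the verification of the first part of the Adjacent cycles condition to an enumeration over pairs of adjacent edges, using Lemma~\ref{lemMINADJCYCLESCOND} as the main structural tool. The first part asserts that there is no pair $(C,C')$ of adjacent simple cycles satisfying conditions \ref{enumPropV(C)-V(C)^{'}nonempty}--\ref{enumPropCandC'HaveNoCommonChord} with two common non-maximum weight edges. By Lemma~\ref{lemMINADJCYCLESCOND}, the existence of such a pair with two common non-maximum weight edges $e_{1}$ and $e_{2}$ (Claim~\ref{LemItem2cyclesCC'}) implies the existence of a canonical pair $(\tilde{C},\tilde{C}')$ whose structure is entirely determined by shortest paths in the subgraphs $\tilde{G}$ and $\tilde{G}_{N\setminus\lbrace 2,i\rbrace}$ (Claim~\ref{LemItem2cyclesCC'2}). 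Taking the contrapositive: to certify that the first part of the Adjacent cycles condition is satisfied, it suffices to verify that Claim~\ref{LemItem2cyclesCC'2} of Lemma~\ref{lemMINADJCYCLESCOND} fails for every choice of adjacent edges $e_{1}$ and $e_{2}$ in~$G$.

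First I would enumerate all pairs of adjacent edges $(e_{1},e_{2})$. Such a pair is determined by a central vertex $2$ together with two distinct neighbors $1$ and $3$ of~$2$ in~$G$, which can be read off from the adjacency matrix in $O(n^{3})$ total time. For each such pair, I would then invoke the algorithmic part of Lemma~\ref{lemMINADJCYCLESCOND}, which tests Claim~\ref{LemItem2cyclesCC'2} in $O(n^{3})$ time: compute a shortest path $\gamma$ in $\tilde{G}$ between $1$ and $3$ via breadth-first search in $O(n^{2})$, decide in $O(n)$ whether the resulting cycle $\hat{C}$ has a chord (necessarily incident to~$2$), and then, depending on the case, either iterate a BFS on $\tilde{G}_{N\setminus\lbrace 2,i\rbrace}$ over $i\in V(\gamma)$ for the chordless case, or run a single BFS for the chord case. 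Multiplying the $O(n^{3})$ enumeration by the $O(n^{3})$ cost per pair yields the claimed $O(n^{6})$ bound.

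Correctness follows because the enumeration exhausts every candidate pair of common non-maximum weight edges: if Claim~\ref{LemItem2cyclesCC'2} fails for every adjacent pair $(e_{1},e_{2})$, then by Lemma~\ref{lemMINADJCYCLESCOND} Claim~\ref{LemItem2cyclesCC'} also fails for every pair, i.e.\ no bad $(C,C')$ exists; conversely, the first positive instance detected by the algorithm provides cycles $(\tilde{C},\tilde{C}')$ that directly witness a violation. I do not expect a genuine obstacle here, since the heavy lifting has already been done in Lemma~\ref{lemMINADJCYCLESCOND}; the only care needed is to bound the enumeration of adjacent edge pairs and to reuse the lemma's per-pair complexity analysis.
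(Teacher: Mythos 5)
Your proposal is correct and follows essentially the same route as the paper: enumerate the $O(n^{3})$ pairs of adjacent edges, and for each pair apply the $O(n^{3})$-time test of Claim~\ref{LemItem2cyclesCC'2} from Lemma~\ref{lemMINADJCYCLESCOND}, using the implication Claim~\ref{LemItem2cyclesCC'} $\Rightarrow$ Claim~\ref{LemItem2cyclesCC'2} (and the fact that any pair found for Claim~\ref{LemItem2cyclesCC'2} is itself a witness) to justify correctness. No substantive difference from the paper's argument.
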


\begin{proof}
We have to verify that
for any pair of adjacent edges $e_{1} = \lbrace 1,2 \rbrace$ and $ e_{2} = \lbrace 2,3 \rbrace$,
$e_{1}$ and $e_{2}$ are not common non-maximum weight edges of two adjacent cycles
(\emph{i.e.}, Claim~\ref{LemItem2cyclesCC'} of Lemma~\ref{lemMINADJCYCLESCOND} is not satisfied).
By Lemma~\ref{lemMINADJCYCLESCOND}
it is sufficient to prove that Claim~\ref{LemItem2cyclesCC'2}
of Lemma~\ref{lemMINADJCYCLESCOND} is not satisfied
and it can be done
in $O(n^{3})$ time.
There are at most
$\sum_{i \in V} C_{n}^{2} = n^{2} \left(\frac{n-1}{2}\right)$
pairs of adjacent edges.
Hence
we need at most $n^{3}$
iterations of the procedure considered in Lemma~\ref{lemMINADJCYCLESCOND}
and therefore the first part of the Adjacent cycles condition can be verified
in $O(n^{6})$ time.
\end{proof}

\begin{lemma}
\label{lemMinCyclesOneCommonNonMaxWeightEdge}
Let $G = (N,E,w)$ be a weighted graph satisfying the Star and Path
(and therefore Intermediary Cycle)
conditions.
Let us assume that the first part of the Adjacent cycles condition is satisfied.
Then the following Claims are equivalent:
\begin{enumerate}[1)]
\item \label{LemItem2cyclesCC'3}
There exist two simple cycles $C$ and $C^{'}$ in $G$
satisfying conditions
\ref{enumPropV(C)-V(C)^{'}nonempty},
\ref{enumPropCatmost1non-maxweightchord},
\ref{enumPropCC^{'}nomaxweightchord},
\ref{enumPropCandC'HaveNoCommonChord}
of the Adjacent Cycles condition,
and such that:
\begin{enumerate}
\setcounter{enumii}{4}
\item
\label{LemItemItemCC^{'}twocommonnonmaxweightedges2}
$C$ and $C^{'}$ have a unique common non-maximum weight edge
$e_{1} = \lbrace 1,2 \rbrace$
and $w(e) > w_{1}$
for all $e \in (E(C) \cup E(C^{'})) \setminus \lbrace e_{1} \rbrace$.
\end{enumerate}
\item
\label{LemItem2cyclesCC'4}
There exist two simple cycles $\tilde{C}$ and $\tilde{C}^{'}$ in $G$
satisfying conditions
\ref{enumPropV(C)-V(C)^{'}nonempty},
\ref{enumPropCatmost1non-maxweightchord},
\ref{enumPropCC^{'}nomaxweightchord},
\ref{enumPropCandC'HaveNoCommonChord}
of the Adjacent Cycles condition,
and such that:
\begin{enumerate}
\setcounter{enumii}{5}
\item
\label{LemItemItemCC{'}twocommonmaxweightedges22}
$\tilde{C}$ and $\tilde{C}^{'}$ have a unique common non-maximum weight edge
$\tilde{e}_{1} = \lbrace 1, 2 \rbrace$
and $w(e) > w_{1}$
for all $e \in (E(\tilde{C}) \cup E(\tilde{C}^{'})) \setminus \lbrace e_{1} \rbrace$.
\item
\label{LemItemItemC2shortestpathG213iV13C'2G2i132}
$\tilde{C} \setminus \lbrace \tilde{e}_{1} \rbrace$
corresponds to a shortest path $\tilde{\gamma}$
in $\tilde{G} = (V, E \setminus \lbrace \tilde{e}_{1} \rbrace,
w_{| E \setminus \lbrace \tilde{e}_{1} \rbrace}
)$
linking $1$ and $2$
and for some $i \in V(\tilde{\gamma}) \setminus \lbrace 1, 2 \rbrace$,
$\tilde{C}^{'} \setminus \lbrace \tilde{e}_{1} \rbrace$
corresponds to a shortest path $\tilde{\gamma}^{'}$
in $\tilde{G}_{N \setminus \lbrace i \rbrace}$ linking $1$ and $2$.
\end{enumerate}
\end{enumerate}
\end{lemma}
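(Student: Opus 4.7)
The implication $2)\Rightarrow 1)$ is immediate: the cycles of Claim~\ref{LemItem2cyclesCC'4} satisfy all the requirements of Claim~\ref{LemItem2cyclesCC'3} by construction.

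For $1)\Rightarrow 2)$, the plan mirrors the proof of Lemma~\ref{lemMINADJCYCLESCOND}. Starting from cycles $C,C'$ as in Claim~\ref{LemItem2cyclesCC'3}, I would first produce $\tilde{C}$. Let $\gamma$ denote a shortest path in $\tilde{G}$ from $1$ to $2$. A preliminary and essential step is to argue that $\gamma$ can be chosen with $w(e)>w_1$ for every $e\in E(\gamma)$: since $C\setminus\{e_1\}$ is already such a path, the question is whether among the shortest paths in $\tilde{G}$ at least one uses only edges heavier than $w_1$. Otherwise, any shortest path $\gamma$ would contain an edge $e^*$ with $w(e^*)\le w_1$; in the cycle $\hat{C}=\gamma\cup\{e_1\}$ the Weak Cycle condition forces $e^*$ adjacent to $e_1$, and viewing $\hat{C}$ and $C$ as two cycles sharing $e_1$ and invoking the first part of the Adjacent Cycles condition together with the Intermediary Cycle condition (Proposition~\ref{CorCycleConditionGeneralPmin}) would give a contradiction. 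Once $\gamma$ is chosen this way, $\hat{C}=\gamma\cup\{e_1\}$ has $e_1$ as its unique non-maximum-weight edge, and the Intermediary Cycle condition implies any chord is non-maximum weight and incident to $1$ or $2$ (a maximum-weight chord would contradict the minimality of $\gamma$). If $\hat{C}$ is chordless I set $\tilde{C}=\hat{C}$; otherwise I contract $\hat{C}$ along the chord closest to $e_1$, exactly as in Case~2 of Lemma~\ref{lemMINADJCYCLESCOND}, obtaining a smaller $\tilde{C}$ with at most one non-maximum-weight chord, no maximum-weight chord, and such that $\tilde{C}\setminus\{\tilde{e}_1\}$ is still a shortest path in $\tilde{G}$ between its endpoints.

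For $\tilde{C}'$: by condition~(\ref{enumPropV(C)-V(C)^{'}nonempty}) for $(C,C')$, after possibly interchanging $C$ and $C'$, one finds a vertex $i\in V(\tilde{C})\setminus\{1,2\}$ that does not lie on $C'$. Then $C'\setminus\{e_1\}$ is a path in $\tilde{G}_{N\setminus\{i\}}$ from $1$ to $2$ with edges of weight $>w_1$; let $\tilde{\gamma}'$ be a shortest such path and set $\tilde{C}'=\tilde{\gamma}'\cup\{e_1\}$. The conditions (\ref{enumPropV(C)-V(C)^{'}nonempty})--(\ref{enumPropCandC'HaveNoCommonChord}) for $(\tilde{C},\tilde{C}')$ then follow as in the proof of Lemma~\ref{lemMINADJCYCLESCOND}: $i\in V(\tilde{C})\setminus V(\tilde{C}')$ gives~(\ref{enumPropV(C)-V(C)^{'}nonempty}); a strict inclusion $V(\tilde{C}')\subsetneq V(\tilde{C})$ would produce a path shorter than $\tilde{\gamma}$ in $\tilde{G}$, contradicting its minimality; and absence of maximum-weight chords or of common chords is forced by the shortest-path construction together with the Intermediary Cycle condition. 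The main obstacle is the preliminary step above, namely the existence of a shortest path in $\tilde{G}$ realized with only edges of weight $>w_1$, which is the only place where the standing hypothesis that the first part of the Adjacent Cycles condition holds is genuinely needed.
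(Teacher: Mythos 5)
Your overall strategy is the paper's: take a shortest path $\tilde{\gamma}$ from $1$ to $2$ in $\tilde{G}$, close it with $e_{1}$ to get $\tilde{C}$, then delete a suitable vertex $i$ and repeat to get $\tilde{C}^{'}$. But there is a genuine gap at the selection of $i$. You write that condition~\ref{enumPropV(C)-V(C)^{'}nonempty} for $(C,C^{'})$ yields, after possibly swapping $C$ and $C^{'}$, a vertex $i\in V(\tilde{C})\setminus\lbrace 1,2\rbrace$ not lying on $C^{'}$. That presupposes $V(\tilde{C})\not\subseteq V(C)\cap V(C^{'})$, which is not automatic: $\tilde{\gamma}$ is a new shortest path and may run entirely through vertices common to $C$ and $C^{'}$, in which case no such $i$ exists on either side. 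Ruling this out is the heart of the paper's proof: if $V(\tilde{C})\subseteq V(C)\cap V(C^{'})$, then $V(\tilde{C})\neq V(C)$ and $V(\tilde{C})\neq V(C^{'})$ by~\ref{enumPropV(C)-V(C)^{'}nonempty}, so some edge of $\tilde{C}$ is a chord of $C$ and some edge is a chord of $C^{'}$; since neither cycle has a maximum-weight chord these are non-maximum-weight edges of $\tilde{C}$ distinct from $e_{1}$, and the Intermediary Cycle condition forces each of them to be the unique such edge adjacent to $e_{1}$, hence they coincide and $C$, $C^{'}$ share a chord, contradicting~\ref{enumPropCandC'HaveNoCommonChord}. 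Without this argument the construction of $\tilde{C}^{'}$ cannot start.

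Two secondary points. First, your preliminary step (every edge of $\tilde{\gamma}$ has weight $>w_{1}$) reaches the right conclusion by the wrong mechanism: the first part of the Adjacent Cycles condition concerns two cycles with two \emph{common} non-maximum-weight edges, and a light edge $e^{*}$ of $\tilde{\gamma}$ is not an edge of $C$, so that condition does not apply to the pair $(\hat{C},C)$. The paper obtains the claim directly from the Star condition at the two endpoints of $e_{1}$ (applied to $\lbrace e_{1},e_{2},\tilde{e}_{2}\rbrace$ and to $\lbrace e_{1},e_{m},\tilde{e}_{p}\rbrace$, forcing $w(\tilde{e}_{2})=w_{2}>w_{1}$ and $w(\tilde{e}_{p})=w_{m}>w_{1}$) together with the Weak and Intermediary Cycle conditions for the remaining edges; note also that $\hat{C}$ may still carry a second non-maximum-weight edge adjacent to $e_{1}$, so your claim that $e_{1}$ is its unique non-maximum-weight edge is too strong. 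Second, the chord-contraction branch you import from Case~2 of Lemma~\ref{lemMINADJCYCLESCOND} is vacuous here: in the present lemma $\tilde{G}$ contains every edge of $G$ except $e_{1}$, so any chord of $\hat{C}$ is itself an edge of $\tilde{G}$ and would shortcut $\tilde{\gamma}$; hence $\hat{C}$ is automatically chordless and no contraction ever occurs.
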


\begin{figure}[!h]
\centering
\begin{pspicture}(0,-.2)(0,2.4)
\tiny
\begin{psmatrix}[mnode=circle,colsep=0.6,rowsep=0.1]
				& {$p$}	&  			& {$p^{'}$} & { }\\
{} 			& 	 		& {$1$}	&   				& 		&  { }\\
{$i$} 	& 	 		& {$2$}	&    				& 		& { }\\
				& {$3$} & 			& {$3^{'}$} 			& { }
\psset{arrows=-, shortput=nab,labelsep={0.02}}
\tiny
\ncline{2,1}{1,2}^{$\tilde{e}_{p-1}$}
\ncline{1,2}{2,3}^{$\tilde{e}_{p}$}
\ncline{2,1}{3,1}
\ncline{3,1}{4,2}
\ncline{4,2}{3,3}_{$\tilde{e}_{2}$}
\ncline{2,3}{3,3}^{$\tilde{e}_{1}$}
\ncline{2,3}{1,4}^{$\tilde{e}_{p}^{'}$}
\ncline{1,4}{1,5}
\ncline{1,5}{2,6}
\ncline{2,6}{3,6}
\ncline{3,3}{4,4}_{$\tilde{e}_{2}^{'}$}
\ncline{4,4}{4,5}
\ncline{4,5}{3,6}
\end{psmatrix}
\normalsize
\uput[0](-4.8,1.1){$\tilde{C}$}
\uput[0](-2,1.1){$\tilde{C}^{'}$}
\end{pspicture}
\caption{$\tilde{C} = \lbrace 1, \tilde{e}_{1}, 2, \tilde{e}_{2}, \ldots,
p, \tilde{e}_{p}, 1 \rbrace$
and $\tilde{C}^{'} = \lbrace  1, \tilde{e}_{1}, 2, \tilde{e}_{2}^{'}, \ldots,
p^{'}, \tilde{e}_{p^{'}}^{'}, 1 \rbrace $.
}
\label{figComplexityAdjacentCyclesWithOneCommonedge}
\end{figure}

Let us observe that using the Intermediary Cycle condition
we only need to verify
that the edges in $E(C)$ adjacent to $e_{1}$ have a weight
strictly greater than $w_{1}$.
Indeed if it is satisfied
then $w(e) > w_{1}$ for all $e \in E(C) \setminus \lbrace e_{1} \rbrace$.

\begin{remark}
Let us note that by the Intermediary Cycle condition
$C, C^{'}, \tilde{C}, \tilde{C}^{'}$ can have at most two non-maximum weight edges.
If they all have two non-maximum weight edges then $e_{1}$
and all these edges are adjacent (incident to $1$ or $2$).
\end{remark}

\begin{proof}
Let us consider $C = \lbrace 1, e_{1}, 2, e_{2}, \ldots, m, e_{m}, 1 \rbrace$.
Then
$\gamma = C \setminus \lbrace e_{1} \rbrace =
\lbrace 2, e_{2}, \ldots, m, e_{m}, 1 \rbrace$
is a path linking $1$ and $2$ in $\tilde{G}$.
Hence there is a shortest path
$\tilde{\gamma} =
\lbrace 2, \tilde{e}_{2}, \tilde{3}, \tilde{e}_{3}, \ldots, p, \tilde{e}_{p}, 1 \rbrace$
linking $1$ and $2$ in $\tilde{G}$.
Then we can build the cycle
$\tilde{C} :=
\lbrace 1, e_{1}, 2, \tilde{e}_{2}, \tilde{3}, \tilde{e}_{3}, \ldots, p, \tilde{e}_{p}, 1 \rbrace$
adding the edge $e_{1}$ to the path $\tilde{\gamma}$.
Either $\tilde{C} = C$,
or $\tilde{C}$ and $C$ are adjacent.
In this last case
Lemma~\ref{lemM=maxu(e)=maxu(e)=M'} implies
$\max_{e \in E(C)} w(e) = \max_{e \in E(\tilde{C})} w(e)$.
Hence $e_{1}$ is a non-maximum weight edge in $\tilde{C}$.
Claim~\ref{LemItemItemCC^{'}twocommonnonmaxweightedges2}
implies $w_{2} > w_{1}$
(resp. $w_{m} > w_{1}$).
If $\tilde{e}_{2} = e_{2}$
(resp $\tilde{e}_{p} = e_{m}$)
then $w(\tilde{e}_{2}) = w_{2}$
(resp. $w(\tilde{e}_{p}) = w_{m}$).
Otherwise Star condition applied to $\lbrace e_{1}, e_{2}, \tilde{e}_{2} \rbrace$
(resp. $e_{1}, e_{m}, \tilde{e}_{p}$)
implies $w(\tilde{e}_{2}) = w_{2}$
(resp. $w(\tilde{e}_{p}) = w_{m}$).
Then the Intermediary Cycle condition
implies
$w_{1} < w(\tilde{e}_{2}) \leq w(\tilde{e}_{3}) = \ldots = w(\tilde{e}_{p}) = M$
with $M = \max_{e \in E(\tilde{C})} w(e)$,
exchanging $\tilde{e}_{2}$ and $\tilde{e}_{p}$ if necessary.
The Intermediary Cycle condition also implies
that for any chord $e$ in $\tilde{C}$,
we have $w(e) \leq w_{2}$
if $e$ is incident to $2$
(in fact, as $w_{1}< w_{2}$,
the Star condition applied to $\lbrace e_{1}, \tilde{e}_{2}, e \rbrace$
implies $w(e) = w_{2}$
)
and $w(e) = M$ otherwise.
In any case $e$ would contradict the optimality of $\tilde{\gamma}$.
Hence $\tilde{C}$ has no chord.

Let us assume $V(\tilde{C}) \subseteq V(C) \cap V(C^{'})$.
As $V(C) \setminus V(C^{'}) \not= \emptyset$
(resp. $V(C^{'}) \setminus V(C) \not= \emptyset$)
we have $V(\tilde{C}) \not= V(C)$
(resp. $V(\tilde{C}) \not= V(C^{'})$).
Then at least one edge $\tilde{e}$ of $\tilde{C}$ is a chord of $C$.
As $C$ has no maximum weight chord,
$\tilde{e}$ is a chord of non-maximum weight in $C$
and then $\tilde{e}$ is also an edge in $E(\tilde{C}) \setminus \lbrace e_{1} \rbrace$
of non-maximum weight
(by Lemma~\ref{lemM=maxu(e)=maxu(e)=M'}, 
as $C$ and $\tilde{C}$ are adjacent the maximum weight in $C$ and $\tilde{C}$ is $M$).
Following the Intermediary Cycle condition
$\tilde{e}$ is the unique edge of $\tilde{C}$ of non-maximum weight
adjacent to $e_{1}$
as represented in Figure~\ref{figComplexityAdjacentCyclesWithOneCommonedge2}.
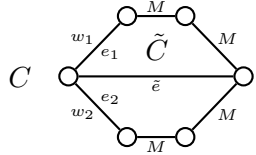
\begin{figure}[!h]
\centering
\begin{pspicture}(0,-.1)(0,2)
\tiny
\begin{psmatrix}[mnode=circle,colsep=0.5,rowsep=0.5]
	 & {}	& {}\\
{} & 	 	& 	 & {}\\
	 & {} & {} 
\psset{arrows=-, shortput=nab,labelsep={0.02}}
\tiny
\ncline{2,1}{1,2}_{$e_{1}$}^{$w_{1}$}
\ncline{2,1}{3,2}^{$e_{2}$}_{$w_{2}$}
\ncline{1,2}{1,3}^{$M$}
\ncline{1,3}{2,4}^{$M$}
\ncline{3,2}{3,3}_{$M$}
\ncline{3,3}{2,4}_{$M$}
\ncline{2,1}{2,4}_{$\tilde{e}$}
\end{psmatrix}
\normalsize
\uput[0](-3.4,.8){$C$}
\uput[0](-1.6,1.2){$\tilde{C}$}
\end{pspicture}
\caption{$\tilde{e}$ non-maximum weight chord (resp. edge) in $C$ (resp. $\tilde{C}$).
}
\label{figComplexityAdjacentCyclesWithOneCommonedge2}
\end{figure}
By the same reasoning
(interchanging $C$ and $C^{'}$),
there exists an edge $\tilde{e}^{'}$ of $\tilde{C}$
which is a chord of $C^{'}$
and $\tilde{e}^{'}$ is an edge of $\tilde{C}$ of non-maximum weight.
$\tilde{e}^{'}$ necessarily corresponds to the unique edge of $\tilde{C}$ of non-maximum weight
adjacent to $e_{1}$.
Therefore $\tilde{e} = \tilde{e}^{'}$.
Hence $C$ and $C^{'}$ have a common chord, a contradiction.

Therefore we have
either $V(\tilde{C}) \setminus V(C) \not= \emptyset$
or $V(\tilde{C}) \setminus V(C^{'}) \not= \emptyset$.
In every case we can choose $i \in V(\tilde{C}) \setminus \lbrace 1,2 \rbrace$
such that there exists a path $\hat{\gamma}$
in $\tilde{G}_{N \setminus \lbrace i \rbrace}$
linking $1$ and $2$.
$\hat{\gamma} = \gamma$ if $i \notin V(C)$
or
$\hat{\gamma} =
C^{'} \setminus \lbrace e_{1} \rbrace =
\lbrace 2, e_{2}^{'}, 3^{'}, e_{3}^{'}, \ldots, m^{'}, e_{m^{'}}^{'}, 1 \rbrace$
if $i \notin V(C^{'})$
as represented in Figure~\ref{fig2contractedcycle2}.
Therefore we can find a shortest path $\tilde{\gamma}^{'}$
in $\tilde{G}_{N \setminus \lbrace i \rbrace}$
linking $1$ and $2$
as represented in Figure~\ref{figComplexityAdjacentCyclesWithOneCommonedge}.
Adding to $\tilde{\gamma}^{'}$ the edge $e_{1}$
we get the cycle
$\tilde{C}^{'}
= \lbrace 1, e_{1}, 2, \tilde{e}_{2}^{'}, \tilde{3}^{'}, \tilde{e}_{\tilde{3}^{'}}^{'}, \ldots,
\tilde{p}^{'}, \tilde{e}_{\tilde{p}^{'}}^{'}, 1 \rbrace$.
$\tilde{C}^{'}$ has no chord
otherwise $\tilde{\gamma}^{'}$ is not a shortest path
in $\tilde{G}_{N \setminus \lbrace i \rbrace}$
(by the same reasoning as for $\tilde{C}$).
As $i \in V(\tilde{C}) \setminus V(\tilde{C}^{'})$
we have $V(\tilde{C}) \setminus V(\tilde{C}^{'}) \not= \emptyset$.
We have already seen that
$\max_{e \in E(C)} w(e) = \max_{e \in E(\tilde{C})} w(e)
= \max_{e \in E(\tilde{C}^{'})} w(e)$,
and that if $C$
contains a second non-maximum weight edge
then $\tilde{C}$ and $\tilde{C^{'}}$ also contain
a second non-maximum weight edge of same weight.
Therefore
if $V(\tilde{C}^{'}) \subset V(\tilde{C})$
then $\tilde{\gamma}^{'}$ is a path linking $1$ and $2$ in $\tilde{G}$
shorter that $\tilde{\gamma}$, a contradiction.
Therefore we have $V(\tilde{C}^{'}) \setminus V(\tilde{C}) \not= \emptyset$.

\begin{figure}[!h]
\centering
\begin{pspicture}(0,0)(0,2.4)
\tiny
\begin{psmatrix}[mnode=circle,colsep=0.4,rowsep=0.4]
	& {$3^{'}$}	&  	&  {$2$} & & {$3$}\\
{} 	& 	&	&  {$1$} & & & {$\tilde{3}$}\\
	& {} 	& 	& {$m^{'}$} & &  {$i$}
\psset{arrows=-, shortput=nab,labelsep={0.02}}
\tiny
\ncline[linestyle=dotted]{1,4}{1,6}
\ncline[linestyle=dotted]{1,6}{2,7}
\ncline{1,4}{2,7}^{$\tilde{e}_{2}$}
\ncline[linestyle=dashed,linecolor=gray]{1,4}{3,6}_{$\tilde{e}_{0}$}
\ncline{2,1}{3,2}
\ncline{3,2}{3,4}
\ncline{2,1}{1,2}
\ncline[linecolor=gray]{2,4}{3,4}
\ncline[linecolor=gray]{2,4}{1,4}^{$e_{1}$}
\ncline{1,2}{1,4}^{$e_{2}^{'}$}
\ncline[linecolor=gray]{2,7}{3,6}
\ncline{3,6}{3,4}
\normalsize
\uput[0](-2.4,1.1){\textcolor{black}{$C^{'}$}}
\uput[0](-0.3,1.1){\textcolor{black}{$\tilde{C}$}}
\end{psmatrix}
\end{pspicture}
\caption{$C^{'}$ and $\tilde{C}$, with $i \notin C^{'}$.}
\label{fig2contractedcycle2}
\end{figure}
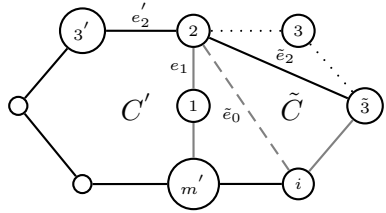

\end{proof}

We first verify a weaker condition than the second part of the Adjacent Cycles condition.\\

\noindent 
\framebox[\linewidth]{
\parbox{.95\linewidth}{
\textbf{
Weak Second Part of the Adjacent Cycles Condition.
}
If $e_{1}$ is a unique non-maximum weight edge common
to two simple cycles $(C, C^{'})$ in $G$
satisfying the conditions
\ref{enumPropV(C)-V(C)^{'}nonempty},
\ref{enumPropCatmost1non-maxweightchord},
\ref{enumPropCC^{'}nomaxweightchord},
\ref{enumPropCandC'HaveNoCommonChord}
of the Adjacent Cycles condition, 
then
there are non-maximum weight edges $e_{2} \in E(C) \setminus \lbrace e_{1} \rbrace$
and $e_{2}^{'} \in E(C^{'}) \setminus \lbrace e_{1} \rbrace$
such that $e_{1}, e_{2}, e_{2}^{'}$ are adjacent
and $w_{1} = w_{2} \geq w_{2}^{'}$
or $w_{1} = w_{2}^{'} \geq w_{2}$.
}
}

\vspace{.4cm}
Note that we cannot have $e_{2} = e_{2}^{'}$
otherwise $e_{1}$ and $e_{2}$ are two non-maximum weight edges common to $C$ and $C^{'}$
as $w_{1} = w_{2} = w_{2}^{'}$.
Therefore we have $e_{2} \in E(C) \setminus E(C^{'})$
and $e_{2}^{'} \in E(C^{'}) \setminus E(C)$.

\begin{proposition}
Let $G = (N,E,w)$ be a weighted graph satisfying the Star and Path
(and therefore Weak and Intermediary Cycle)
conditions,
and the first part of the Adjacent Cycles condition.
Then the Weak Second Part of the Adjacent Cycles condition can be verified
in $O(n^{3}m)$ time.
\end{proposition}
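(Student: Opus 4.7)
The plan is to parallel the proof of Proposition~\ref{PropFirstPartAdjacentCyclesCondPolyTime}, using Lemma~\ref{lemMinCyclesOneCommonNonMaxWeightEdge} in place of Lemma~\ref{lemMINADJCYCLESCOND}. Instead of enumerating all pairs $(C,C^{'})$ of adjacent cycles sharing a single non-maximum weight edge (of which there may be exponentially many), we enumerate only the canonical shortest-path pairs $(\tilde{C},\tilde{C}^{'})$ provided by Claim~\ref{LemItem2cyclesCC'4} of Lemma~\ref{lemMinCyclesOneCommonNonMaxWeightEdge}, and verify the Weak Second Part directly on each of them.

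Concretely, I would iterate over all $m$ edges $e_{1}=\lbrace 1,2\rbrace$ of $G$. For each such $e_{1}$, set $\tilde{G}:=(N,E\setminus\lbrace e_{1}\rbrace, w_{|E\setminus\lbrace e_{1}\rbrace})$ and, using a breadth-first search on the adjacency matrix, compute a shortest path $\tilde{\gamma}$ from $1$ to $2$ in $\tilde{G}$ in $O(n^{2})$ time. If no such path exists, $e_{1}$ lies on no cycle and we move on. Otherwise form $\tilde{C}:=\tilde{\gamma}\cup\lbrace e_{1}\rbrace$, and for each of the at most $n-2$ interior vertices $i\in V(\tilde{\gamma})\setminus\lbrace 1,2\rbrace$, perform a second BFS inside $\tilde{G}_{N\setminus\lbrace i\rbrace}$ to obtain a shortest path $\tilde{\gamma}^{'}$ in $O(n^{2})$ time, and set $\tilde{C}^{'}:=\tilde{\gamma}^{'}\cup\lbrace e_{1}\rbrace$ when this path exists. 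For each resulting pair $(\tilde{C},\tilde{C}^{'})$ we first check conditions (a)--(d) of the Adjacent Cycles condition, and then, on each side of $e_{1}$, inspect the at most two adjacent edges in $\tilde{C}$ and in $\tilde{C}^{'}$ to test whether non-maximum weight edges $e_{2},e_{2}^{'}$ satisfying the pairwise adjacency requirement and $\min(w_{2},w_{2}^{'})=w_{1}$ exist. This last check is $O(1)$ per pair once the cycles are known.

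For correctness, suppose the Weak Second Part is violated by some pair $(C,C^{'})$ sharing a unique non-maximum weight edge $e_{1}$. Then, after relabelling the two endpoints of $e_{1}$ if necessary so that the obstruction sits on a chosen side, $(C,C^{'})$ satisfies Claim~\ref{LemItem2cyclesCC'3} of Lemma~\ref{lemMinCyclesOneCommonNonMaxWeightEdge}; the lemma then yields a shortest-path pair $(\tilde{C},\tilde{C}^{'})$ sharing the same $e_{1}$ and inheriting the failure configuration. By the Intermediary Cycle condition any potential $e_{2}$ must be adjacent to $e_{1}$, and by Lemma~\ref{lemM=maxu(e)=maxu(e)=M'} the maximum weights in $C$, $C^{'}$, $\tilde{C}$, $\tilde{C}^{'}$ all coincide, so the test applied to $(\tilde{C},\tilde{C}^{'})$ records the same failure as on $(C,C^{'})$. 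The total running time is $m\cdot(O(n^{2})+n\cdot O(n^{2}))=O(n^{3}m)$, as claimed. The main obstacle I anticipate is the case analysis needed to verify that testing only the shortest-path pairs, and only on the two sides of $e_{1}$, really does capture every configuration of $(e_{2},e_{2}^{'})$ permitted by the Intermediary Cycle condition; in particular the small-cycle situations $|E(C)|=3$ or $|E(C^{'})|=3$ where the $\min$ form of the weight inequality (rather than full equality) becomes essential must be handled with care.
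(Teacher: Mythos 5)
Your overall strategy coincides with the paper's: iterate over all $m$ edges $e_{1}=\lbrace 1,2\rbrace$, use Lemma~\ref{lemMinCyclesOneCommonNonMaxWeightEdge} to replace an arbitrary offending pair $(C,C^{'})$ by the canonical pair built from a shortest path in $\tilde{G}=(N,E\setminus\lbrace e_{1}\rbrace)$ and a shortest path in $\tilde{G}_{N\setminus\lbrace i\rbrace}$ for each interior vertex $i$ of $\tilde{\gamma}$, and charge $O(n^{2})$ per BFS for a total of $O(n^{3}m)$. However, the reduction to Lemma~\ref{lemMinCyclesOneCommonNonMaxWeightEdge} is not automatic, and you assert rather than establish it: Claim~\ref{LemItem2cyclesCC'3} of that lemma requires $w(e)>w_{1}$ for \emph{all} $e\in(E(C)\cup E(C^{'}))\setminus\lbrace e_{1}\rbrace$, and one must show that a violation of the Weak Second Part forces this. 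The paper's argument is that if neither $w_{1}=w_{2}\geq w_{2}^{'}$ nor $w_{1}=w_{2}^{'}\geq w_{2}$ holds for the edges at a given endpoint of $e_{1}$, then the Star condition (or, when $e_{2}=e_{2}^{'}$, the first part of the Adjacent Cycles condition) forces $w_{1}<w_{2}=w_{2}^{'}$; repeating this at the other endpoint and invoking the Intermediary Cycle condition yields $w(e)>w_{1}$ on all remaining edges. This step is precisely what makes the lemma applicable and what guarantees that the canonical pair ``inherits the failure configuration''.

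Second, your $O(1)$ test is wrong as stated. The Weak Second Part asks for $w_{1}=w_{2}\geq w_{2}^{'}$ or $w_{1}=w_{2}^{'}\geq w_{2}$, that is, $\max(w_{2},w_{2}^{'})=w_{1}$ --- equivalently, some edge of $\tilde{\gamma}\cup\tilde{\gamma}^{'}$ adjacent to $e_{1}$ has weight exactly $w_{1}$, which is what the paper checks. Testing $\min(w_{2},w_{2}^{'})=w_{1}$ rejects the Star-admissible configuration $w_{2}^{'}<w_{1}=w_{2}$ (and its symmetric counterpart), in which the condition is in fact satisfied, so your algorithm would report spurious violations. Relatedly, your closing remark about $|E(C)|=3$ is misplaced: the distinction between the equality form and the $\geq$ form according to cycle length belongs to the full Second Part of the Adjacent Cycles condition, whereas the Weak Second Part being verified here uses the $\geq$ form uniformly.
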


\begin{proof}
Let $e_{1}$ be a common non-maximum weight edge
of two adjacent cycles $C$ and $C^{'}$
satisfying the conditions
\ref{enumPropV(C)-V(C)^{'}nonempty},
\ref{enumPropCatmost1non-maxweightchord},
\ref{enumPropCC^{'}nomaxweightchord},
\ref{enumPropCandC'HaveNoCommonChord}
of the Adjacent Cycles condition.
Let $e_{2} \in E(C) \setminus e_{1}$ and $e_{2}^{'} \in E(C^{'}) \setminus e_{1}$
be edges incident to the same end-vertex of $e_{1}$
(at this step of the proof
$e_{2}$ may be equal to $e_{2}^{'}$
and $e_{2}$ or $e_{2}^{'}$ may have maximum weight in $C$ or $C^{'}$).
Let us assume
that we have
neither $w_{1} = w_{2} \geq w_{2}^{'}$
nor $w_{1} = w_{2}^{'} \geq w_{2}$.
If $e_{2} = e_{2}^{'}$
then $w_{2} = w_{2}^{'}$
and therefore we have $w_{1} < w_{2} = w_{2}^{'}$
or $w_{1} > w_{2} = w_{2}^{'}$.
This last case is not possible as $e_{1}$, $e_{2}$
would be two non-maximum weight edges common to $C$ and $C^{'}$.
If $e_{2} \not= e_{2}^{'}$
then Star condition applied to $e_{1}, e_{2}, e_{2}^{'}$
also implies $w_{1} < w_{2} = w_{2}^{'}$.
(By the same reasoning we get $w_{1} < w_{m} = w_{m^{'}}^{'}$.)
Then the Intermediary Cycle condition implies
$w(e) > w_{1}$ for all $e \in (E(C) \cup E(C^{'})) \setminus \lbrace e_{1}  \rbrace$.
Then by Lemma~\ref{lemMinCyclesOneCommonNonMaxWeightEdge},
$e_{1}$ is also a non-maximum weight edge
common to two cycles $\tilde{C}$ and $\tilde{C}^{'}$
such that $\tilde{C} \setminus \lbrace e_{1} \rbrace$
corresponds to a shortest path $\tilde{\gamma}$
linking $1$ and $2$ in
$\tilde{G} =
(N, E \setminus \lbrace e_{1} \rbrace, w_{|E \setminus \lbrace e_{1} \rbrace})$
and for some $i \in V(\tilde{\gamma}) \setminus \lbrace 1, 2 \rbrace$,
$\tilde{C}^{'} \setminus \lbrace e_{1} \rbrace$
corresponds to a shortest path in $\tilde{G}_{N \setminus \lbrace i \rbrace}$
linking $1$ and $2$.
Moreover
we still have $w(e) > w_{1}$
for all $e \in (E(\tilde{C}) \cup E(\tilde{C}^{'})) \setminus \lbrace e_{1} \rbrace$.
Therefore for every edge $e_{1} \in E$,
we only have to verify
that such a couple of adjacent cycles $(\tilde{C}, \tilde{C}^{'})$ cannot exist.
Therefore
we look for a shortest path $\tilde{\gamma}$ linking $1$ and $2$
in $\tilde{G}$.
If it exists then we look for a shortest path $\tilde{\gamma}^{'}$
in $\tilde{G}_{N \setminus \lbrace i \rbrace}$
with $i \in V(\tilde{\gamma}) \setminus \lbrace 1, 2 \rbrace$,
linking $1$ and $2$.
If such a path $\tilde{\gamma}^{'}$ exists, 
we only have to verify that $w(e) = w_{1}$
for some edge $e$ of $\tilde{\gamma}$ or $\tilde{\gamma}^{'}$
(in fact the
Weak Cycle condition implies that such an edge is necessarily
adjacent to $e_{1}$).

Note that by the Star and Weak Cycle conditions
and Lemma~\ref{lemM=maxu(e)=maxu(e)=M'},
all elementary paths linking $1$ and $2$ in $\tilde{G}$
have either all their edges of same weight $M > w_{1}$
or one edge incident to $1$ or $2$ with weight $w_{2}> w_{1}$
and their remaining edges with weight $M > w_{2}$.
Therefore we can find a shortest path $\tilde{\gamma}$
linking $1$ and $2$ in $\tilde{G}$
using a BFS algorithm in $O(n^{2})$.
Then we can also use a BFS algorithm to find a shortest path
in
$\tilde{G}_{N \setminus \lbrace i \rbrace}$
with $i \in \tilde{\gamma}$.
In the worst case
we have to consider all nodes $i \in \tilde{\gamma}$.
Therefore we can verify in $O(n^{3})$ time
if $e_{1}$ satisfies the Weak Second Part of the Adjacent Cycles condition.
Therefore the Weak Second Part of the Adjacent cycles condition can be verified
in $O(n^{3}m)$ time.
\end{proof}

\begin{proposition}
Let $G = (N,E,w)$ be a weighted graph satisfying the Star and Path
(and therefore Intermediary Cycle)
conditions,
and the Weak Second Part of the Adjacent Cycles condition.
Then the Cycle condition is satisfied.
\end{proposition}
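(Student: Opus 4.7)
The plan is to reduce the problem to a single application of the Weak Second Part of the Adjacent Cycles condition to a carefully chosen pair of sub-cycles. Since Star and Path imply the Intermediary Cycle condition by Proposition~\ref{CorCycleConditionGeneralPmin}, the Weak Cycle condition already holds in $G$, and only the chord clauses of the Cycle condition remain to be verified. Comparing the Cycle condition with what the Intermediary Cycle condition already delivers, a simple cycle $C = \lbrace 1, e_1, 2, \ldots, m, e_m, 1 \rbrace$ of $G$ can violate the Cycle condition only in one of two situations: \textbf{Case~(A)} in which $w_1 = \cdots = w_m = \hat{M}$ and some chord $e$ of $C$ has $w(e) < \hat{M}$, or \textbf{Case~(B)} in which $w_1 = w_2 < w_3 = \cdots = w_m = \hat{M}$ and some chord $e = \lbrace 2, k \rbrace$ incident to vertex $2$ has $w(e) < w_1$. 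I would treat both cases in parallel.

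In both cases, denoting by $p, q$ the endpoints of $e$, the chord $e$ splits $C$ into two sub-cycles $C_a, C_b$ satisfying $E(C_a) \cap E(C_b) = \lbrace e \rbrace$, $V(C_a) \cap V(C_b) = \lbrace p, q \rbrace$, and $w(e') > w(e)$ for every $e' \in (E(C_a) \cup E(C_b)) \setminus \lbrace e \rbrace$. Among all pairs of simple cycles of $G$ satisfying these three properties I would pick one minimizing $|V(C_a)| + |V(C_b)|$ and argue that it satisfies the hypotheses \ref{enumPropV(C)-V(C)^{'}nonempty}-\ref{enumPropCandC'HaveNoCommonChord} of the Adjacent Cycles condition with $e$ as the unique common non-maximum weight edge. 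Conditions \ref{enumPropV(C)-V(C)^{'}nonempty} and \ref{enumPropCandC'HaveNoCommonChord} are immediate from $|V(C_a) \cap V(C_b)| = 2$ and the fact that the only edge of $G$ joining $p$ and $q$ is $e$, which belongs to both cycles and is thus no chord. For \ref{enumPropCatmost1non-maxweightchord} and \ref{enumPropCC^{'}nomaxweightchord}, I would show that $C_a$ (and symmetrically $C_b$) is chordless: the Intermediary Cycle condition applied to $C_a$ forces any chord of $C_a$ to have either weight $\hat{M}$ and be non-incident to the vertex where the non-maximum weight edges of $C_a$ meet, or weight equal to the second-smallest edge weight of $C_a$ and be incident to that vertex. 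In either configuration, splitting $C_a$ along such a chord produces a strictly smaller sub-cycle still containing $e$ (and the second non-maximum weight edge of $C_a$, if any), contradicting the minimality of the pair.

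Once the minimal pair is known to satisfy \ref{enumPropV(C)-V(C)^{'}nonempty}-\ref{enumPropCandC'HaveNoCommonChord}, the Weak Second Part of the Adjacent Cycles condition supplies non-maximum weight edges $e_\alpha \in E(C_a) \setminus \lbrace e \rbrace$ and $e_\beta \in E(C_b) \setminus \lbrace e \rbrace$ adjacent to $e$ with $w(e) \in \lbrace w(e_\alpha), w(e_\beta) \rbrace$. In Case~(A) every non-$e$ edge of $C_a$ has weight $\hat{M}$, so no such $e_\alpha$ exists --- a contradiction. In Case~(B) the Star condition applied at the endpoint $k$ of $e$ (to $e$ and two further incident edges) forces every edge of $G$ at $k$ distinct from $e$ to have weight $\hat{M}$, ruling out $e_\alpha$ being incident to $k$; hence $e_\alpha$ is incident to $2$, and the symmetric Star argument at $2$ applied to $\lbrace e, e_1, e_\alpha \rbrace$ yields $w(e_\alpha) = w_1$. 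The same reasoning applies to $e_\beta$, producing $w(e_\alpha) = w(e_\beta) = w_1 > w(e)$, which contradicts $w(e) \in \lbrace w(e_\alpha), w(e_\beta) \rbrace$. The main obstacle is establishing the chordlessness of the minimal pair for \ref{enumPropCatmost1non-maxweightchord} and \ref{enumPropCC^{'}nomaxweightchord}; this rests on a careful case analysis of the Intermediary Cycle condition together with the splitting/replacement argument sketched above, after which the Star-based computation at the two endpoints of $e$ is essentially mechanical.
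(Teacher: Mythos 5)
Your proposal is correct and takes essentially the same route as the paper's proof: both split the violating cycle along the low-weight chord $e$ into two adjacent cycles whose unique common (and strictly minimum-weight) edge is $e$, reduce to chordless such cycles by passing to smaller ones (your explicit minimality argument formalizes the paper's ``replace by a smaller cycle'' step, using the Intermediary Cycle condition to guarantee any chord has weight exceeding $w(e)$), and then contradict the Weak Second Part of the Adjacent Cycles condition. The only cosmetic differences are that the paper folds your Case~(A) into Case~(B) by renumbering, and reaches the final contradiction directly from $w(e') > w(e)$ rather than via your extra Star-condition computations at the endpoints of $e$.
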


\begin{proof}
Let us consider a cycle
$C = \lbrace 1, e_{1}, 2, e_{2}, \ldots, m, e_{m},1 \rbrace$.
The Intermediary Cycle condition implies that
$w_{1} \leq w_{2} \leq w_{3} = \cdots = w_{m} = M$
after renumbering if necessary
and that
$w(e) \leq w_{2}$ for all chord incident to $2$
and $w(e) \leq M$ for all other chords.
Moreover:
\begin{itemize}
\item
If $w_{1} \leq w_{2} < M$
then $w(e) = M$ for all $e \in E(C)$  non-incident to~$2$.
If $e$ is a chord incident to $2$
then $w_{1} \leq w_{2} = w(e) < M$ or $w(e) < w_{1} = w_{2} < M$.
\item
If $w_{1} < w_{2} = M$,
then $w(e) = M$ for all $e \in E(C) \setminus \lbrace e_{1} \rbrace$.
\end{itemize}
Therefore if $w_{1} < w_{2} = M$
the Cycle condition is satisfied.
If $w_{1} \leq w_{2} < M$
let us assume by contradiction that
 there exists a chord incident to $2$
with $w(e) < w_{2}$.
Star condition applied to $\lbrace e, e_{1}, e_{2}\rbrace$
implies $w_{1} = w_{2}$
and that for any other chord $e^{'}$ incident to $2$
we have  $w(e^{'}) = w_{1} = w_{2}$.
Hence we can assume that $C$ has only one chord $e$ incident to $2$
replacing if necessary $C$
by a smaller cycle
using the chords of $C$ incident to $2$
which are as close as possible to $e$.
We have $e = \lbrace 2, j \rbrace$ with $4 \leq j \leq m$.
We can consider two adjacent cycles
$\tilde{C} := \lbrace 2, e, j, e_{j}, \ldots, m, e_{m}, 1, e_{1}, 2 \rbrace$
and
$\tilde{C}^{'} := \lbrace 2, e_{2}, 3, e_{3}, \ldots, e_{j-1}, j, e, 2 \rbrace$
as represented
in Figure~\ref{figStarPathFirstPartAndWeakSecondPartOfAdjacentCyclesImplyCycleCondition}.

\begin{figure}[!h]
\centering
\begin{pspicture}(0,0)(0,2.6)
\tiny
\begin{psmatrix}[mnode=circle,colsep=0.7,rowsep=0.2]
				& {$j$}\\
{$m$} 			& 	 		& {}\\
{$1$} 	& 	 		& {$3$}\\
				& {$2$}
\psset{arrows=-, shortput=nab,labelsep={0.02}}
\tiny
\ncline{2,1}{1,2}^{$M$}
\ncline{1,2}{2,3}^{$M$}
\ncline{2,1}{3,1}_{$M$}
\ncline{3,1}{4,2}^{$e_{1}$}_{$w_{1}$}
\ncline{4,2}{3,3}^{$e_{2}$}_{$w_{2}$}
\ncline{2,3}{3,3}^{$M$}
\ncline{1,2}{4,2}^{$e$}
\end{psmatrix}
\normalsize
\uput[0](-2.4,1.3){$\tilde{C}$}
\uput[0](-1.2,1.3){$\tilde{C}^{'}$}
\end{pspicture}
\caption{$w(e) < w_{1} = w_{2}$.}
\label{figStarPathFirstPartAndWeakSecondPartOfAdjacentCyclesImplyCycleCondition}
\end{figure}

\noindent
$e$ is a unique common edge to $\tilde{C}$ and $\tilde{C}^{'}$
such that $w(e) < w_{1} = w_{2} \leq M$.
If $\tilde{C}$
(resp. $\tilde{C}^{'}$) has a chord
then we can replace $\tilde{C}$
(resp. $\tilde{C}^{'}$)
by a smaller cycle.
Therefore we can suppose that $\tilde{C}$ and $\tilde{C}^{'}$ have no chords.
Then
$w(e) < w_{1} = w_{2}$
contradicts the Weak Second Part of the Adjacent Cycles condition.
Therefore any edge incident to $2$ has weight $w_{2}$.

Finally if $w_{1} = w_{2} = M$,
let us also assume by contradiction that
there exists a chord $e$ in $C$ with $w(e) < w_{2} = M$.
As $w_{1} = w_{2} = M$ we can always assume $e$ incident to $2$ after renumbering if necessary
and then we can establish the same contradiction as before. 
\end{proof}

Assuming now that the Star, Path, and Cycle conditions are satisfied
we give a new formulation of the Pan condition.
Then we will analyze its complexity.

\begin{lemma}
\label{lemPanConditionWithAssumptionStarPathCycleConditionsSatisfied}
Let $G = (N, E, w)$ be a connected weighted graph
satisfying the Star, Path, and Cycle conditions.
The Pan condition is satisfied if and only if
for all simple cycle
$C = \lbrace 1, e_{1}, 2, e_{2}, \ldots, m, e_{m}, 1 \rbrace$,
with
$w_{1} = w_{2} \leq w_{3} = \cdots = w_{m} = \hat{M} = \max_{e \in \hat{E}(C)} w(e)$,
one of the following claim is satisfied:
\begin{enumerate}[1)]
\item
\label{lemNewFormulationOfThePanConditionw1=w2=M}
$w_{1} = w_{2} = M$.
\item
\label{lemNewFormulationOfThePanConditionw1=w2<MAnd13IsAChordOfC}
$w_{1} = w_{2} < M$ and $\lbrace 1, 3 \rbrace$ is a chord of $C$.
\item 
\label{lemNewFormulationOfThePanConditionSigma(E)=w1=w2<MWhereSigma(E)=Min_Ew(e)}
$\sigma(E) = w_{1} = w_{2} < M$, where $\sigma(E) = \min_{e \in E} w(e)$.
\end{enumerate}
\end{lemma}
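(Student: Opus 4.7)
The key observation is that under the Cycle condition every chord of $C$ has its weight completely determined: any chord incident to vertex $2$ has weight $w_{2}$, and any other chord has weight $\hat{M}$. In particular any chord $\{1,3\}$ of $C$ is automatically a maximum weight chord, so the only nontrivial piece of information about $\{1,3\}$ is whether it is present as a chord at all. The three cases of the lemma are designed to capture exactly the possibilities permitted by the second part of the Pan condition.

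For the forward direction, assume the Pan condition and let $C$ be a cycle of the stated form. If $w_{1} = w_{2} = \hat{M}$ we are in case~1, and if $\{1,3\}$ is a chord of $C$ we are in case~2. Otherwise $w_{1} = w_{2} < \hat{M}$ with $\{1,3\}\notin E$, and the goal is to deduce $\sigma(E) = w_{1}$. Suppose for contradiction $\sigma(E) < w_{1}$ and pick an edge $e^{*}$ with $w(e^{*}) = \sigma(E)$. The Cycle condition forces every edge of $\hat{E}(C)$ to have weight at least $w_{1}$, so $e^{*}\notin\hat{E}(C)$ and at least one endpoint of $e^{*}$ lies outside $V(C)$. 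Using connectedness of $G$, I would extract an elementary path $P$ containing $e^{*}$ with $|V(P)\cap V(C)| = 1$ by truncating a walk from an exterior endpoint of $e^{*}$ at the first moment it reaches $V(C)$. The pan $C\cup P$ then satisfies the hypotheses of the second part of the Pan condition with $w(e^{*}) < w_{1}$, which forces $\{1,3\}$ to be a chord of $C$, contradicting our assumption. Hence $\sigma(E)\geq w_{1}$, so $\sigma(E)=w_{1}$ and case~3 holds.

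For the backward direction, Claim~\ref{ItemPropPanConditionIfGSatisfiesStarAndPath} of Proposition~\ref{PropPanCondition} already supplies the Weak Pan condition from the Star and Path hypotheses, so only the second part of the Pan condition needs to be checked. Take a pan $C\cup P$ for which claim~(b) of Weak Pan holds, \emph{i.e.} $w_{1} = w_{2} < w_{3} = \cdots = w_{m} = \hat{M}$, with a path edge $e$ satisfying $w(e) < w_{1}$. Case~1 of the lemma is excluded because $w_{1} < \hat{M}$, and case~3 is excluded because $\sigma(E) = w_{1}$ would contradict $w(e) < w_{1}$. Hence case~2 must apply, so $\{1,3\}$ is a chord of $C$; since it is non-incident to~$2$, the Cycle condition gives $w(\{1,3\}) = \hat{M}$, making it a maximum weight chord, as required.

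The only genuinely delicate step is the forward-direction construction of the pan $C\cup P$: starting from a minimum weight edge $e^{*}$ outside $\hat{E}(C)$, one must produce an elementary path meeting $V(C)$ at exactly one vertex while still containing $e^{*}$. A short case analysis according to whether zero or one endpoint of $e^{*}$ lies in $V(C)$ handles this uniformly by connectedness of $G$, after which the rest of the argument is straightforward bookkeeping of the three cases against the two parts of the Pan condition.
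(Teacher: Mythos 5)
Your proposal is correct and follows essentially the same route as the paper's proof: the forward direction builds a pan $C\cup P$ around a minimum-weight edge outside $\hat{E}(C)$ (using the Cycle condition to rule out $e^{*}\in\hat{E}(C)$ and connectedness to extract the path) and invokes the second part of the Pan condition to force the chord $\lbrace 1,3\rbrace$, while the backward direction reduces to case~(b) of the Weak Pan condition and eliminates cases 1 and 3 exactly as the paper does. The only difference is cosmetic: you isolate the second part of the Pan condition up front via Proposition~\ref{PropPanCondition}, whereas the paper walks through all three cases explicitly.
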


\begin{proof}
Let us first assume that the Pan condition is satisfied.
Let us consider a simple cycle
$C = \lbrace 1, e_{1}, 2, e_{2}, \ldots, m, e_{m}, 1 \rbrace$,
with
$w_{1} = w_{2} \leq w_{3} = \cdots = w_{m} = \hat{M}$.
If $w_{1} = w_{2} = \hat{M}$
then Claim~\ref{lemNewFormulationOfThePanConditionw1=w2=M}
is satisfied.
Otherwise we have $w_{1} = w_{2} < \hat{M}$.
If $\lbrace 1, 3 \rbrace$ is a chord of $C$
then Claim~\ref{lemNewFormulationOfThePanConditionw1=w2<MAnd13IsAChordOfC} is satisfied.
Otherwise 
$\lbrace 1, 3 \rbrace$ is not a chord of $C$
and let us assume
$\sigma(E) < w_{1} = w_{2}$.
Then there exists $e \in E \setminus E(C)$
such that $w(e) < w_{1} = w_{2}$.
Note that the Cycle condition implies that a chord of $C$
incident (resp. non incident) to $2$ has weight $w_{2}$ (resp. $\hat{M}$).
Therefore we actually have $e \in E \setminus \hat{E}(C)$.
As $G$ is connected
there is a path $P$ containing $e$
($P$ can be restricted to $e$)
such that $|V(C) \cap V(P)| = 1$.
Then as $w(e) < w_{1} = w_{2} < M$,
Pan condition implies
that $V(C) \cap V(P) = \lbrace 2 \rbrace$
and that $\lbrace 1, 3 \rbrace$ is a maximum weight chord of $C$,
a contradiction.
Therefore
Claim~\ref{lemNewFormulationOfThePanConditionSigma(E)=w1=w2<MWhereSigma(E)=Min_Ew(e)}
is satisfied.

Let us now consider a simple cycle
$C = \lbrace 1, e_{1}, 2, e_{2}, \ldots, m, e_{m}, 1 \rbrace$,
and an elementary path $P$ such that there is an edge $e \in P$
with $w(e) \leq \min_{1 \leq k \leq m} w_{k}$
and $|V(C) \cap V(P)| = 1$.
The Cycle condition
implies $w_{1} \leq w_{2} \leq w_{3} = \cdots = w_{m} = \hat{M}$
after renumbering if necessary.
Moreover the Path and Star conditions imply
$w_{1} = w_{2} \leq \hat{M}$
and if $w_{1}= w_{2} < \hat{M}$
then $V(C) \cap V(P) = \lbrace 2 \rbrace$
(as the Weak Pan condition is satisfied by Proposition~\ref{PropPanCondition}).
If $C$ satisfies Claim~\ref{lemNewFormulationOfThePanConditionw1=w2=M}
then the Pan condition is trivially satisfied.
If $C$ satisfies Claim~\ref{lemNewFormulationOfThePanConditionw1=w2<MAnd13IsAChordOfC}
then
$\lbrace 1, 3 \rbrace$ is a chord of $C$
and as $w_{1} = w_{2} < \hat{M}$
we have $V(C) \cap V(P) = \lbrace 2 \rbrace$.
Moreover the (Intermediary) Cycle condition implies
that $\lbrace 1, 3 \rbrace$ is a maximum weight chord.
Therefore the Pan condition is satisfied.
Finally if $C$ satisfies
Claim~\ref{lemNewFormulationOfThePanConditionSigma(E)=w1=w2<MWhereSigma(E)=Min_Ew(e)},
we also have $V(C) \cap V(P) = \lbrace 2 \rbrace$
and as $\sigma(E) = w_{1}$,
we necessarily have $w(e) = w_{1}$.
Therefore the Pan condition is still satisfied.
\end{proof}

\begin{proposition}
Let $G= (N,E,w)$ be a connected weighted graph
satisfying the Star, Path conditions
and the Cycle condition
(or the Weak Second Part of the Adjacent Cycles condition).
Then Pan condition can be verified in $O(n^{5})$ time.
\end{proposition}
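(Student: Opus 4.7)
The plan is to invoke Lemma~\ref{lemPanConditionWithAssumptionStarPathCycleConditionsSatisfied}, which, under the assumed Star, Path, and Cycle (or Weak Second Part of Adjacent Cycles) conditions, reformulates the Pan condition as follows: for every simple cycle $C$ with $w_1 = w_2 \leq w_3 = \cdots = w_m = \hat{M}$ (after renumbering), at least one of the three claims of the lemma must hold, the third being the global condition $\sigma(E) = w_1$ with $\sigma(E) = \min_{e \in E} w(e)$. Accordingly, we first compute $\sigma(E)$ in $O(n^2)$ time by scanning the adjacency matrix. Then, for each vertex $y \in N$, enumerate every pair of incident edges $e_1 = \{x, y\}$ and $e_2 = \{y, z\}$ with $w(e_1) = w(e_2) = w_0$; the total number of such pairs is bounded by $\sum_y \binom{\deg(y)}{2} = O(n^3)$. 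If $w_0 = \sigma(E)$ then claim~(3) of the lemma is automatically satisfied for every cycle through $(e_1, e_2)$, so no further check is needed for this pair.

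For each pair with $w_0 > \sigma(E)$, it suffices to verify that no simple cycle $C$ through $e_1, e_2$ has $\hat{M}_C > w_0$ while $\{x, z\}$ fails to be a chord of $C$. The key structural observation, derived from the Cycle condition, is that such a violating cycle exists in exactly two configurations. The first is the \emph{triangle configuration}: $\{x,z\} \in E$ with $w(\{x,z\}) > w_0$, in which case the triangle $\{x,y,z\}$ is itself a simple cycle with $\hat{M} = w(\{x,z\})$ and $\{x,z\}$ is an edge rather than a chord of this triangle. The second is the \emph{long cycle configuration}: $\{x,z\} \notin E$ together with the existence of a path from $x$ to $z$ in the subgraph $G_{N\setminus\{y\}}$ using only edges of weight strictly greater than $w_0$; concatenating such a path with $e_1, e_2$ yields a simple cycle in which $\{x,z\}$ is not a chord (not even an edge of $G$). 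The triangle configuration is tested in $O(1)$ via an adjacency matrix lookup, while the long cycle configuration is tested by a BFS on the restricted subgraph in $O(n^2)$ per pair.

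The remaining configuration $\{x, z\} \in E$ with $w(\{x,z\}) \leq w_0$ produces no violation. When $w(\{x,z\}) < w_0$, the Cycle condition forces any chord $\{x,z\}$ in a longer cycle through $e_1, e_2$ to carry weight $\hat{M}_C \geq w_0 > w(\{x,z\})$, a contradiction ruling out such a cycle, while the triangle has $w_1 < w_2 = w_3$ after renumbering and thus falls outside the hypothesis $w_1 = w_2$ of the lemma. When $w(\{x,z\}) = w_0$, the same chord argument yields $\hat{M}_C = w_0$, so claim~(1) is satisfied by every cycle through $e_1, e_2$. Hence the two checks described above detect every Pan violation, and the total running time is $O(n^3) \cdot O(n^2) = O(n^5)$, as desired. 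The main obstacle is precisely this case analysis: carefully using the Cycle condition to constrain the weight of the potential chord $\{x,z\}$ so that the search for a violating cycle reduces to the two specific configurations (a triangle lookup and a single BFS per pair), which are themselves checkable in polynomial time.
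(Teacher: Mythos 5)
Your proposal is correct and follows essentially the same strategy as the paper: invoke Lemma~\ref{lemPanConditionWithAssumptionStarPathCycleConditionsSatisfied}, precompute $\sigma(E)$, enumerate the $O(n^{3})$ pairs of adjacent equal-weight edges, and for each pair run one BFS per pair on a weight-restricted subgraph avoiding the common vertex, giving $O(n^{5})$ overall. The only (harmless) difference is in implementation: the paper restricts to edges of weight $\geq w_{1}$ and reads off the chord/triangle case from the length of the resulting shortest path, whereas you restrict to edges of weight $>w_{0}$ and treat the potential chord $\{x,z\}$ by a separate adjacency-matrix lookup and a short case analysis using the Cycle condition.
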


\begin{proof}
Let $C = \lbrace 1, e_{1}, 2, e_{2}, \ldots, m, e_{m}, 1 \rbrace$
be a simple cycle of $G$.
The Cycle condition implies
$w_{1} \leq w_{2} \leq w_{3} = \cdots = w_{m} = M = \max_{e \in E(C)} w(e)$,
after renumbering if necessary.
Let $\tilde{\gamma}$ be a shortest path linking $1$ and $3$
in $G_{N \setminus \lbrace 2 \rbrace}$
and let $\tilde{C}$ be the simple cycle formed by $e_{1}$, $e_{2}$
and $\tilde{\gamma}$.
If $\tilde{C} \not= C$
then Lemma~\ref{lemM=maxu(e)=maxu(e)=M'}
implies that $\max_{e \in E(\tilde{C})} = M$.
Lemma~\ref{lemPanConditionWithAssumptionStarPathCycleConditionsSatisfied}
implies that Pan condition for $C$
is equivalent to Pan condition for $\tilde{C}$.
Then if $w_{1} = w_{2} < M$
either the chord $\lbrace 1, 3 \rbrace$ exists
(\emph{i.e.} $\tilde{\gamma} = \lbrace 1, \lbrace 1, 3 \rbrace, 3 \rbrace$
and $\tilde{C} = \lbrace 1, e_{1}, 2, e_{2}, 3, \lbrace 1, 3 \rbrace, 1 \rbrace$)
otherwise $w_{1} = w_{2} = \sigma(E)$.
Hence to verify the Pan condition we just need to consider
any pair of adjacent edges $e_{1} = \lbrace 1, 2 \rbrace$
and $e_{2} = \lbrace 2,3 \rbrace$
with $w_{1} = w_{2}$.
Let us consider the graph $\tilde{G} = (N,  \tilde{E}, w_{|\tilde{E}})$
with $\tilde{E} = \lbrace e \in E; \; w(e) \geq w_{1} \rbrace$.
Then we look for a shortest path $\gamma$ linking $1$ to $3$
in the graph $\tilde{G}_{N \setminus \lbrace 2 \rbrace}$
using a BFS algorithm in $O(n^{2})$ time.
If $\gamma$ exists
and if $\gamma$ has at least two
edges,
we consider the simple cycle $C = \lbrace 1, e_{1}, 2, e_{2}, 3 \rbrace \cup \gamma$.
Note that
as $\gamma$ is a shortest path linking $1$ and $3$,
$\lbrace 1, 3 \rbrace$ cannot be a chord of $\tilde{C}$.
Then applying
Lemma~\ref{lemPanConditionWithAssumptionStarPathCycleConditionsSatisfied} to $\tilde{C}$
we only need  to verify that
either $w_{1} = w_{2} = w_{3}$
or $w_{1} = w_{2} = \sigma(E)$.
$\sigma(E)$ can be computed in $O(m)$ time.
We have at most
$\sum_{i \in V} C_{n}^{2} = n^{2} \left(\frac{n-1}{2}\right)$
pairs $e_{1}, e_{2}$ to take into account.
Therefore the Pan condition can be verified in $O(n^{5})$.
\end{proof}

\begin{proposition}
Let $G = (N,E,w)$ be a weighted graph satisfying
the Weak Second Part of the Adjacent Cycles
and the Pan conditions.
Then the Second part of the Adjacent Cycles condition is satisfied.
\end{proposition}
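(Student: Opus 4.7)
The plan is to show that the only discrepancy between the Weak Second Part and the full Second Part of the Adjacent Cycles condition concerns the case $|E(C)|, |E(C')| \geq 4$, where ``$w_1 = w_2 \geq w_2'$'' must be upgraded to ``$w_1 = w_2 = w_2'$''; in the triangle case the two conditions coincide and nothing is to prove. So let $C, C'$ be cycles satisfying conditions \ref{enumPropV(C)-V(C)^{'}nonempty}--\ref{enumPropCandC'HaveNoCommonChord} with a unique common non-maximum weight edge $e_1 = \{1, 2\}$ and with $|E(C)|, |E(C')| \geq 4$. The Weak Second Part of the Adjacent Cycles condition supplies non-maximum weight edges $e_2 \in E(C)$ and $e_2' \in E(C')$ such that $e_1, e_2, e_2'$ share a common vertex, which I may take to be vertex $2$, so $e_2 = \{2, 3\}$ and $e_2' = \{2, 3'\}$ with $3 \neq 3'$. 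Without loss of generality I assume $w_1 = w_2 \geq w_2'$, and suppose for contradiction that $w_2' < w_1 = w_2$.

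The first step is to show that $3' \notin V(C)$. By the preceding proposition, the Star, Path and Weak Second Part conditions together imply the Cycle condition; applied to $C$, this forces every chord of $C$ incident to the vertex where its two non-maximum weight edges meet (namely vertex $2$) to have weight exactly $w_2$. Since $w(e_2') = w_2' < w_2$ and $e_2' \notin E(C)$, the edge $e_2'$ cannot be a chord of $C$, and together with $3' \notin \{1, 2, 3\}$ this forces $3' \notin V(C)$. Consequently the single-edge path $P = (2, e_2', 3')$ meets $C$ only at vertex $2$ and satisfies $w(e_2') < w_1 \leq w(e)$ for every $e \in E(C)$, so the pair $(C, P)$ is a pan to which the Pan condition applies.

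Weak Pan case~(\ref{eqPanConditionEither}) would force every edge of $C$, including $e_1$, to have weight $\hat{M}$, contradicting that $e_1$ is of non-maximum weight in $C$. Hence case~(\ref{eqPanConditionOr}) applies, yielding $w_1 = w_2 < \hat{M}$. Since moreover $w(e_2') < w_1$, the second clause of the Pan condition forces $\{1, 3\}$ to be a maximum weight chord of $C$, which directly contradicts condition~\ref{enumPropCC^{'}nomaxweightchord} of the Adjacent Cycles hypothesis. The contradiction gives $w_2' = w_1 = w_2$, which is the required upgrade and completes the Second Part of the Adjacent Cycles condition.

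The main subtlety is the step excluding that $e_2'$ is a chord of $C$: the Intermediary Cycle condition alone would permit a chord incident to vertex~$2$ of weight strictly less than $w_1 = w_2$, so it is essential that we appeal to the stronger Cycle condition provided by the preceding proposition. Once this exclusion is in place, the argument reduces to a single clean application of the Pan condition to the pan obtained by attaching $e_2'$ to $C$ at vertex~$2$.
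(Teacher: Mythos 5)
Your proof is correct and follows essentially the same route as the paper's: both apply the Pan condition to the pan formed by $C$ together with the one-edge path $e_{2}^{'}$ attached at the common vertex $2$, and derive a contradiction with the hypothesis that $C$ has no maximum weight chord. You are in fact slightly more careful than the paper, which silently assumes $|V(C)\cap V(P)|=1$ when invoking the Pan condition; your appeal to the Cycle condition (via the preceding proposition, legitimate since the Star and Path conditions are ambient hypotheses throughout this section) to rule out $3^{'}\in V(C)$ fills that small gap.
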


\begin{proof}
Let $e_{1}$ be a unique non-maximum weight edge common
to two simple cycles $(C, C^{'})$ in $G$
satisfying the conditions
\ref{enumPropV(C)-V(C)^{'}nonempty},
\ref{enumPropCatmost1non-maxweightchord},
\ref{enumPropCC^{'}nomaxweightchord},
\ref{enumPropCandC'HaveNoCommonChord}
of the Adjacent Cycles condition.
The Weak Second Part of the Adjacent Cycles condition implies
that there are non-maximum weight edges $e_{2} \in E(C) \setminus \lbrace e_{1} \rbrace$
and $e_{2}^{'} \in E(C^{'}) \setminus \lbrace e_{1} \rbrace$
such that $e_{1}, e_{2}, e_{2}^{'}$ are adjacent
and $w_{1} = w_{2} \geq w_{2}^{'}$
or $w_{1} = w_{2}^{'} \geq w_{2}$.
If $|E(C)| = 3$ or $|E(C^{'})| = 3$
then the second part of the Adjacent Cycles condition is satisfied.
If $|E(C)| \geq 4$ and $|E(C^{'})| \geq 4$,
let us assume $w_{1} = w_{2} > w_{2}^{'}$ (resp. $w_{1} = w_{2}^{'} > w_{2}$).
By assumption $C$ (resp. $C^{'}$) have no maximum weight chord
therefore the Pan condition applied to $C$ and $e_{2}^{'}$
(resp. $C^{'}$ and $e_{2}$)
implies $w_{1} = w_{2} = \max_{e \in E(C)} w(e)$
(resp. $w_{1} = w_{2}^{'} = \max_{e \in E(C^{'})} w(e)$),
a contradiction.
Therefore $w_{1} = w_{2} = w_{2}^{'}$
and the second part of the Adjacent Cycles condition is satisfied.
\end{proof}

\begin{theorem}
Let $G= (N,E,w)$ be an arbitrary weighted graph.
Let us consider the correspondance $\mathcal{P}_{\min}$.
We can verify inheritance of $\mathcal{F}$-convexity
from $(N,v)$ to $(N, \overline{v})$
for all $\mathcal{F}$-convex and superadditive game $(N,v)$
in $O(n^{6})$ time.
\end{theorem}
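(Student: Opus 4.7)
The plan is to assemble the polynomial-time verification procedures developed throughout Section~\ref{SectionComplexityOfInheritance} into a single algorithm and apply Theorem~\ref{thPathBranchCyclePanAdjacentCond}. By that theorem, checking inheritance of $\mathcal{F}$-convexity reduces to verifying the Star, Path, Cycle, Pan, and Adjacent cycles conditions. The strategy is to run the sub-procedures in a carefully chosen order so that each one can legitimately rely on the hypotheses of the previous propositions, and then to observe that the total running time is governed by the most expensive step.

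First, I would verify the Star condition in $O(n^2)$ time. If it fails, we stop and return ``no inheritance''. Otherwise, I would compute a minimum weight spanning tree $T$ using Prim's algorithm in $O(n^2)$, and then verify the Path condition on $G$ by checking it on $T$ together with the Weak Cycle condition on every fundamental cycle of $T$, which costs $O(n^3)$ total by the proposition on Path verification. At this point the hypotheses of the subsequent propositions are met: $G$ satisfies Star and Path, and hence the Weak and Intermediary Cycle conditions.

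Next, I would verify the first part of the Adjacent cycles condition in $O(n^6)$ time using Proposition~\ref{PropFirstPartAdjacentCyclesCondPolyTime}: for each of the $O(n^3)$ pairs of adjacent edges $(e_1,e_2)$, invoke the $O(n^3)$ procedure from Lemma~\ref{lemMINADJCYCLESCOND} built on shortest paths in the subgraph $\tilde{G}$. Then verify the Weak Second Part of the Adjacent cycles condition in $O(n^3m)=O(n^5)$ time via the shortest-path criterion of Lemma~\ref{lemMinCyclesOneCommonNonMaxWeightEdge}. By the proposition stating that Star, Path and the Weak Second Part together imply the Cycle condition, the Cycle condition now holds automatically, with no additional work.

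Finally, with Star, Path and Cycle established, I would verify the Pan condition in $O(n^5)$ using the reformulation of Lemma~\ref{lemPanConditionWithAssumptionStarPathCycleConditionsSatisfied}: for each of the $O(n^3)$ adjacent edge pairs $e_1=\{1,2\}$, $e_2=\{2,3\}$ with $w_1=w_2$, run a BFS for a shortest $1$--$3$ path in $\tilde{G}_{N\setminus\{2\}}$ restricted to edges of weight $\geq w_1$, and check that either the resulting cycle has $w_1=w_2=w_3$ or $w_1=w_2=\sigma(E)$, where $\sigma(E)$ is precomputed in $O(m)$. The final proposition then ensures that Weak Second Part together with Pan implies the full Second Part of the Adjacent cycles condition, so all five conditions of Theorem~\ref{thPathBranchCyclePanAdjacentCond} are verified. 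Summing the costs $O(n^2)+O(n^3)+O(n^6)+O(n^5)+O(n^5)$ gives an overall complexity of $O(n^6)$. There is no real obstacle remaining, since every step has already been analysed in isolation; the only subtlety is to ensure the order of verification so that each proposition's hypotheses are in force when it is invoked, which is why Star and Path must be checked first, and the Weak Second Part must be checked before the Pan condition to license the use of the Cycle condition.
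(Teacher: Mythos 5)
Your proposal is correct and follows exactly the route the paper intends: the theorem is the assembly of the preceding propositions (Star in $O(n^2)$, Path via a minimum spanning tree in $O(n^3)$, first part of the Adjacent cycles condition in $O(n^6)$, Weak Second Part in $O(n^3m)=O(n^5)$, Cycle condition for free, Pan in $O(n^5)$, and the final implication recovering the full Adjacent cycles condition), invoked in an order that keeps each proposition's hypotheses in force. The total is dominated by the $O(n^6)$ step, as you state.
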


\section{Conclusion}
We think that the method we have presented
to prove that inheritance of $\mathcal{F}$-convexity
is a polynomial problem,
especially,
the use of minimum spanning trees and shortest paths
to select specific paths and cycles,
can be useful for further developments.
In particular it could be used
to obtain polynomial algorithms
to check inheritance of convexity for $\mathcal{P}_{\min}$
or of convexity or $\mathcal{F}$-convexity
for other correspondences
as for example the correspondence $\mathcal{P}_{G}$ presented in \citep{GrabischSkoda2012}
which associates to a subset its partition into relatively strongest components.
Of course it is not certain that such algorithms do exist.

\section*{Acknowledgments}
The support of the French National Research Agency,
through DynaMITE project (Reference: ANR-13-BSH1-0010-01),
is gratefully acknowledged.

\bibliographystyle{apalike}
\bibliography{biblio}


\end{document}